\theoremstyle{plain}
\newtheorem{theorem}{Theorem}[section]
\newtheorem{lemma}[theorem]{Lemma}
\newtheorem{corollary}[theorem]{Corollary}
\theoremstyle{definition}
\theoremstyle{remark}
\def\>{\rangle}
\def\<{\langle}
\def\poly{{\rm poly}}
\def\O{\mathcal{O}}
\def\R{\mathbb{R}}
\def\Tr{\text{Tr}}
\def\E{\mathbb{E}}
\def\CZ{\text{CZ}}
\def\0{\bm{0}}
\def\1{\bm{1}}
\def\2{\bm{2}}
\begin{document}


\author[$*$]{Kaining Zhang}
\author[$*$]{Liu Liu}
\author[$\dag$]{Min-Hsiu Hsieh}
\author[$*,\ddag$]{Dacheng Tao}
\affil[$*$]{School of Computer Science, Faculty of Engineering, The University of Sydney, Sydney, Australia}
\affil[$\dag$]{Hon Hai Quantum Computing Research Center, Taipei, Taiwan}
\affil[$\ddag$]{JD Explore Academy, Beijing, China}

\title{Escaping from the Barren Plateau via Gaussian Initializations in Deep Variational Quantum Circuits}

\maketitle

\begin{abstract}
Variational quantum circuits have been widely employed in quantum simulation and quantum machine learning in recent years. However, quantum circuits with random structures have poor trainability due to the exponentially vanishing gradient with respect to the circuit depth and the qubit number. This result leads to a general standpoint that deep quantum circuits would not be feasible for practical tasks. In this work, we propose an initialization strategy with theoretical guarantees for the vanishing gradient problem in general deep quantum circuits. Specifically, we prove that under proper Gaussian initialized parameters, the norm of the gradient decays at most polynomially when the qubit number and the circuit depth increase. Our theoretical results hold for both the local and the global observable cases, where the latter was believed to have vanishing gradients even for very shallow circuits. Experimental results verify our theoretical findings in the quantum simulation and quantum chemistry.
\end{abstract}

\section{Introduction}
\label{gidqnn_intro}

Quantum computing has attracted great attention in recent years, especially since the realization of quantum supremacy~\cite{nature_arute2019quantum,science_abe8770} with noisy intermediate-scale quantum (NISQ) devices~\cite{quantum_Preskill2018}. 
Due to mild requirements on the gate noise and the circuit connectivity, variational quantum algorithms (VQAs)~\cite{nat_review_phys_cerezo2021} become one of the most promising frameworks for achieving practical quantum advantages on NISQ devices. 
Specifically, different VQAs have been proposed for many topics, e.g., quantum chemistry~\cite{rmp_mcardle2020qcc,nc_peruzzo2014var,nature_kandala2017hardware,quantum_Higgott2019vqc,nc_grimsley2019adaptive,prx_hempel2018qcion,science_frank2020hf,prxq_tang2021adaptvqe,pra_delgado2021vqa}, quantum simulations~\cite{rmp_georgescu2014qs,quantum_yuan2019theoryofvariational,npj_mcardle2019variational,prl_endo2020_vqsgp,nature_neill2021qring,nature_mi2021time,science_randall2021mbltime,science_amita2021obs,science_semeghini2021topoliquid,science_satzinger2021topoorder}, machine learning~\cite{pra_schuld2020circuit,nature_havlivcek2019sl,prl_schuld2019qml,prr_yuxuan2020powerpqc,ieee_samuel2020rl,nature_saggio2021rl,pra_heliang2021experimentqgan,prxq_yuxuan2021learnability}, numerical analysis~\cite{pra_lubasch2020vqanonlinear,pra_kubo2021vqasde,prxq_yongxin2021vqd,pra_hailing2021vqapoisson,pra_kyriienko2021nonlinearde}, and linear algebra problems~\cite{bulletin_bravo2020variational,scibull_xiaosi2021vaforla,quantum_wang2021vqsvd}.
Recently, various small-scale VQAs have been implemented on real quantum computers for tasks such as finding the ground state of molecules~\cite{prx_hempel2018qcion,science_frank2020hf,prxq_tang2021adaptvqe} and exploring 
applications in supervised learning~\cite{nature_havlivcek2019sl}, generative learning~\cite{pra_heliang2021experimentqgan} and reinforcement learning~\cite{nature_saggio2021rl}.

Typical variational quantum algorithms is a trainable quantum-classical hybrid framework based on parameterized quantum circuits (PQCs)~\cite{qst_Benedetti2019pqc}. Similar to classical counterparts such as neural networks~\cite{jmlr_larochelle2009exploring}, first-order methods including the gradient descent~\cite{icml_simon2019_gdglobalminima} and its variants~\cite{compstat_bottou2010sgd} are widely employed in optimizing the loss function of VQAs. However, VQAs may face the trainability barrier when scaling up the size of quantum circuits (i.e., the number of involved qubits or the circuit depth), which is known as the barren plateau problem~\cite{nc_mcclean2018barren}. 

Roughly speaking, the barren plateau describes the phenomenon that the value of the loss function and its gradients concentrate around their expectation values with exponentially small variances. 
We remark that gradient-based methods could hardly handle trainings with the barren plateau phenomenon~\cite{nc_cerezo2020cost}. Both the machine noise of the quantum channel and the statistical noise induced by measurements could severely degrade the estimation of gradients. Moreover, the optimization of the loss with a flat surface takes much more time using inaccurate gradients than ideal cases. Thus, solving the barren plateau problem is imperative for achieving practical quantum advantages with VQAs. 
In this paper, we propose Gaussian initializations for VQAs which have theoretical guarantees on the trainability. We prove that for Gaussian initialized parameters with certain variances, the expectation of the gradient norm is lower bounded by the inverse of the polynomial term of the qubit number and the circuit depth. 
Technically, we consider various cases regarding VQAs in practice, which include local or global observables, independently or jointly employed parameters, and noisy optimizations induced by finite measurements.
To summarize, our contributions are fourfold:
\begin{itemize}
[topsep=0pt,itemsep=-0.1ex,partopsep=1ex,parsep=1ex, leftmargin=0.8cm]
    \item We propose a Gaussian initialization strategy for deep variational quantum circuits. By setting the variance $\gamma^2=\O(\frac{1}{L})$ for $N$-qubit $L$-depth circuits with independent parameters and local observables, we lower bound the expectation of the gradient norm by ${\poly(N,L)}^{-1}$ as provided in Theorem~\ref{tqnn_cost_gaussian_gradient}, which outperforms previous ${2^{-\O(L)}}$ results. 
    \item We extend the gradient norm result to the global observable case in Theorem~\ref{gidqnn_gaussian_global}, which was believed to have the barren plateau problem even for very shallow circuits. Moreover, our bound holds for correlated parameterized gates, which are widely employed in practical tasks like quantum chemistry and quantum simulations.
    \item 
    We provide further analysis on the number of necessary measurements for estimating the gradient, where the noisy case differs from the ideal case with a Gaussian noise. The result is presented in Corollary~\ref{gidqnn_corollary_noise}, which proves that $\O(\frac{L}{\epsilon})$ times of measurement is sufficient to guarantee a large gradient. 
    \item 
    We conduct various numerical experiments including finding the ground energy and the ground state of the Heisenberg model and the LiH molecule, which belong to quantum simulation and quantum chemistry, respectively. Experiment  results show that Gaussian initializations outperform uniform initializations, which verify proposed theorems. 
\end{itemize}
\subsection{Related work}
\label{gidqnn_related_work}

The barren plateau phenomenon was first noticed in 
~\citep{nc_mcclean2018barren}, which proves that if the circuit distribution forms unitary $2$-designs~\cite{cmp_harrow2009random}, the variance of the gradient of the circuit vanishes to zero with the rate exponential in the qubit number. Subsequently, several positive results are proved for shallow quantum circuits such as the alternating-layered circuit \cite{nc_cerezo2020cost, iop_2021Uvarovlocality} and the quantum convolutional neural network~\cite{prx_pesah2020absence} when the observable is constrained in small number of qubits (local observable). For shallow circuits with $N$ qubits and $\O(\log N)$ depth, the variance of the gradient has the order $\poly(N)^{-1}$ if gate blocks in the circuit are sampled from local $2$-design distributions. Later, several works prove an inherent relationship between the barren plateau phenomenon and the complexity of states generated from the circuit. Specifically, circuit states that satisfy the volume law could lead to the barren plateau problem~\cite{prxq_ortiz2021eibp}. Expressive quantum circuits, which is measured by the distance between the Haar distribution and the distribution of circuit states, could have vanishing gradients~\cite{holmes2021connecting}. Since random circuits form approximately $2$-designs when they achieve linear depths~\cite{cmp_harrow2009random}, deep quantum circuits were believed to suffer the barren plateau problem generally. 

The parameterization of quantum circuits is achieved by tuning the time of Hamiltonian simulations, so the gradient of the circuit satisfies the parameter-shift rule~\cite{prl_jun2017parametershift}. Thus, the variance of the loss in VQAs and that of its gradient have similar behaviors for uniform distributions~\cite{nc_mcclean2018barren, arxiv_zhang2020towardtqnn}. One corollary of the parameter-shift rule is that the gradient of depolarized noisy quantum circuits vanishes exponentially with increasing circuit depth~\cite{nc_wang2021noise}, since the loss itself vanishes in the same rate. Another corollary is that both gradient-free~\cite{quantum_Arrasmith2021effectofbarren} and higher-order methods~\cite{iop_2021CerezoHigher} could not solve the barren plateau problem.
Although most existing theoretical and practical results imply the barren plateau phenomenon in deep circuits, VQAs with deep circuits do have impressive advantages from other aspects. For example, the loss of VQAs is highly non-convex, which is hard to find the global minima~\cite{prl_bittel2021vqanp} for both shallow and deep circuits. Meanwhile, for VQAs with shallow circuits, local minima and global minima have considerable gaps~\cite{icml_xiaodi2021localminima}, which could severely influence the training performance of gradient-based methods. Contrary to shallow cases, deep VQAs have vanishing gaps between local minima and global minima~\cite{arxiv_anschuetz2021critical}. In practice, experiments show that overparameterized VQAs~\cite{arxiv_larocca2021theory} can be optimized towards the global minima. Moreover, VQAs with deep circuits have more expressive power than that of shallow circuits~\cite{arxiv_du2021efficient, prxq_tobias2021geometry, arxiv_caro2021generalization}, which implies the potential to handle more complex tasks in quantum machine learning and related fields.

Inspired by various advantages of deep VQAs, some approaches have been proposed recently for solving the related barren plateau problem in practice. For example, the block-identity strategy~\cite{quantum_grant2019initialization} initializes gate blocks in pairs and sets parameters inversely, such that the initial circuit is equivalent to the identity circuit with zero depth. Since shallow circuits have no vanishing gradient problem, the corresponding VQA is trainable with guarantees at the first step. However, we remark that the block-identity condition would not hold after the first step, and the structure of the circuit needs to be designed properly. The layerwise training method~\cite{qmi_skolik2020layerwise} trains parameters in the circuit layers by layers, such that the depth of trainable part is limited. However, this method implements circuits with larger depth than that of the origin circuit, and parameters in the first few layers are not optimized. A recent work provides theoretical guarantees on the trainability of deep circuits with certain structures~\cite{arxiv_zhang2021towardtqnn}. However, the proposed theory only suits VQAs with local observables, but many practical applications such as finding the ground state of molecules and the quantum compiling~\cite{quantum_Khatri2019quantum, iop_Sharma2020vqc} apply global observables. 



\section{Notations and quantum computing basics}
\label{gidqnn_pre}



We denote by $[N]$ the set $\{ 1,\cdots,N\}$. 
The form $\|\cdot\|_2$ represents the $\ell_2$ norm for the vector and the spectral norm for the matrix, respectively.
We denote by $a_j$ the $j$-th component of the vector $\bm{a}$. 
The tensor product operation is denoted as ``$\otimes$". The conjugate transpose of a matrix $A$ is denoted as $A^{\dag}$. The trace of a matrix $A$ is denoted as $\Tr[A]$. 
We denote $\nabla_{\bm{\theta}}f$ as the gradient of the function $f$ with respect to the variable $\bm{\theta}$. We employ notations $\O$ to describe complexity notions.

Now we introduce  quantum computing knowledge and notations. 
The pure state of a qubit could be written as $|\phi\> = a|0\>+b|1\>$, where $a,b \in \mathbb{C}$ satisfy $|a|^2 + |b|^2 =1$, and $|0\> = (1,0)^T, |1\> = (0,1)^T$. 
The $N$-qubit space is formed by the tensor product of $N$ single-qubit spaces.
For pure states, the corresponding density matrix is defined as $\rho=|\phi\>\<\phi|$, in which $\<\phi| = (|\phi\>)^{\dag}$. We use the density matrix to represent general mixed quantum states, i.e., $\rho = \sum_{k} c_k |\phi_k\>\<\phi_k|$, where $c_k \in \R$ and $\sum_k c_k =1$.
A single-qubit operation to the state behaves like the matrix-vector multiplication and can be referred to as the gate
$\Qcircuit @C=0.8em @R=1.5em {
\lstick{} & \gate{} & \qw 
}$ in the quantum circuit language.
Specifically, single-qubit operations are often used as $R_X (\theta)=e^{-i\theta X}$, $R_Y (\theta)=e^{-i\theta Y}$, and $R_Z (\theta)=e^{-i\theta Z}$, where
\begin{equation*}
X = \begin{pmatrix}
0 & 1 \\
1 & 0
\end{pmatrix},
Y = \begin{pmatrix}
0 & -i \\
i & 0
\end{pmatrix},
Z = \begin{pmatrix}
1 & 0 \\
0 & -1
\end{pmatrix}.
\end{equation*}
Pauli matrices will be referred to as 
$\{I, X, Y, Z\} = \{\sigma_0, \sigma_1, \sigma_2, \sigma_3\}$ for the convenience. Moreover, two-qubit operations, such as the CZ gate and the $\sqrt{i{\rm SWAP}}$ gate, are employed for generating quantum entanglement:
\begin{align*}
\text{CZ} ={} \left( 
\begin{matrix}
1 & 0 & 0 & 0 \\
0 & 1 & 0 & 0 \\
0 & 0 & 1 & 0 \\
0 & 0 & 0 & -1	
\end{matrix}
\right) , \sqrt{i{\rm SWAP}} ={} \left( 
\begin{matrix}
1 & 0 & 0 & 0 \\
0 & 1/\sqrt{2} & i/\sqrt{2} & 0 \\
0 & i/\sqrt{2} & 1/\sqrt{2} & 0 \\
0 & 0 & 0 & 1	
\end{matrix}
\right) .
\end{align*}

We could obtain information from the quantum system by performing measurements, for example, measuring the state $|\phi\> =a|0\>+b|1\>$ generates $0$ and $1$ with  probability $p(0)=|a|^2$ and $p(1)=|b|^2$, respectively. Such a measurement operation could be mathematically referred to as calculating the average of the observable $O=\sigma_3$ under the state $|\phi\>$:
\begin{equation*}
\<\phi| O |\phi\> \equiv \Tr [\sigma_3 |\phi\>\<\phi| ] = |a|^2 - |b|^2 = p(0) - p(1).
\end{equation*}
Mathematically, quantum observables are Hermitian matrices. Specifically, the average of a unitary observable under arbitrary states is bounded by $[-1,1]$. We remark that $\O(\frac{1}{\epsilon^2})$ times of measurements could provide an $\epsilon\|O\|_2$-error estimation to the value $\Tr[O\rho]$.

\section{Framework of general VQAs}
\label{gidqnn_general_vqa}
In this section, we introduce the framework of general VQAs and corresponding notations. A typical variational quantum algorithm can be viewed as the optimization of the function $f$, which is defined as the expectation of observables. The expectation varies for different initial states and different parameters $\bm{\theta}$ used in quantum circuits. Throughout this paper, we define 
\begin{equation}\label{gidqnn_general_loss_function}
	f(\boldsymbol{\theta}) = \Tr \left[ O V(\boldsymbol{\theta})  \rho_{\text{in}}  V(\boldsymbol{\theta})^\dag  \right]
\end{equation}
as the loss function of VQAs, where $V(\bm{\theta})$ denotes the parameterized quantum circuit, the hermitian matrix $O$ denotes the observable, and $\rho_{\text{in}}$ denotes the density matrix of the input state. 
Next, we explain observables, input states, and parameterized quantum circuits in detail. 


Both the observable and the density matrix could be decomposed under the Pauli basis.
We define the \emph{locality} of a quantum observable as the maximum number of non-identity Pauli matrices in the tensor product, such that the corresponding coefficient is not zero. Thus, the observable with the constant locality is said to be \emph{local}, and the observable that acts on all qubits is said to be \emph{global}.

The observable and the input state in VQAs could have various formulations {for specific tasks.}
For the quantum simulation or the quantum chemistry scenario, observables are constrained to be the system Hamiltonians, while input states are usually prepared as computational basis states. For example, $(|0\>\<0|)^{\otimes N}$ is used frequently in quantum simulations~\cite{prl_endo2020_vqsgp, nature_neill2021qring}. Hartree–Fock (HF) states~\cite{nc_grimsley2019adaptive,prx_hempel2018qcion}, which are prepared by the tensor product of $\{|0\>,|1\>\}$, serve as good initial states in quantum chemistry tasks~\cite{nc_grimsley2019adaptive, science_frank2020hf, prxq_tang2021adaptvqe, pra_delgado2021vqa}.
For quantum machine learning (QML) tasks, initial states encode the information of the training data, which could have a complex form. Many encoding strategies have been introduced in the literature~\cite{pra_schuld2020circuit, scirep_araujo2021divide, arxiv_schatzki2021entangled}. In contrary with the complex initial states,  observables employed in QML are quite simple. For example, $\pi_0 = |0\>\<0|$ serves as the observable in most QML tasks related with the classification~\cite{pra_schuld2020circuit,nature_havlivcek2019sl,prl_schuld2019qml} or the dimensional reduction~\cite{mlst_Bravo_Prieto_2021}. 


Apart from the input states and the observable choices, parameterized quantum circuits employed in different variational quantum algorithms have various structures, which are also known as \emph{ansatzes}~\cite{farhi2014quantum, mcardle2019variational, a12020034}. 
Specifically, the ansatz in the VQA denotes the initial guess on the circuit structure. For example, alternating-layered ansatzes~\cite{farhi2014quantum, fingerhuth2018quantum} are proposed for approximating the Hamiltonian evolution. 
Recently, hardware efficient ansatzes~\cite{nature_kandala2017hardware, quantum_Nakaji2021expressibilityof} and tensor-network based ansatzes~\cite{natphy_cong2019quantum, prl_Felser2021tensoransatz}, which could utilize parameters efficiently on noisy quantum computers, have been developed for various tasks, including quantum simulations and quantum machine learning. For quantum chemistry tasks, unitary coupled cluster ansatzes~\cite{scirep_yung2014transistor, pra_shen2017ucc} are preferred since they preserve the number of electrons corresponding to circuit states. 

In practice, ansatz is deployed as the sequence of single-qubit rotations $\{e^{-i\theta \sigma_k}, k \in \{1,2,3\}\}$ and two-qubit gates. We remark that the gradient of the VQA satisfies the parameter-shift rule~\cite{prl_jun2017parametershift, arxiv_crooks2019gradients, arxiv_wierichs2021general}; namely, for independently deployed parameters $\theta_j$, the corresponding partial derivative is 
\begin{equation}\label{gidqnn_parameter_shift}
\frac{\partial f}{\partial \theta_j} = f(\bm{\theta}_{+}) - f(\bm{\theta}_{-}),
\end{equation}
where $\boldsymbol{\theta}_+$ and $\boldsymbol{\theta}_-$ are different from $\boldsymbol{\theta}$ only at the $j$-th parameter: $\theta_j \rightarrow \theta_j \pm \frac{\pi}{4}$.
Thus, the gradient of $f$ could be estimated efficiently, which allows optimizing VQAs with gradient-based methods~\cite{sciadv_zhu2019tqc, quantum_Stokes2020quantumnatural, quantum_Sweke2020stochasticgradient}.

\section{Theoretical results about Gaussian initialized VQAs}
\label{gidqnn_gaussian}

In this section, we provide our theoretical guarantees on the trainability of deep quantum circuits through proper designs for the initial parameter distribution. In short, we prove that the gradient of the $L$-layer $N$-qubit circuit is upper bounded by $1/\poly(L,N)$, if initial parameters are sampled from a Gaussian distribution with $\O(1/L)$ variance. Our bounds significantly improve existing results of the gradients of VQAs, which have the order $2^{-\O(L)}$ for shallow circuits and the order $2^{-\O(N)}$ for deep circuits. We prove different results for the local and global observable cases in Section~\ref{gidqnn_gaussian_local} and Section~\ref{gidqnn_gaussian_global}, respectively.

\subsection{Independent parameters with local observables}
\label{gidqnn_gaussian_local}

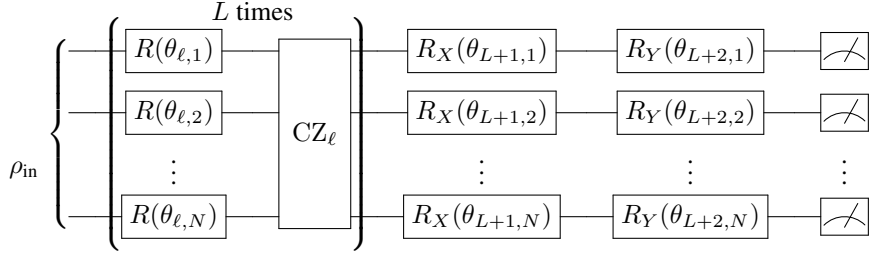
\begin{figure}[t]
\centerline{
\Qcircuit @C=1.0em @R=0.7em {
& & & {L} \text{ times} & & & & & & & \\
\lstick{} & \qw & \gate{R(\theta_{\ell,1})} & \qw & \multigate{3}{{\rm CZ}_{\ell}} & \qw  & \gate{R_X(\theta_{L+1,1})} & \qw & \gate{R_Y(\theta_{L+2,1})} & \qw & \meter \\
\lstick{} & \qw & \gate{R(\theta_{\ell,2})} & \qw & \ghost{{\rm CZ}_{\ell}} & \qw  & \gate{R_X(\theta_{L+1,2})} & \qw & \gate{R_Y(\theta_{L+2,2})} & \qw & \meter \\
\lstick{} &  & {\vdots} & & \nghost{{\rm CZ}_{\ell}} &  & {\vdots} &  & {\vdots} &  & {\vdots} \\
\lstick{} & \qw & \gate{R(\theta_{\ell,N})} & \qw & \ghost{{\rm CZ}_{\ell}} & \qw  & \gate{R_X(\theta_{L+1,N})} & \qw & \gate{R_Y(\theta_{L+2,N})} & \qw & \meter 
\inputgroupv{2}{5}{.8em}{4.5em}{\rho_{\text{in}}}
\gategroup{2}{3}{5}{5}{0.8em}{(}
\gategroup{2}{3}{5}{5}{0.8em}{)}
}
}
\caption{The quantum circuit framework for the local observable case. The circuit performs $L$ layers of single qubit rotations and CZ layers on the input state $\rho_{\text{in}}$, followed by a $R_X$ layer and a $R_Y$ layer. In the $\ell$-th single qubit layer, we employ the gate $e^{-i\theta_{\ell,n} G_{\ell,n}}$ for all qubits $n \in [N]$, where $G_{\ell,n}$ is a Hermitian unitary, which anti-commutes with $\sigma_3$ for $\ell \in [L]$. In each $\text{CZ}_\ell$ layer, CZ gates are employed between arbitrary qubit pairs. {The measurement is performed on $S$ qubits where the observable acts nontrivially on these qubits.}}
\label{gidqnn_local_circuit}
\end{figure}

First, we introduce the Gaussian initialization of parameters for the local observable case. We use the quantum circuit illustrated in Figure~\ref{gidqnn_local_circuit} as the ansatz in this section. The circuit in Figure~\ref{gidqnn_local_circuit} performs $L$ layers of single qubit rotations and CZ gates on the input state $\rho_{\text{in}}$, followed by a $R_X$ layer and a $R_Y$ layer. We denote the single-qubit gate on the $n$-th qubit of the $\ell$-th layer as $e^{-i\theta_{\ell,n} G_{\ell,n}}$, $\forall \ell \in \{1,\cdots,L+2\}$ and $n \in \{1,\cdots,N\}$, where $\theta_{\ell,n}$ is the corresponding parameter and $G_{\ell,n}$ is a Hermitian unitary. To eliminate degenerate parameters, we require that single-qubit gates in the first $L$ layers do not commute with the CZ gate.
After gates operations, we measure the observable 
\begin{equation}\label{gidqnn_local_observable}
\sigma_{\bm{i}}= \sigma_{(i_1, i_2,\cdots,i_N)} = \sigma_{i_1} \otimes \sigma_{i_2} \otimes \cdots \otimes \sigma_{i_N},
\end{equation}
where $i_j \in \{0,1,2,3\},\forall j \in \{1,\cdots,N\}$, and $\bm{i}$ contains $S$ non-zero elements.
Figure~\ref{gidqnn_local_circuit} provides a general framework of VQAs with local observables, which covers various ansatzes proposed in the literature
\cite{arxiv_zhang2021towardtqnn,tqc_chen2021expover,prxq_tobias2021geometry,qmi_skolik2020layerwise}.
The bound of the gradient norm of the Gaussian initialized variational quantum circuit is provided in  Theorem~\ref{tqnn_cost_gaussian_gradient} with the proof in  Appendix.

\begin{theorem}\label{tqnn_cost_gaussian_gradient}
Consider the $L$-layer $N$-qubit variational quantum circuit $V(\bm{\theta})$ defined in Figure~\ref{gidqnn_local_circuit} and the cost function $f(\bm{\theta}) = {\rm Tr} \left[ \sigma_{\bm{i}} V(\bm{\theta}) \rho_{{\rm in}} V(\bm{\theta})^{\dag}\right]$, where the observable $\sigma_{\bm{i}}$ follows the definition~(\ref{gidqnn_local_observable}). Then,
\begin{equation}\label{tqnn_cost_gaussian_main_eq}
\mathop{\E}\limits_{\bm{\theta}} \|\nabla_{\bm{\theta}} f\|^2 \geq \frac{L}{S^{S} (L+2)^{S+1}} {\rm Tr} \left[ \sigma_{\bm{j}} \rho_{\rm in} \right]^2,
\end{equation}
where $S$ is the number of non-zero elements in $\bm{i}$, and the index $\bm{j}=(j_1,j_2,\cdots,j_N)$ such that $j_m = 0, \forall i_m = 0$ and $j_m = 3, \forall i_m \neq 0$. The expectation is taken with the Gaussian distribution $\mathcal{N}\left(0, \frac{1}{4S(L+2)}\right)$ for the parameters $\bm{\theta}$.
\end{theorem}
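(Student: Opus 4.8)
The plan is to compute the gradient component-by-component and then control the expectation over the Gaussian parameters by exploiting the special algebraic structure of the circuit: single-qubit generators $G_{\ell,n}$ in layers $1,\dots,L$ anti-commute with $\sigma_3$, the CZ layers are diagonal, and the final $R_X,R_Y$ layers rotate $\sigma_3$ into the measured Pauli string. First I would lower-bound $\E_{\bm\theta}\|\nabla_{\bm\theta} f\|^2$ by keeping only a convenient subset of partial derivatives, for instance the derivatives with respect to the parameters $\theta_{L+2,m}$ (or $\theta_{\ell,m}$ for a single well-chosen layer $\ell$) on the $S$ qubits where $\sigma_{\bm i}$ acts nontrivially. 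By the parameter-shift rule~(\ref{gidqnn_parameter_shift}) each such partial derivative is itself an expectation of the form $\Tr[O' V' \rho_{\rm in} V'^\dag]$, so I can rewrite $\partial f/\partial\theta_j$ in the Heisenberg picture as $\Tr[\tilde\sigma_{\bm i}(\bm\theta)\,\rho_{\rm in}]$ where $\tilde\sigma_{\bm i}(\bm\theta)$ is the observable conjugated back through the (differentiated) circuit.

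The key step is to track how $\sigma_{\bm i}$ evolves backward through the layers. Conjugation by the final $R_Y$ and $R_X$ layers turns each nontrivial $\sigma_{i_m}$ into a linear combination of Pauli strings; conjugation by a CZ layer permutes/combines Pauli strings while preserving weight structure in a controlled way; and crucially, conjugation by a single-qubit layer $e^{-i\theta_{\ell,n}G_{\ell,n}}$ with $\{G_{\ell,n},\sigma_3\}=0$ sends $\sigma_3 \mapsto \cos(2\theta_{\ell,n})\sigma_3 + \sin(2\theta_{\ell,n})(\text{something})$ and leaves $I$ fixed. Expanding the full backward-evolved observable as a sum over Pauli strings $\sum_{\bm p} c_{\bm p}(\bm\theta)\,\sigma_{\bm p}$, I get $\partial f/\partial\theta_j = \sum_{\bm p} c_{\bm p}(\bm\theta)\Tr[\sigma_{\bm p}\rho_{\rm in}]$, and after squaring and taking the Gaussian expectation, the cross terms involving distinct Pauli strings or odd powers of the trigonometric factors should vanish or be handled, leaving a sum of nonnegative contributions. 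I would isolate the single term corresponding to the ``all-$\sigma_3$ on the support'' string $\sigma_{\bm j}$, whose coefficient is a product of $\cos$-type factors across layers; then $\E_{\bm\theta}[\cos^2(2\theta)]$ and $\E_{\bm\theta}[\sin^2(2\theta)]$ with variance $\tfrac{1}{4S(L+2)}$ are each bounded below by a constant of the form $1 - \O(1/(S(L+2)))$, and multiplying $L$ (or $\Theta(L)$) such factors together, together with the combinatorial $1/S^S$ loss from distributing Pauli weight across the $S$ qubits and the $1/(L+2)^{S+1}$ from which layer the surviving derivative sits in, yields exactly the stated $\frac{L}{S^S(L+2)^{S+1}}\Tr[\sigma_{\bm j}\rho_{\rm in}]^2$.

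The main obstacle I anticipate is bookkeeping the backward propagation of $\sigma_{\bm i}$ through the interleaved CZ and single-qubit layers while keeping the expression manageable: the CZ layers entangle the qubits so the observable does not stay a tensor product, and one must argue that the ``diagonal'' (all-$\sigma_3$-on-support) component is not destroyed and that its squared Gaussian expectation dominates. Concretely, the hard part is showing that after the expectation, the off-diagonal Pauli strings either contribute nonnegatively or are negligible, so that the single explicit term survives with the claimed polynomial prefactor; this likely requires a careful induction on the layer index, maintaining an invariant about which Pauli strings can appear and with what trigonometric-polynomial coefficients, and using that $\E[\sin(2\theta)\cos(2\theta)]=0$ and $\E[\sin(2\theta)]=0$ under the centered Gaussian to kill the dangerous cross terms. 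The factor $1/S^S$ strongly suggests a union-bound / worst-case distribution of the $S$ units of Pauli weight across layers, which I would make precise in that induction.
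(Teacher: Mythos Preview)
Your proposal identifies the right obstacle—CZ layers entangle the backward-evolved observable—but does not have the device that dissolves it, and two pieces of the factor accounting are off.

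The paper never expands the observable into a full Pauli basis or attempts to control cross terms. It instead proves a one-parameter inequality: if $G$ is Hermitian unitary with $\{G,O\}=0$ and $V=e^{-i\theta G}$, then
\[
\E_{\theta\sim\mathcal N(0,\gamma^2)}\Tr[OV\rho V^\dag]^2 \;\ge\;(1-4\gamma^2)\,\Tr[O\rho]^2 \;+\; 4\gamma^2(1-4\gamma^2)\,\Tr[iGO\rho]^2,
\]
together with an analogous bound for $\E_\theta\big(\partial_\theta\Tr[OV\rho V^\dag]\big)^2$. Both right-hand terms are nonnegative, so at each single-qubit gate one simply keeps whichever term is convenient and discards the other—no cross terms ever appear. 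The crucial move is to integrate over the final $R_Y$ and $R_X$ layers first, keeping the \emph{second} term on each qubit in the support of $\sigma_{\bm i}$: this replaces $\sigma_{\bm i}$ by the diagonal string $\sigma_{\bm j}=\sigma_{\bm{3|i}}$ at the cost of one factor $4\gamma^2(1-4\gamma^2)$ per nontrivial qubit. Once the observable is diagonal, every CZ layer commutes with it and drops out of the analysis entirely—this is precisely what resolves your ``main obstacle,'' and without this conversion the induction through interleaved CZ layers does not close. From there each of the $L$ body layers contributes a factor $(1-4\gamma^2)^S$ via the first term of the lemma, except the differentiated layer $q$, where the derivative version of the lemma contributes one additional $4\gamma^2$.

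Two specific corrections to your sketch. First, $\E[\sin^2(2\theta)]=\tfrac{1}{2}(1-e^{-8\gamma^2})\approx 4\gamma^2=\tfrac{1}{S(L+2)}$, not $1-O\!\big(\tfrac{1}{S(L+2)}\big)$; the factor $(4\gamma^2)^{S+1}=[S(L+2)]^{-(S+1)}$ is exactly the $S^{-S-1}(L+2)^{-S-1}$ in the bound, and it arises from the $S$ conversions $\sigma_{i_m}\to\sigma_3$ in the last two layers plus the single differentiated gate—not from any combinatorial distribution of Pauli weight across layers. Second, the factor $L$ in the numerator comes from summing over all $LS$ partial derivatives $\partial f/\partial\theta_{q,n}$ with $q\in\{1,\dots,L\}$ and $n\in I_S$, each individually bounded below by $[S(L+2)]^{-(S+1)}\Tr[\sigma_{\bm j}\rho_{\rm in}]^2$; restricting to a single layer (or to the $R_Y$ layer) as you suggest would lose this factor.
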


Compared to existing works~\cite{nc_mcclean2018barren, nc_cerezo2020cost, iop_2021Uvarovlocality, prx_pesah2020absence, arxiv_zhang2021towardtqnn}, Theorem~\ref{tqnn_cost_gaussian_gradient} provides a larger lower bound of the gradient norm, which improves the complexity exponentially with the {depth of trainable circuits}. 
Different from unitary $2$-design distributions~\cite{nc_mcclean2018barren, nc_cerezo2020cost, iop_2021Uvarovlocality, prx_pesah2020absence} or the uniform distribution in the parameter space~\cite{arxiv_zhang2020towardtqnn, arxiv_larocca2021diagnosing, arxiv_zhang2021towardtqnn} that were employed in existing works, we analyze the expectation of the gradient norm under a depth-induced Gaussian distribution. This change follows a natural idea that the trainability is not required in the whole parameter space or the entire circuit space, but only on the parameter trajectory during the training. 
Moreover, large norm of gradients could only guarantee the trainability in the beginning stage, instead of the whole optimization, since a large gradient for trained parameters corresponds to non-convergence. Thus, the barren plateau problem could be crucial if initial parameters have vanishing gradients, which has been proved for deep VQAs with uniform initializations. In contrary, we could solve the barren plateau problem if parameters are initialized properly with large gradients, as provided in Theorem~\ref{tqnn_cost_gaussian_gradient}.
Finally, Gaussian initialized circuits converge to benign values if optima appear around $\bm{\theta}=\bm{0}$, which holds in many cases. For example, over-parameterized quantum circuits have benign local minima~\cite{arxiv_anschuetz2021critical} if the number of parameters exceeds the over-parameterization threshold. Moreover, over-parameterized circuits have exponential convergence rates~\cite{arxiv_liu2022analytic,arxiv_you2022convergence} on tasks like quantum machine learning and the quantum eigensolver. These works indicate that quantum circuits with sufficient depths could find good optimums near the initial points, which is similar to the classical wide neural network case~\cite{NEURIPS2018_5a4be1fa}.

\subsection{Correlated parameters with global observables}
\label{gidqnn_gaussian_global}

Next, we extend the Gaussian initialization framework to general quantum circuits with correlated parameters and global observables. 
Quantum circuits with correlated parameters have wide applications in quantum simulations and quantum chemistry~\cite{nc_grimsley2019adaptive, science_frank2020hf, prxq_tang2021adaptvqe, pra_delgado2021vqa}. One  example is the Givens rotation
\begin{equation}
{R^{\rm Givens}(\theta)} ={} \left(
\begin{matrix}
1 & 0 & 0 & 0 \\
0 & \cos \theta & -\sin \theta & 0 \\
0 & \sin \theta & \cos \theta & 0 \\
0 & 0 & 0 & 1
\end{matrix}
\right) 
\phantom{} ={}
\begin{array}{l}
\Qcircuit @C=0.5em @R=2em {
\lstick{} & \multigate{1}{\rotatebox{90}{$\sqrt{i{\rm SWAP}}$}} & \gate{R_Z(\frac{-\theta}{2})} & \multigate{1}{\rotatebox{90}{$\sqrt{i{\rm SWAP}}$}} & \qw & \qw \\
\lstick{} & \ghost{\rotatebox{90}{$\sqrt{i{\rm SWAP}}$}} & \gate{R_Z(\frac{\theta+\pi}{2})} & \ghost{\rotatebox{90}{$\sqrt{i{\rm SWAP}}$}} & \gate{R_Z(\frac{\pi}{2})} & \qw
}
\end{array}
\label{gidqnn_gaussian_givens_rotation} 
\end{equation}
which preserves the number of electrons in parameterized quantum states~\cite{science_frank2020hf}. 

To analyze VQAs with correlated parameterized gates, we consider the ansatz $V(\bm{\theta})=\prod_{j=L}^{1} V_j(\theta_j)$, which consists of parameterized gates $\{V_j(\theta_j)\}_{j=1}^{L}$. {Denote by $h_j$ the number of unitary gates that share the same parameter $\theta_j$. Thus, the parameterized gate $V_j (\theta_j)$ consists of a list of fixed and parameterized unitary operations  
\begin{equation}
    V_j (\theta_j) = \prod_{k=1}^{h_j} W_{jk} e^{-i \frac{\theta_j}{a_j} G_{jk}}
\end{equation} 
with the term $a_j \in \R/\{0\}$, where the Hamiltonian $G_{jk}$ and the fixed gate $W_{jk}$ are unitary $\forall k \in [h_j]$. Moreover, we consider the objective function 
\begin{equation}
\label{gidqnn_gaussian_global_f_eq}
    f(\bm{\theta}) = {\rm Tr} \left[ O \prod_{j=L}^{1} V_j(\theta_j) \rho_{\rm {in}} \prod_{j=1}^{L} V_j(\theta_j)^\dag \right],
\end{equation}}
where $\rho_{\rm {in}}$ and $O$ denote the input state and the observable, respectively.
In practical tasks of quantum chemistry, the molecule Hamiltonian $H$ serves as the observable $O$. Minimizing the function (\ref{gidqnn_gaussian_global_f_eq}) provides the ground energy and the corresponding ground state of the molecule. 
We provide the bound of the gradient norm of the Gaussian initialized variational quantum circuit in  Theorem~\ref{gidqnn_theorem_related} with the proof in Appendix. 
Similar to the local observable case, we could bound the norm of the gradient of Eq.~(\ref{gidqnn_gaussian_global_f_eq}) if parameters are initialized with $\O(\frac{1}{L})$ variance. Theorem~\ref{gidqnn_theorem_related} provides nontrivial bounds when the gradient at the zero point is large. This condition holds when the mean-field theory provides a good initial guess to the corresponding problems, e.g. the ground energy task in quantum chemistry and quantum many-body problems~\cite{prl_Amico1998hubbard}. 

\begin{theorem}\label{gidqnn_theorem_related}
Consider the $N$-qubit variational quantum algorithms with the objective function (\ref{gidqnn_gaussian_global_f_eq}). Then the following formula holds for any $\ell \in \{1,\cdots,L\}$,
\begin{equation}
\label{gidqnn_theorem_related_eq}
    \mathop{\E}\limits_{\bm{\theta}} \left( \frac{\partial f}{\partial \theta_{\ell} } \right)^2 \geq (1-\epsilon) \left( \frac{\partial f}{\partial \theta_{\ell} } \right)^2 \bigg|_{\bm{\theta}=\bm{0}},
\end{equation}
where $\bm{0}\in \R^{L}$ is the zero vector.
The expectation is taken with Gaussian distributions $\mathcal{N}(0, \gamma_j^2)$ for parameters in $\bm{\theta}=\{{\theta}_j\}_{j=1}^{L}$, where the variance $\gamma_j^2 \leq \frac{a_j^2 \epsilon}{16 h_j^2 (3h_j(h_j-1)+1) L \|O\|_2^2} \left.\left( \frac{\partial f}{\partial \theta_{\ell}} \right)^2 \right|_{\bm{\theta}=\bm{0}}$.
\end{theorem}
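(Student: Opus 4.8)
The plan is to expand the partial derivative $\partial f/\partial\theta_\ell$ as a function of the single variable $\theta_\ell$ (holding the others fixed) into a trigonometric polynomial, and then to show that under the Gaussian initialization the expectation of its square is close to its value at $\bm\theta=\bm 0$. First I would invoke the parameter-shift structure: since $V_\ell(\theta_\ell)=\prod_{k=1}^{h_\ell}W_{\ell k}e^{-i(\theta_\ell/a_\ell)G_{\ell k}}$ is a product of $h_\ell$ rotations with Hermitian-unitary generators, the map $\theta_\ell\mapsto \partial f/\partial\theta_\ell$ is a finite Fourier series in $\theta_\ell$ with frequencies bounded by $2h_\ell/a_\ell$ (each $e^{-i(\theta_\ell/a_\ell)G_{\ell k}}=\cos(\theta_\ell/a_\ell)I-i\sin(\theta_\ell/a_\ell)G_{\ell k}$ contributes two frequencies $\pm1/a_\ell$ in the conjugation). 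So $\partial f/\partial\theta_\ell = g(\theta_\ell,\bm\theta_{-\ell})$ where $g$ is Lipschitz in $\theta_\ell$ with a constant controlled by $h_\ell/a_\ell$ and by $\|O\|_2$ (since $|f|\le\|O\|_2$ and the derivative in each of the $h_\ell$ factors costs a factor $2/a_\ell$, giving $|\partial g/\partial\theta_\ell|\lesssim h_\ell^2\|O\|_2/a_\ell^2$, matching the $h_j^2(3h_j(h_j-1)+1)$ combinatorial count in the variance bound).

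Next I would handle the dependence on the other parameters $\bm\theta_{-\ell}$ the same way — for each fixed $\bm\theta_{-\ell}$ the quantity $g(\theta_\ell,\bm\theta_{-\ell})$ is some value, and $g$ as a function of all $L$ variables is globally Lipschitz, with the Lipschitz constant in $\theta_j$ governed by $h_j/a_j$. Then the core estimate is:
\begin{equation*}
\mathop{\E}\limits_{\bm\theta}\,g(\bm\theta)^2 \;\ge\; g(\bm 0)^2 - 2|g(\bm 0)|\,\mathop{\E}\limits_{\bm\theta}|g(\bm\theta)-g(\bm 0)| \;\ge\; g(\bm 0)^2 - 2|g(\bm 0)|\sum_{j=1}^L C_j\,\mathop{\E}|\theta_j|,
\end{equation*}
using $\mathop{\E}|\theta_j| \le \gamma_j$ for $\theta_j\sim\mathcal N(0,\gamma_j^2)$, together with the trigonometric Lipschitz bounds $C_j$. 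Actually a cleaner route — and the one I would pursue to get the sharp constant $3h_j(h_j-1)+1$ — is a second-order Taylor expansion: write $g(\bm\theta)-g(\bm 0)=\sum_j \theta_j\,\partial_j g(\bm 0) + \tfrac12\sum_{j,k}\theta_j\theta_k\,\partial_j\partial_k g(\tilde{\bm\theta})$, note $\mathop{\E}[\theta_j]=0$ kills the linear term, and bound the Hessian entries of $g$ by the Fourier-frequency count. Squaring and taking expectations, the cross terms with the constant vanish and one is left with $\mathop{\E}g^2 \ge g(\bm 0)^2 - (\text{const})\sum_j\gamma_j^2$, where the constant is exactly $\sum_j$ of the per-parameter second-derivative bound. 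Choosing each $\gamma_j^2$ as in the theorem statement makes the total correction at most $\epsilon\, g(\bm 0)^2$, which is $\epsilon(\partial f/\partial\theta_\ell)^2|_{\bm\theta=\bm 0}$, yielding \eqref{gidqnn_theorem_related_eq}.

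The main obstacle I expect is getting the combinatorial constant right, i.e. proving that the second $\theta_j$-derivative of $\partial f/\partial\theta_\ell$ is bounded by something of order $h_j^2(3h_j(h_j-1)+1)\|O\|_2^2/a_j^2$. This requires carefully counting, inside the conjugation $V(\bm\theta)\rho_{\rm in}V(\bm\theta)^\dag$, how many terms arise when one differentiates twice with respect to $\theta_j$ (which touches the $h_j$ rotation factors sharing that parameter) and once with respect to $\theta_\ell$: one gets sums over pairs (and triples, when $\ell=j$) of generator insertions, each term bounded in operator norm by $\|O\|_2$ because all $W_{jk}$ and $G_{jk}$ are unitary. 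The bookkeeping — in particular separating the $\ell\neq j$ case (two insertions of $G_{\ell\cdot}$-type, $h_j$ insertions of $G_{j\cdot}$-type) from the $\ell=j$ case (three insertions in the same block, hence the $3h_j(h_j-1)+1$ rather than $h_j^2$) — is where the real work lies; the probabilistic part is then a one-line Gaussian moment computation. I would also need to be slightly careful that the bound is uniform over the Taylor remainder point $\tilde{\bm\theta}$, but since the operator-norm bounds on the derivatives hold pointwise in $\bm\theta$, this is automatic.
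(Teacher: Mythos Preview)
Your approach is genuinely different from the paper's, and there is a real gap in it as written.

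The paper does \emph{not} do a global Taylor expansion. Instead it integrates the parameters out one at a time via the tower property. It first writes $\partial f/\partial\theta_\ell$ (by parameter-shift) as a signed sum of $2h_\ell$ traces of the form $\mathrm{Tr}[O\,V_L(\theta_L)\,\tilde\rho\,V_L(\theta_L)^\dag]$, and then applies a single-variable lemma of the shape
\[
\mathop{\E}_{\theta_j}\big[F(\theta_j)^2\big]\;\ge\;F(0)^2-C_j\,\gamma_j^2\,\|O\|_2^2,
\]
once for each $j=L,L-1,\ldots,1$ (a separate lemma handles the step $j=\ell$ where the derivative actually lives). These single-parameter lemmas are proved by an explicit Fourier expansion: each $e^{-i\theta_jG_{jk}}=\cos\theta_j\,I-i\sin\theta_j\,G_{jk}$ is expanded, the resulting $2^{h_j}$ terms are organized via recursively defined operators $O_{\bm i}^{\bm j}$ (commutator/anticommutator pieces), and the various $\cos^a 2\theta\sin^b 2\theta$ moments are bounded inductively. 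The constant $3h_j(h_j-1)+1$ falls out of that combinatorics, not from a derivative count.

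The gap in your global Taylor route is in the off-diagonal second-order terms. After squaring, the cross term $2g(\bm 0)\cdot(\text{linear part})$ does vanish in expectation, but $2g(\bm 0)\cdot R_2(\bm\theta)$ does not, and $R_2$ contains contributions $\theta_j\theta_k\,\partial_j\partial_k g(\tilde{\bm\theta})$ with $j\neq k$. Because $\tilde{\bm\theta}$ depends on $\bm\theta$, you cannot use $\E[\theta_j\theta_k]=0$ here; bounding crudely by $\gamma_j\gamma_k\max|\partial_j\partial_k g|$ and summing over $j\neq k$ picks up an extra factor of $L$, which would force $\gamma_j^2=O(\epsilon/L^2)$ rather than the claimed $O(\epsilon/L)$. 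The paper's one-parameter-at-a-time scheme sidesteps this completely: each step introduces only a single $\gamma_j^2$, and the $L$ errors add. If you want to rescue the Taylor idea, you would have to switch to this per-variable iteration (essentially reproving the paper's single-variable lemmas, though a second-order Taylor bound per variable would work and give constants of the same order); the global expansion as you describe it does not deliver the stated bound.
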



We remark that Theorem~\ref{gidqnn_theorem_related} not only provides an initialization strategy, but also guarantees the update direction during the training. Different from the classical neural network, where the gradient could be calculated accurately, the gradient of VQAs, obtained by the parameter-shift rule~(\ref{gidqnn_parameter_shift}), is perturbed by the measurement noise. 
A guide on the size of acceptable measurement noise could be useful for the complexity analysis of VQAs. 
Specifically, define $\bm{\theta}^{(t-1)}$ as the parameter at the $t-1$-th iteration. {Denote by ${\bm{\theta}}^{(t)}$ and  $\tilde{\bm{\theta}}^{(t)}$ the parameter updated from $\bm{\theta}^{(t-1)}$ for noiseless and noisy cases, respectively. Then $\tilde{\bm{\theta}}^{(t)}$ differs from ${\bm{\theta}}^{(t)}$ by a Gaussian error term due to the measurement noise. We expect to derive the gradient norm bound for $\tilde{\bm{\theta}}^{(t)}$, as provided in Corollary~\ref{gidqnn_corollary_noise}.} 
Thus, $\frac{1}{\gamma^2}=\O(\frac{L}{\epsilon})$ number of measurements is sufficient to guarantee a large gradient. 

\begin{corollary}\label{gidqnn_corollary_noise}
Consider the $N$-qubit variational quantum algorithms with the objective function (\ref{gidqnn_gaussian_global_f_eq}). Then the following formula holds for any $\ell \in \{1,\cdots,L\}$,
\begin{equation}\label{gidqnn_corollary_noise_eq}
\mathop{\E}\limits_{\bm{\delta}} \left( \frac{\partial f}{\partial \theta_{\ell} } \right)^2 \bigg|_{\bm{\theta}=\bm{\theta}^{(t)}+\bm{\delta}} \geq (1-\epsilon) \left( \frac{\partial f}{\partial \theta_{\ell} } \right)^2 \bigg|_{\bm{\theta}=\bm{\theta}^{(t)}}.
\end{equation}
The expectation is taken with Gaussian distributions $\mathcal{N}(0, \gamma_j^2)$ for parameters $\bm{\delta}=\{\delta_j\}_{j=1}^{L}$, where the variance $\gamma_j^2 \leq \frac{a_j^2 \epsilon}{16 h_j^2 (3h_j(h_j-1)+1) L \|O\|_2^2} \left.\left( \frac{\partial f}{\partial \theta_{\ell}} \right)^2 \right|_{\bm{\theta}=\bm{\theta}^{(t)}}$, $\forall j \in [L]$.
\end{corollary}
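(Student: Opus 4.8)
The plan is to obtain Corollary~\ref{gidqnn_corollary_noise} from Theorem~\ref{gidqnn_theorem_related} with no new analysis, simply by re-centering the parameter vector at $\bm{\theta}^{(t)}$ instead of $\bm{0}$. The only input needed is that the ansatz family of Section~\ref{gidqnn_gaussian_global} is closed under a global translation of its parameters, and that this translation leaves unchanged every quantity that enters the bound of Theorem~\ref{gidqnn_theorem_related}.

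First I would record this translation-covariance. Substituting $\theta_j = \theta_j^{(t)} + \delta_j$ into $V_j(\theta_j) = \prod_{k=1}^{h_j} W_{jk}\, e^{-i \frac{\theta_j}{a_j} G_{jk}}$ and using that $e^{-i\frac{\theta_j^{(t)}}{a_j}G_{jk}}$ and $e^{-i\frac{\delta_j}{a_j}G_{jk}}$ commute (same generator $G_{jk}$), the $k$-th block becomes $\big(W_{jk}\, e^{-i\frac{\theta_j^{(t)}}{a_j}G_{jk}}\big)\, e^{-i\frac{\delta_j}{a_j}G_{jk}}$, and the factor $e^{-i\frac{\theta_j^{(t)}}{a_j}G_{jk}}$ can be absorbed on the left into the fixed gate of its own block, yielding a new unitary $W'_{jk} := W_{jk}\, e^{-i\frac{\theta_j^{(t)}}{a_j}G_{jk}}$. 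Hence $V_j(\theta_j^{(t)}+\delta_j) = \prod_{k=1}^{h_j} W'_{jk}\, e^{-i\frac{\delta_j}{a_j}G_{jk}} =: V'_j(\delta_j)$, which is exactly of the form required by Theorem~\ref{gidqnn_theorem_related}, with the same $h_j$, the same $a_j$, the same unitary Hermitian generators $G_{jk}$, and only the fixed layers relabeled. Setting $\tilde{f}(\bm{\delta}) := \Tr\big[O \prod_{j=L}^{1} V'_j(\delta_j)\, \rho_{\rm in} \prod_{j=1}^{L} V'_j(\delta_j)^\dag\big]$ one has $\tilde{f}(\bm{\delta}) = f(\bm{\theta}^{(t)}+\bm{\delta})$, so $\tilde f$ is again an objective of the form~(\ref{gidqnn_gaussian_global_f_eq}) with the same $O$ (hence the same $\|O\|_2$), the same $\rho_{\rm in}$, and the same constants $a_j,h_j$; and by the chain rule $\frac{\partial \tilde f}{\partial \delta_\ell}(\bm{\delta}) = \frac{\partial f}{\partial \theta_\ell}(\bm{\theta}^{(t)}+\bm{\delta})$, in particular $\frac{\partial \tilde f}{\partial \delta_\ell}\big|_{\bm{\delta}=\bm{0}} = \frac{\partial f}{\partial \theta_\ell}\big|_{\bm{\theta}=\bm{\theta}^{(t)}}$.

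Next I would apply Theorem~\ref{gidqnn_theorem_related} to $\tilde f$, with the perturbation variables $\bm{\delta}=\{\delta_j\}_{j=1}^{L}$ in the role of the circuit parameters and the Gaussian law $\mathcal{N}(0,\gamma_j^2)$ placed on $\bm{\delta}$ exactly as the theorem places it on $\bm{\theta}$: for every $\ell$, $\E_{\bm{\delta}} \big(\tfrac{\partial \tilde f}{\partial \delta_\ell}\big)^2 \geq (1-\epsilon)\, \big(\tfrac{\partial \tilde f}{\partial \delta_\ell}\big)^2\big|_{\bm{\delta}=\bm{0}}$ whenever $\gamma_j^2 \leq \frac{a_j^2 \epsilon}{16 h_j^2(3h_j(h_j-1)+1)L\|O\|_2^2}\big(\tfrac{\partial \tilde f}{\partial \delta_\ell}\big)^2\big|_{\bm{\delta}=\bm{0}}$. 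Rewriting $\tilde f$ and $\bm{\delta}$ through the identifications of the previous step turns this inequality into precisely Eq.~(\ref{gidqnn_corollary_noise_eq}) and the displayed condition on $\gamma_j^2$, which completes the argument.

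As for where the effort lies: essentially nowhere new, since everything is inherited from Theorem~\ref{gidqnn_theorem_related}. The one point that warrants care is verifying that the block decomposition genuinely survives the shift \emph{with the same $h_j$} — no generator conjugated, no block split — which is exactly why it is convenient that each $e^{-i\frac{\theta_j^{(t)}}{a_j}G_{jk}}$ sits adjacent to a rotation with the same generator and merges cleanly into the adjacent fixed gate; and that none of $\|O\|_2$, $a_j$, $h_j$ is altered by this relabeling. I do not expect any further obstacle.
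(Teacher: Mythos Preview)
Your proposal is correct and matches the paper's own argument, which simply states that the corollary follows by applying Theorem~\ref{gidqnn_theorem_related} to the shifted function $g(\bm{\delta})=f(\bm{\theta}^{(t)}+\bm{\delta})$. You have in fact been more careful than the paper in explicitly verifying that the translated ansatz retains the same structural constants $h_j$, $a_j$, $\|O\|_2$ needed to invoke the theorem.
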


Corollary~\ref{gidqnn_corollary_noise} is derived by analyzing the gradient of the function $g(\bm{\delta})=f(\bm{\delta}+\bm{\theta}^{(t)})$ via Theorem~\ref{gidqnn_theorem_related}. For any number of measurements such that the corresponding Gaussian noise $\bm{\delta}$ satisfies the condition in Corollary~\ref{gidqnn_corollary_noise}, the trainability at the updated point is guaranteed.

\section{Experiments}
\label{gidqnn_experiment}
In this section, we analyze the training behavior of two variational quantum algorithms, i.e., finding the ground energy and state of the Heisenberg model and the LiH molecule, respectively. All numerical experiments are provided using the Pennylane package~\cite{bergholm2018pennylane}.

\subsection{Heisenberg model}
\label{gidqnn_exp_heisenberg}

\begin{figure}[t]
\centering
\subfigure[]{
  \includegraphics[width=.23\linewidth]{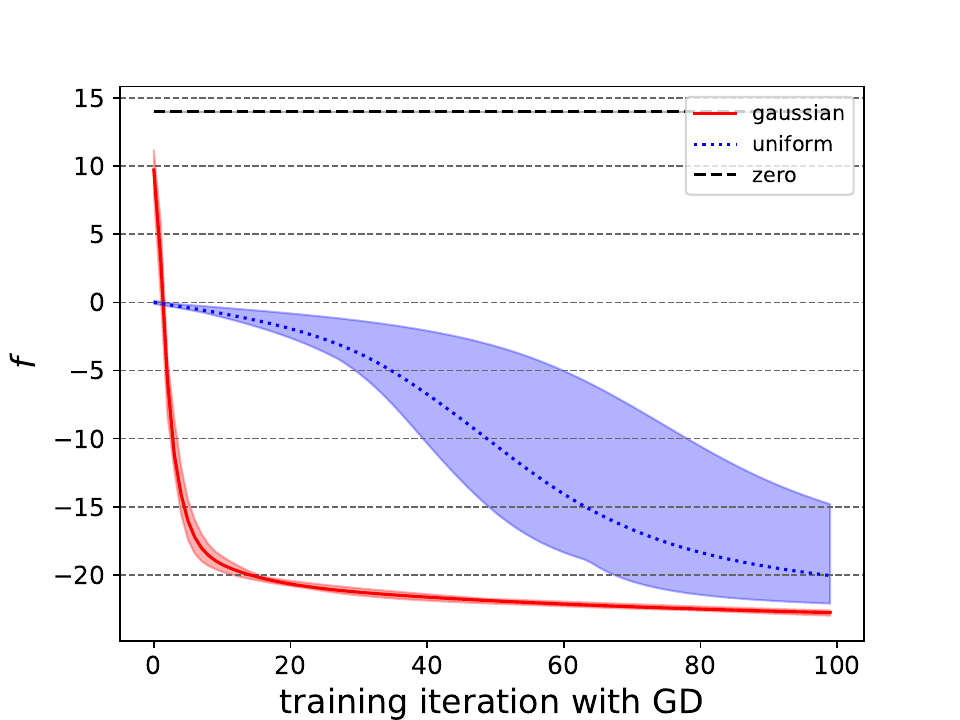}  
  \label{gidqnn_exp_heisenberg_gd_loss_1}
}
 \subfigure[]{
  \includegraphics[width=.23\linewidth]{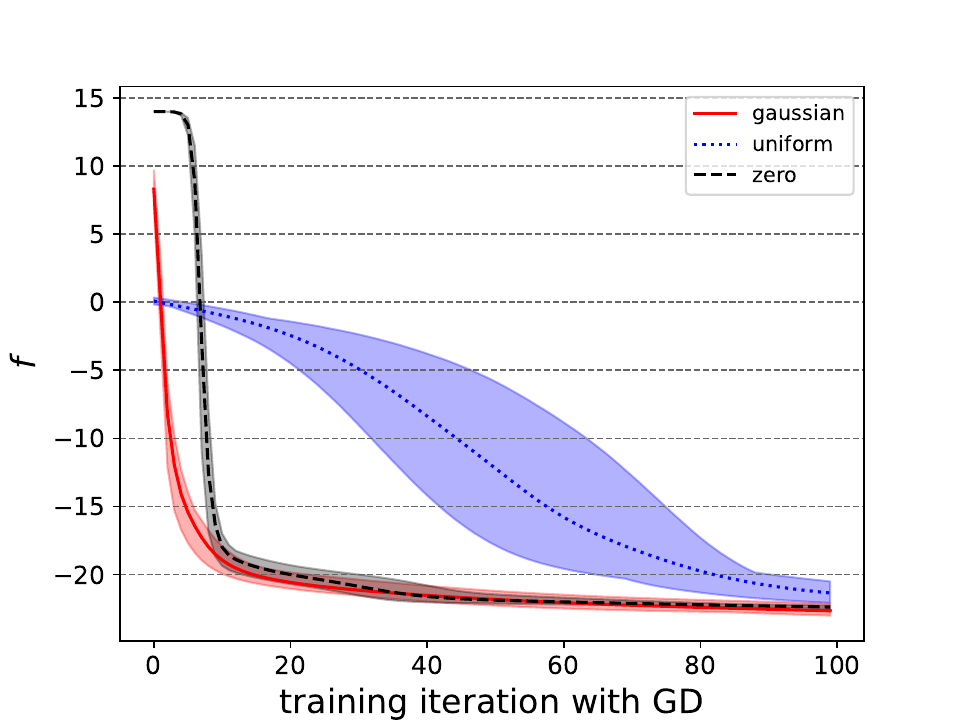}  
  \label{gidqnn_exp_heisenberg_gd_loss_2}
}
\subfigure[]{
  \includegraphics[width=.23\linewidth]{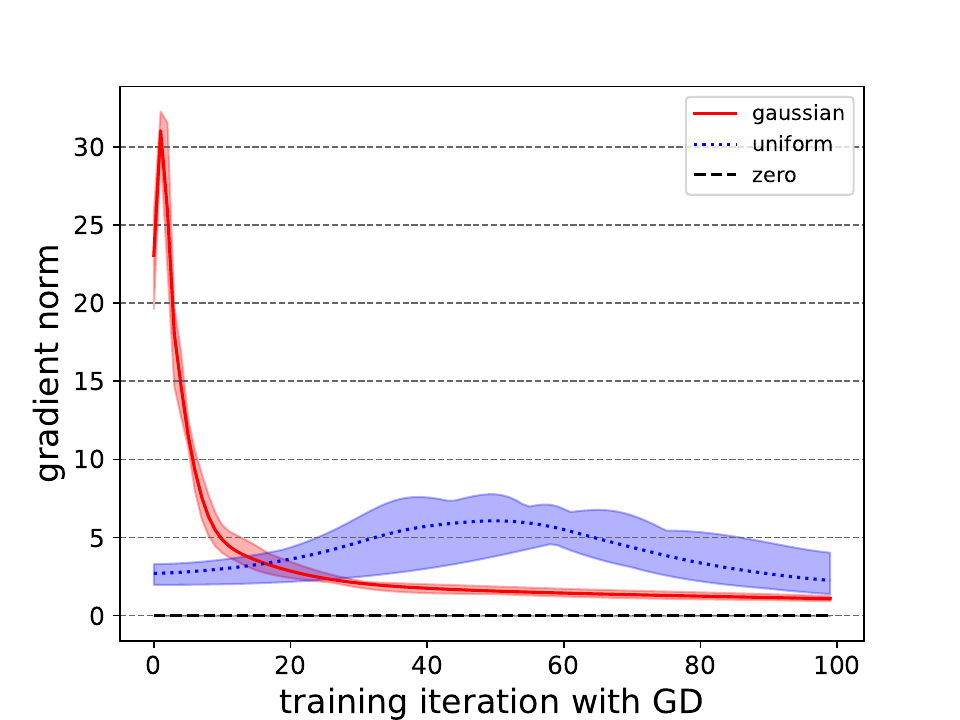}  
  \label{gidqnn_exp_heisenberg_gd_grad_1}
}
 \subfigure[]{  
  \includegraphics[width=.23\linewidth]{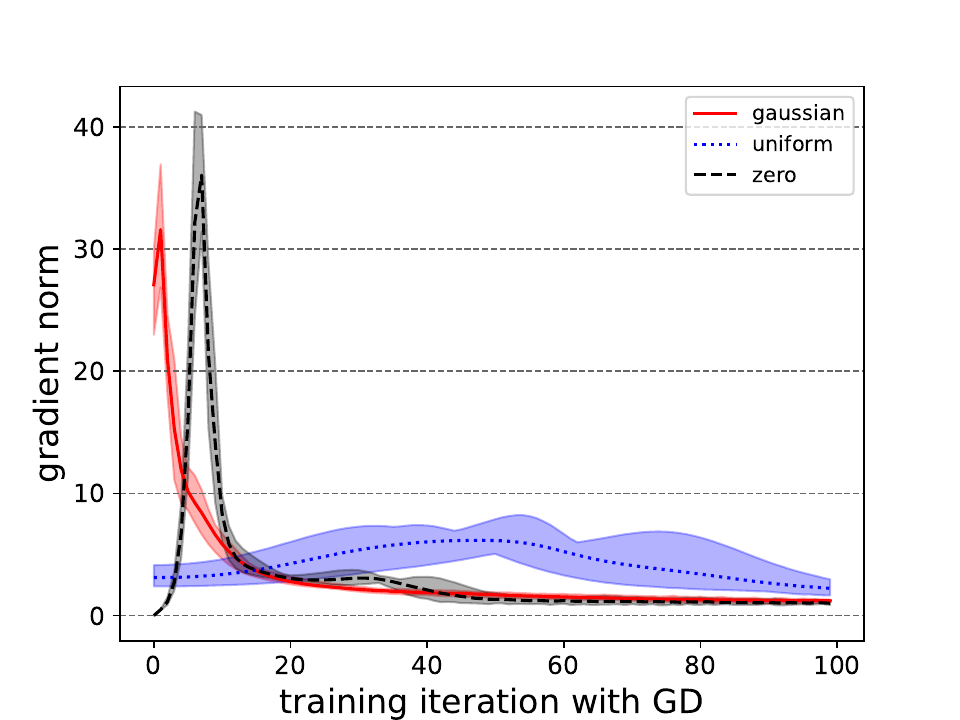}  
  \label{gidqnn_exp_heisenberg_gd_grad_2}
}
\subfigure[]{
  \includegraphics[width=.23\linewidth]{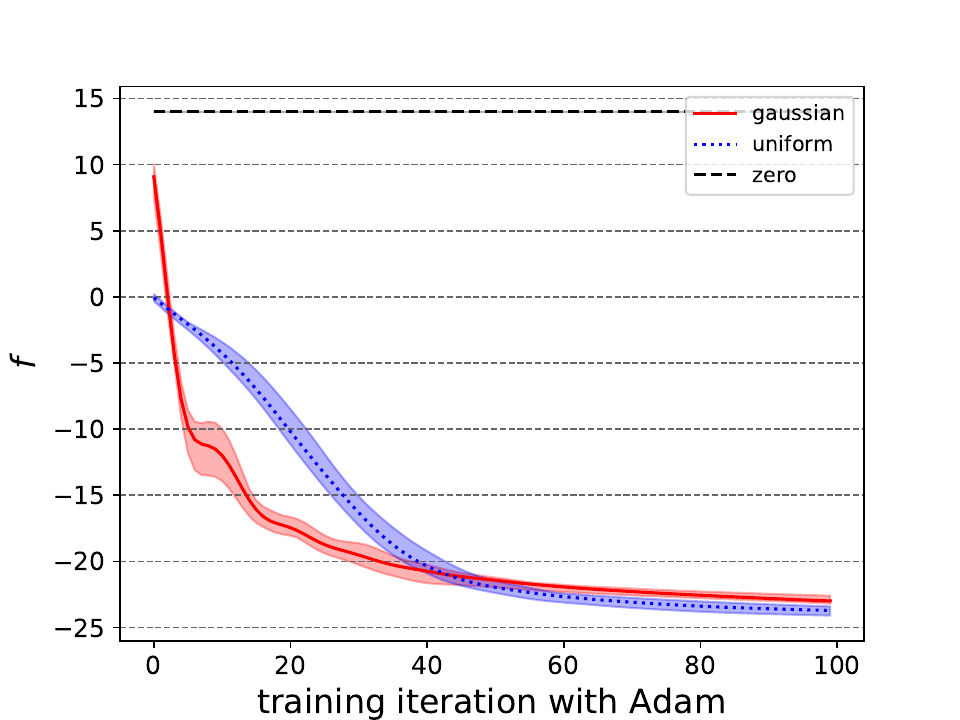}  
  \label{gidqnn_exp_heisenberg_adam_loss_1}
}
\subfigure[]{
  \includegraphics[width=.23\linewidth]{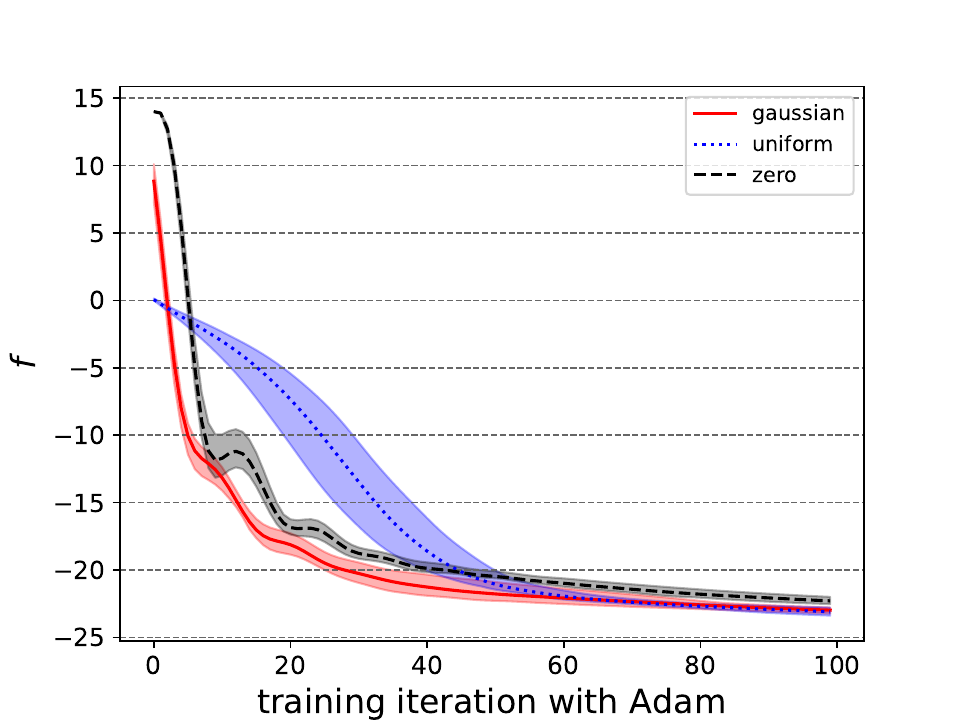}  
  \label{gidqnn_exp_heisenberg_adam_loss_2}
}
\subfigure[]{
  \includegraphics[width=.23\linewidth]{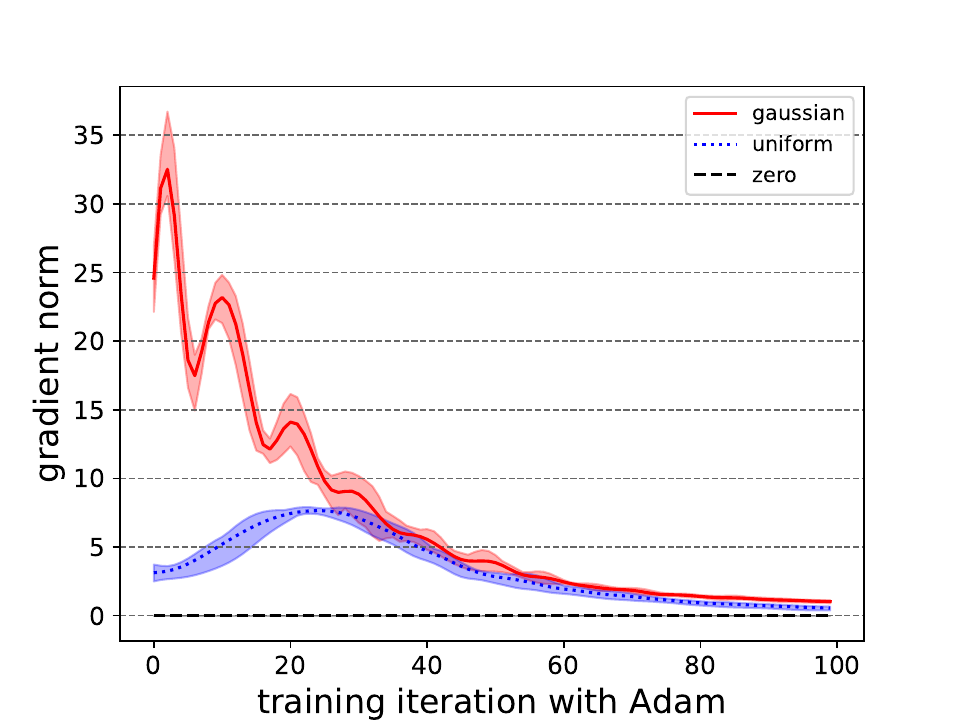}  
  \label{gidqnn_exp_heisenberg_adam_grad_1}
}
\subfigure[]{
  \includegraphics[width=.23\linewidth]{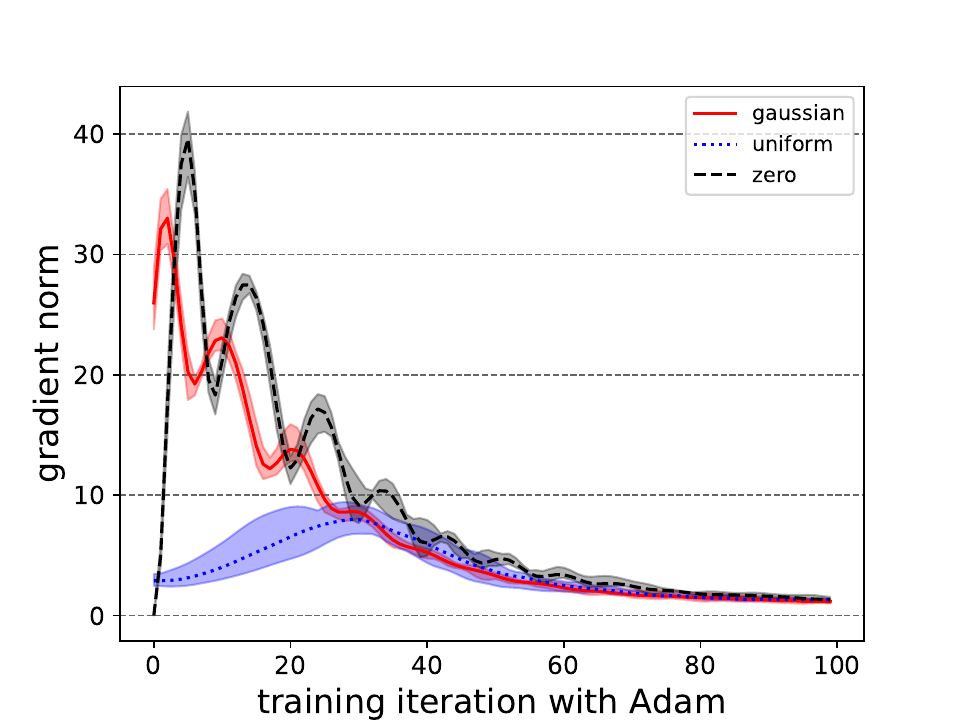}  
  \label{gidqnn_exp_heisenberg_adam_grad_2}
}
\caption{Numerical results of finding the ground energy of the Heisenberg model. 
The first row shows training results with the gradient descent optimizer, where 
Figures~\ref{gidqnn_exp_heisenberg_gd_loss_1} and \ref{gidqnn_exp_heisenberg_gd_loss_2} illustrate the loss function corresponding to Eq.(\ref{gidqnn_heisenberg_ham_eq}) during the optimization with accurate and noisy gradients, respectively. Figures~\ref{gidqnn_exp_heisenberg_gd_grad_1} and \ref{gidqnn_exp_heisenberg_gd_grad_2} show the $\ell_2$ norm of corresponding gradients.
The second row shows training results with the Adam optimizer, where Figures~\ref{gidqnn_exp_heisenberg_adam_loss_1} and \ref{gidqnn_exp_heisenberg_adam_loss_2} illustrate the loss function with accurate and noisy gradients, respectively. Figures~\ref{gidqnn_exp_heisenberg_adam_grad_1} and \ref{gidqnn_exp_heisenberg_adam_grad_2} show the $\ell_2$ norm of corresponding gradients.
Each line denotes the average of $5$ rounds of optimizations.}
\label{gidqnn_exp_heisenberg_fig_1}
\end{figure}

In the first task, we aim to find the ground state and the ground energy of the Heisenberg model~\cite{bonechi1992heisenberg}. The corresponding Hamiltonian matrix is
\begin{equation} \label{gidqnn_heisenberg_ham_eq}
H = \sum_{i=1}^{N-1} X_i X_{i+1} + Y_i Y_{i+1} + Z_i Z_{i+1},
\end{equation}
where $N$ is the number of qubit, $X_i=I^{\otimes (i-1)} \otimes X \otimes I ^{\otimes (N-i)}$, $Y_i=I^{\otimes (i-1)} \otimes Y \otimes I ^{\otimes (N-i)}$, and $Z_i=I^{\otimes (i-1)} \otimes Z \otimes I ^{\otimes (N-i)}$.
We employ the loss function defined by Eq.~(\ref{gidqnn_general_loss_function}) with the input state $(|0\>\<0|)^{\otimes N}$ and the observable (\ref{gidqnn_heisenberg_ham_eq}). Thus, by minimizing the function (\ref{gidqnn_general_loss_function}), we can obtain the least eigenvalue of the observable~(\ref{gidqnn_heisenberg_ham_eq}), which is the ground energy.
We adopt the ansatz with $N=15$ qubits, which consists of $L_1=10$ layers of $R_Y R_X CZ$ blocks. In each block, we first employ the CZ gate to neighboring qubits pairs $\{(1,2)\cdots,(N,1)\}$, followed by $R_X$ and $R_Y$ rotations for all qubits.
Overall, the quantum circuit has $300$ parameters. 
{We consider three initialization methods for comparison, i.e., initializations with the Gaussian distribution $\mathcal{N}(0,\gamma^2)$ and the uniform distribution in $[0,2\pi]$, respectively, and the zero initialization (all parameters equal to $0$ at the initial point).}
We remark that each term in the observable (\ref{gidqnn_heisenberg_ham_eq}) contains at most $S=2$ non-identity Pauli matrices, which is consistent with the $(S,L)=(2,18)$ case of Theorem~\ref{tqnn_cost_gaussian_gradient}.
Thus, we expect that the Gaussian initialization with the variance $\gamma^2=\frac{1}{4S(L+2)}=\frac{1}{160}$ could provide trainable initial parameters. 

In the experiment, we train VQAs with gradient descent (GD)~\cite{bottou2012stochastic} and Adam optimizers~\cite{kingma2014adam}, respectively. 
The learning rate is $0.01$ and $0.01$ for both GD and Adam cases.
Since the estimation of gradients on real quantum computers could be perturbed by statistical measurement noise, we compare optimizations using accurate and noisy gradients. For the latter case, we set the variance of measurement noises to be $0.01$.
The numerical results of the Heisenberg model are shown in the Figure~\ref{gidqnn_exp_heisenberg_fig_1}. 
The loss during the training with gradient descents is shown in Figures~\ref{gidqnn_exp_heisenberg_gd_loss_1} and \ref{gidqnn_exp_heisenberg_gd_loss_2} for the accurate and the noisy gradient cases, respectively. 
{The Gaussian initialization outperforms the other two initializations with faster convergence rates. Figures~\ref{gidqnn_exp_heisenberg_gd_grad_1} and \ref{gidqnn_exp_heisenberg_gd_grad_2} verify that Gaussian initialized VQAs have larger gradients in the early stage, compared to that of uniformly initialized VQAs. We notice that zero initialized VQAs cannot be trained with accurate gradient descent, since the initial gradient equals to zero. This problem is alleviated in the noisy case, as shown in Figures~\ref{gidqnn_exp_heisenberg_gd_loss_2} and \ref{gidqnn_exp_heisenberg_gd_grad_2}. Since the gradient is close to zero at the initial stage, the update direction mainly depends on the measurement noise, which forms the Gaussian distribution. Thus, the parameter in the noisy zero initialized VQAs is expected to accumulate enough variances, which takes around 10 iterations based on Figure~\ref{gidqnn_exp_heisenberg_adam_grad_2}.
As illustrated in Figure~\ref{gidqnn_exp_heisenberg_gd_loss_2}, the loss function corresponding to the zero initialization decreases quickly after the variance accumulation stage.}
Results in Figures~\ref{gidqnn_exp_heisenberg_adam_loss_1} and \ref{gidqnn_exp_heisenberg_adam_grad_2} show similar training behaviors using the Adam optimizer.

\subsection{Quantum chemistry}
\label{gidqnn_exp_chem}

In the second task, we aim to find the ground state and the ground energy 
of the LiH molecule. We follow settings on the ansatz in Refs.~\cite{prxq_tang2021adaptvqe, pra_delgado2021vqa}. For the molecule with $n_e$ active electrons and $n_o$ free spin orbitals, the corresponding VQA contains $N=n_o$ qubits, which employs the HF state~\cite{nc_grimsley2019adaptive, prx_hempel2018qcion}
\begin{equation*}
|\phi_{\rm HF} \> = \underbrace{|1\> \otimes \cdots |1\>}_{n_e} \otimes \underbrace{|0\> \otimes \cdots |0\>}_{n_o-n_e}
\end{equation*}
as the input state. 
We construct the parameterized quantum circuit with Givens rotation gates~\cite{prxq_tang2021adaptvqe}, where each gate is implemented on 2 or 4 qubits with one parameter. 
Specifically, for the LiH molecule, the number of electrons $n_e=2$, the number of free spin orbitals $n_o=10$, and the number of different Givens rotations is $L=24$~\cite{pra_delgado2021vqa}. We follow the molecule Hamiltonian $H_{\rm LiH}$ defined in Ref.~\cite{pra_delgado2021vqa}. Thus, the loss function for finding the ground energy of LiH is defined as
\begin{equation}\label{gidqnn_chem_loss_eq}
f(\bm{\theta}) = \Tr \left[ H_{\rm LiH} V_{\rm Givens} (\bm{\theta}) |\phi_{\rm HF} \> \<\phi_{\rm HF} | V_{\rm Givens} (\bm{\theta})^\dag \right],
\end{equation}
where $V_{\rm Givens} (\bm{\theta})=\prod_{i=1}^{24} R_i^{\rm Givens} (\theta_i)$ denotes the product of all parameterized Givens rotations of the LiH molecule.
By minimizing the function (\ref{gidqnn_chem_loss_eq}), we can obtain the least eigenvalue of the Hamiltonian $H_{\rm LiH}$, which is the ground energy of the LiH molecule.

In practice, we initialize parameters in the VQA~(\ref{gidqnn_chem_loss_eq}) with three distributions for comparison, i.e., the Gaussian distribution $\mathcal{N}(0,\gamma^2)$, the zero distribution (all parameters equal to $0$), and the uniform distribution in $[0,2\pi]$. 
{For 2-qubit Givens rotations, the term $(h,a)=(2,2)$ as shown in Eq.~(\ref{gidqnn_gaussian_givens_rotation}). For 4-qubit Givens rotations, the term $(h,a)=(8,8)$ \cite{arrazola2021universal}.}
{Thus, we set the variance in the Gaussian distribution
$\gamma^2 = \frac{8^2 \times \frac{1}{2} }{48 \times 8^4 \times 24}$,
which matches the $(L,h,a,\epsilon)=(24,8,8,\frac{1}{2})$ case of Theorem~\ref{gidqnn_theorem_related}. }
Similar to the task of the Heisenberg model, we consider both the accurate and the noisy gradient cases, where the variance of noises in the latter case is the constant $0.001$. Moreover, we consider the noisy case with adaptive noises, where the variance of the noise on each partial derivative $\frac{\partial f}{\partial \theta_\ell}\big|_{\bm{\theta}=\bm{\theta}^{(t)}}$ in the $t$-th iteration is
\begin{equation} \label{gidqnn_chem_noise_gamma}
\gamma^2 = \frac{1}{96\times 24 \times 8^2 \|H_{\rm LiH}\|_2^2} \left( \frac{\partial f}{\partial \theta_\ell} \right)^2 \bigg|_{\bm{\theta}=\bm{\theta}^{(t-1)}}.
\end{equation}
The variance in Eq.~(\ref{gidqnn_chem_noise_gamma}) matches the $(L,h,a,\epsilon)=(24,8,8,\frac{1}{2})$ case of Corollary~\ref{gidqnn_corollary_noise} when the VQA is nearly converged:
\begin{equation*}
\frac{\partial f}{\partial \theta_\ell}\big|_{\bm{\theta}=\bm{\theta}^{(t)}} \approx \frac{\partial f}{\partial \theta_\ell}\big|_{\bm{\theta}=\bm{\theta}^{(t-1)}}.
\end{equation*}

\begin{figure}[t]
\centering
\subfigure[]{
  \includegraphics[width=.31\linewidth]{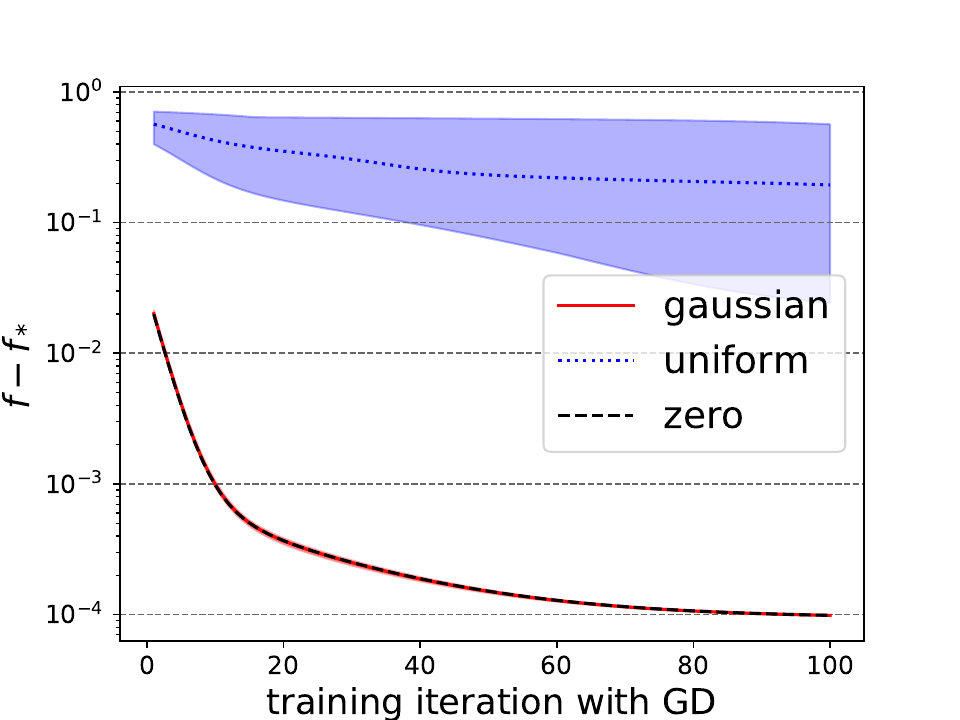}  
  \label{gidqnn_exp_chem_gd_loss_1}
}
\subfigure[]{
  \includegraphics[width=.31\linewidth]{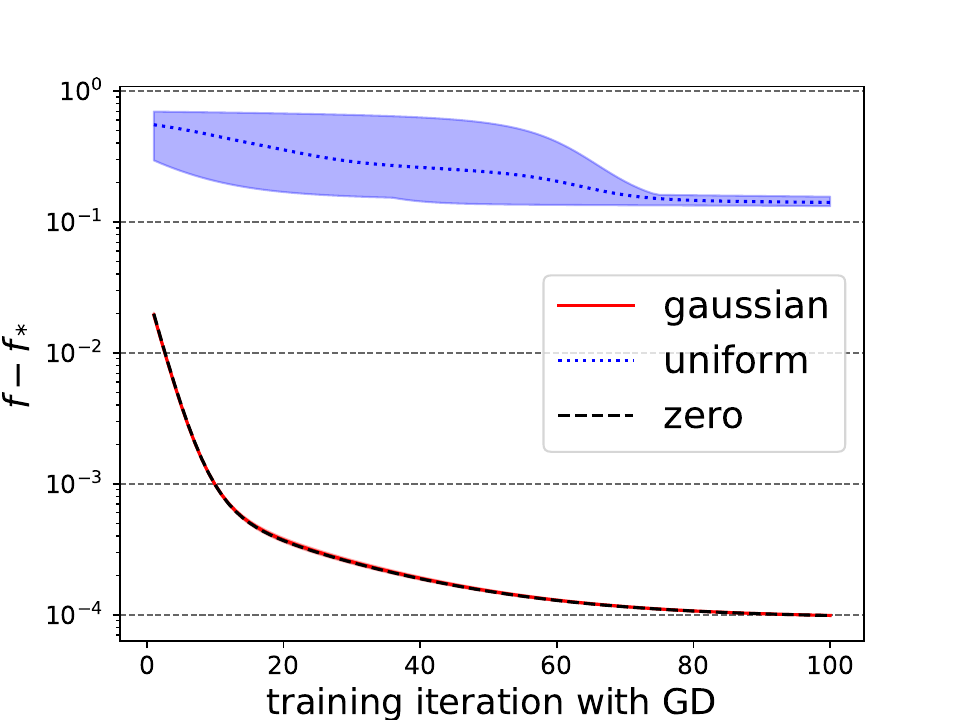}  
  \label{gidqnn_exp_chem_gd_loss_2}
}
\subfigure[]{
  \includegraphics[width=.31\linewidth]{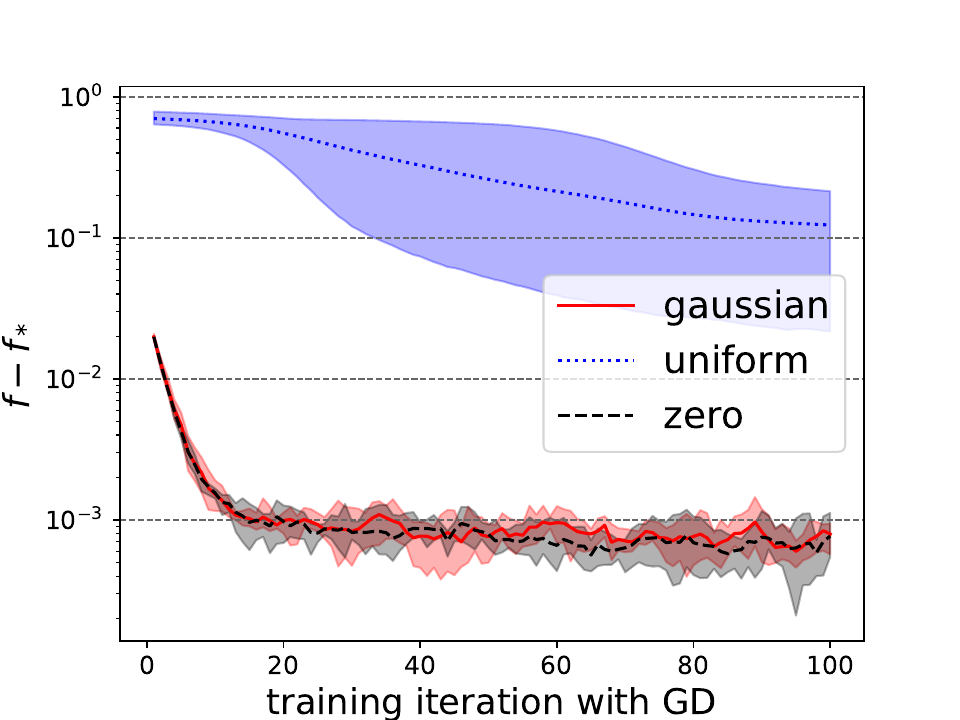}  
  \label{gidqnn_exp_chem_gd_loss_3}
}
\subfigure[]{
  \includegraphics[width=.31\linewidth]{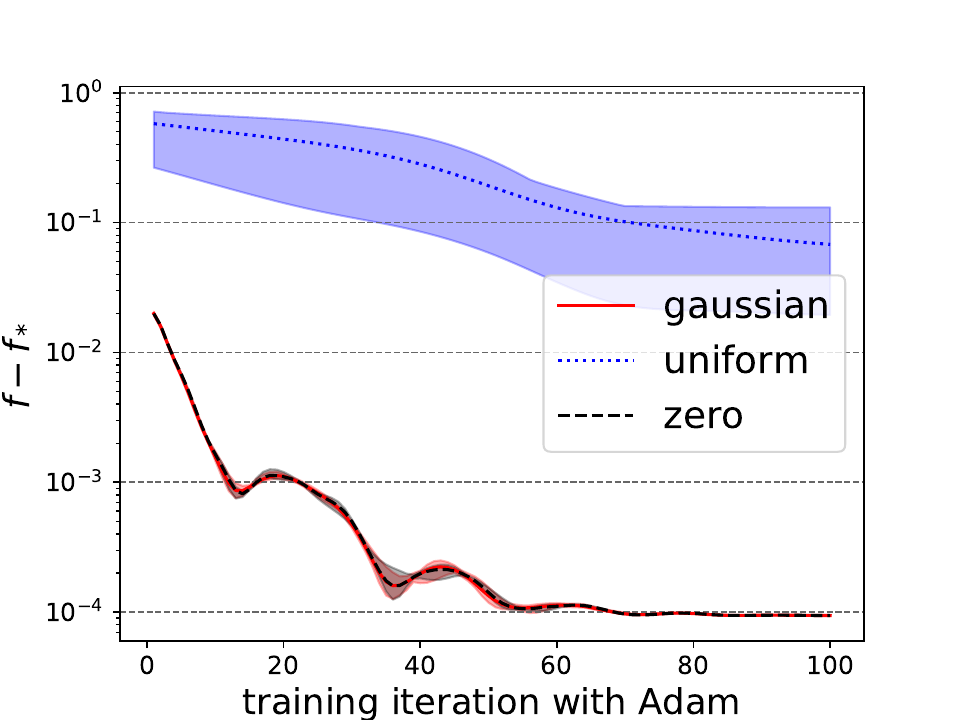}  
  \label{gidqnn_exp_chem_adam_loss_1}
}
\subfigure[]{
  \includegraphics[width=.31\linewidth]{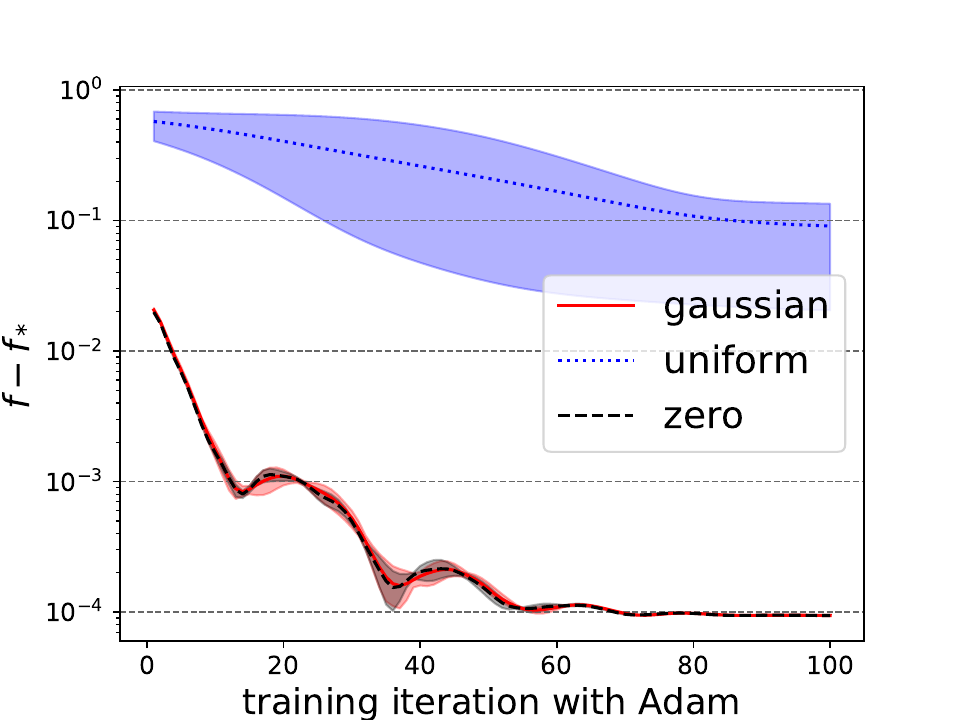}  
  \label{gidqnn_exp_chem_adam_loss_2}
}
\subfigure[]{
  \includegraphics[width=.31\linewidth]{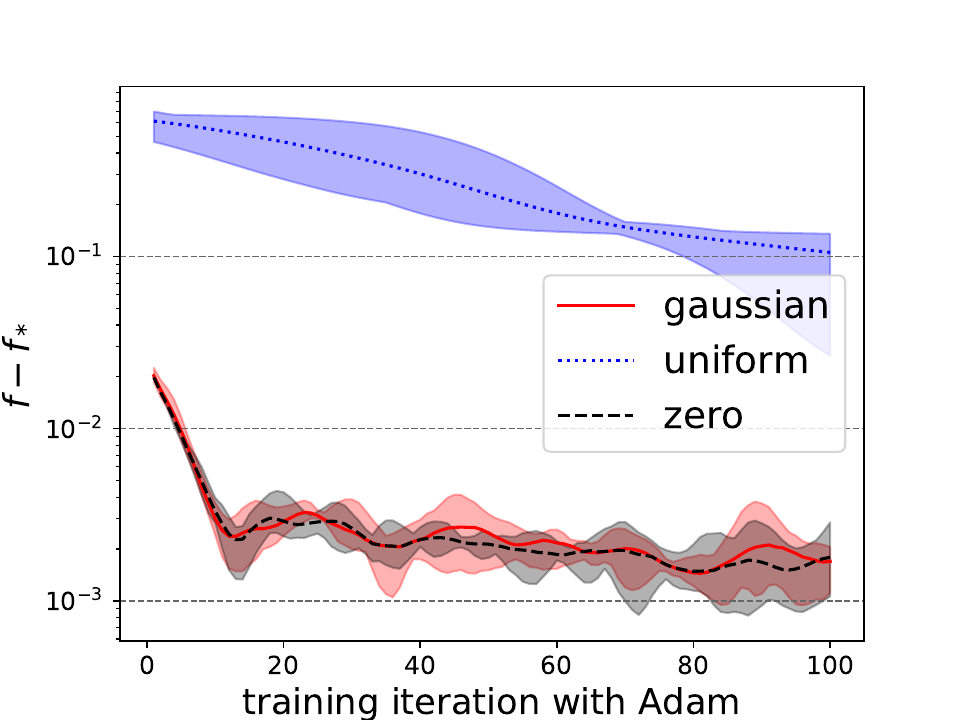}  
  \label{gidqnn_exp_chem_adam_loss_3}
}
\caption{Numerical results of finding the ground energy of the molecule LiH. 
The first and second rows show training results with the gradient descent and the Adam optimizer, respectively. The left, the middle, and the right columns show 
results using accurate gradients, noisy gradients with adaptive-distributed noises, and noisy gradients with constant-distributed noises.
The variance of noises in the middle line (Figures~\ref{gidqnn_exp_chem_gd_loss_2} and \ref{gidqnn_exp_chem_adam_loss_2}) follows Eq.~(\ref{gidqnn_chem_noise_gamma}), while the variance of noises in the right line (Figures~\ref{gidqnn_exp_chem_gd_loss_3} and \ref{gidqnn_exp_chem_adam_loss_3}) is $0.001$.
Each line denotes the average of $5$ rounds of optimizations.}
\label{gidqnn_exp_chem_fig}
\end{figure}

In the experiment, we train VQAs with gradient descent and Adam optimizers. 
Learning rates are set to be $0.1$ and $0.01$ for GD and Adam cases, respectively.
The loss (\ref{gidqnn_chem_loss_eq}) during training iterations is shown in Figure~\ref{gidqnn_exp_chem_fig}. 
Optimization results with gradient descents are shown in Figures~\ref{gidqnn_exp_chem_gd_loss_1}-\ref{gidqnn_exp_chem_gd_loss_3} for the accurate gradient case, the adaptive noisy gradient case, and the noisy gradient case with the constant noise variance $0.001$, respectively. 
The variance of the noise in the adaptive noisy gradient case follows Eq.~(\ref{gidqnn_chem_noise_gamma}).
Figures~\ref{gidqnn_exp_chem_gd_loss_1} and \ref{gidqnn_exp_chem_gd_loss_2} show similar performance, where the loss $f$ with the Gaussian initialization and the zero initialization converge to $10^{-4}$ over the global minimum $f_*$. The loss with the uniform initialization is higher than $10^{-1}$ over the global minimum. 
Figure~\ref{gidqnn_exp_chem_gd_loss_3} shows the training with constantly perturbed gradients. The Gaussian initialization and the zero initialization induce the $10^{-3}$ convergence, while the loss function with the uniform initialization is still higher than $10^{-1}$ over the global minimum.
Figures~\ref{gidqnn_exp_chem_adam_loss_1}-\ref{gidqnn_exp_chem_adam_loss_3} show similar training behaviors using the Adam optimizer. 
Based on Figures~\ref{gidqnn_exp_chem_gd_loss_1}-\ref{gidqnn_exp_chem_adam_loss_3}, the Gaussian initialization and the zero initialization outperform the uniform initialization in all cases.
We notice that optimization with accurate gradients and optimization with adaptive noisy gradients have the same convergence rate and the final value of the loss function, which is better than that using constantly perturbed gradients. {We remark that the number of measurements $T=\O(\frac{1}{{\rm Var}({\rm noise})})$.}
Thus, finite number of measurements with the noise~(\ref{gidqnn_chem_noise_gamma}) for gradient estimation is enough to achieve the performance of accurate gradients, which verifies Theorem~\ref{gidqnn_theorem_related} and Corollary~\ref{gidqnn_corollary_noise}.

\section{Conclusions}
\label{gidqnn_conclu}

In this work, we provide a Gaussian initialization strategy for solving the vanishing gradient problem of deep variational quantum algorithms. We prove that the gradient norm of $N$-qubit quantum circuits with $L$ layers could be lower bounded by $\poly(N,L)^{-1}$, if the parameter is sampled independently from the Gaussian distribution with the variance $\O(\frac{1}{L})$. Our results hold for both the local and the global observable cases, and could be generalized to VQAs employing correlated parameterized gates. {Compared to the local case, the bound for the global case depends on the gradient performance at the zero point. Further analysis towards the zero-case-free bound could be investigated as future directions.} Moreover, we show that the necessary number of measurements, which scales $\O(\frac{L}{\epsilon})$, suffices for estimating the gradient during the training. 
We {provide} numerical experiments on finding the ground energy and state of the Heisenberg model and the LiH molecule, respectively. Experiments show that the proposed Gaussian initialization method outperforms the uniform initialization method with a faster convergence rate, and the training using gradients with adaptive noises shows the same convergence compared to the training using noiseless gradients.

\bibliography{reference.bib}
\bibliographystyle{unsrt}

\section*{Checklist}

\begin{enumerate}

\item For all authors...
\begin{enumerate}
  \item Do the main claims made in the abstract and introduction accurately reflect the paper's contributions and scope?
    \answerYes{}
  \item Did you describe the limitations of your work?
    \answerYes{}
  \item Did you discuss any potential negative societal impacts of your work?
    \answerNA{}
  \item Have you read the ethics review guidelines and ensured that your paper conforms to them?
    \answerYes{}
\end{enumerate}

\item If you are including theoretical results...
\begin{enumerate}
  \item Did you state the full set of assumptions of all theoretical results?
    \answerYes{}
        \item Did you include complete proofs of all theoretical results?
    \answerYes{}
\end{enumerate}

\item If you ran experiments...
\begin{enumerate}
  \item Did you include the code, data, and instructions needed to reproduce the main experimental results (either in the supplemental material or as a URL)?
    \answerYes{}
  \item Did you specify all the training details (e.g., data splits, hyperparameters, how they were chosen)?
    \answerYes{}
        \item Did you report error bars (e.g., with respect to the random seed after running experiments multiple times)?
    \answerYes{We run each experiment independently for $5$ times.}
        \item Did you include the total amount of compute and the type of resources used (e.g., type of GPUs, internal cluster, or cloud provider)?
    \answerNo{Each experiment finishes quickly on CPUs with a few hours.}
\end{enumerate}

\item If you are using existing assets (e.g., code, data, models) or curating/releasing new assets...
\begin{enumerate}
  \item If your work uses existing assets, did you cite the creators?
    \answerNA{}
  \item Did you mention the license of the assets?
    \answerNA{}
  \item Did you include any new assets either in the supplemental material or as a URL?
    \answerNA{}
  \item Did you discuss whether and how consent was obtained from people whose data you're using/curating?
    \answerNA{}
  \item Did you discuss whether the data you are using/curating contains personally identifiable information or offensive content?
    \answerNA{}
\end{enumerate}

\item If you used crowdsourcing or conducted research with human subjects...
\begin{enumerate}
  \item Did you include the full text of instructions given to participants and screenshots, if applicable?
    \answerNA{}
  \item Did you describe any potential participant risks, with links to Institutional Review Board (IRB) approvals, if applicable?
    \answerNA{}
  \item Did you include the estimated hourly wage paid to participants and the total amount spent on participant compensation?
    \answerNA{}
\end{enumerate}

\end{enumerate}

\newpage

\appendix
This supplementary material contains four parts:
\begin{itemize}
    \item 
    Section~\ref{gidqnn_app_experiment} provides some additional experiment results.
    \item
    Section~\ref{gidqnn_app_technicals} provides some technical lemmas which are useful for proving main theorems in this work.
    \item
    Section~\ref{gidqnn_app_bp_local} provides the proof of Theorem~4.1.
    \item
    Section~\ref{gidqnn_app_bp_global_related} provides the proof of Theorem~4.2.
\end{itemize}

\section{Additional Experiments}
\label{gidqnn_app_experiment}

\subsection{Heisenberg model}
\label{gidqnn_app_experiment_xxx}

In this section, we introduce additional experiment results toward finding the ground state energy of the Heisenberg model with different circuit depths and optimizers. We follow simulation details in the main text. 

First, we consider the effect of different circuit depths and Gaussian initializations with different variances.
The loss function has the formulation Eq.~(10) with the number of qubits $N=15$. We adopt the ansatz circuit~1 with $L_1 \in \{8,10,12\}$ layers of $R_Y R_X CZ$ blocks, which correspond with $L \in \{14, 18, 22\}$ case of Theorem~4.1, respectively. 
In the experiment, we train VQAs using gradient descent with the learning rate $0.01$. 
Since the estimation of gradients on real quantum computers could be perturbed by statistical measurement noise, we compare optimizations using accurate and noisy gradients. For the latter case, we set the variance of measurement noises to be $0.01$.
We train different Gaussian initialized VQAs with variances $\{0.01\gamma, 0.1\gamma, \gamma, 10\gamma, 100\gamma\}$, where the value $\gamma$ follows the formulation in Theorem~4.1.

We illustrate results in Figures~\ref{gidqnn_app_exp_heisenberg_fig_1} and \ref{gidqnn_app_exp_heisenberg_fig_2}, which correspond to the noiseless and the noisy case, respectively. As show in figures of the loss during optimizations, the Gaussian initialization with the variance $\gamma$ outperforms other Gaussian initializations with faster convergence rates. Gaussian initializations with small variances $\{0.01\gamma, 0.1\gamma\}$ have similar performances with the zero initialization for the noisy training case, and Gaussian initializations with large variances $\{10\gamma, 100\gamma\}$ behave similarly with the uniform initialization presented in the main text. Moreover, circuits initialized with larger variances $\{10\gamma, 100\gamma\}$ need more iterations to converge when the depth increases, while circuits with variances $\{0.01\gamma, 0.1\gamma, \gamma\}$ show similar convergence rates for different depths.

Next, we compare different initializations with other optimizers, i.e., the gradient descent with momentum~\cite{Sutskever2013momentum}, the Nesterov accelerated gradient (NAG)~\cite{nesterov27method}, and the adaptive gradient (AdaGrad)~\cite{John2011adagrad}. We follow the loss function (10) with $(N,L)=(15,18)$ and $(N,L)=(18.38)$. The learning rate and the noise are the same as that in the experiment considering different Gaussian variances. We illustrate results in Figures~\ref{gidqnn_app_exp_heisenberg_fig_3} and \ref{gidqnn_app_exp_heisenberg_fig_4}. As shown in Figure~\ref{gidqnn_app_exp_heisenberg_fig_3} and Figure~1 in the main text, the performance of GD with momentum and the NAG is similar to that of the Adam optimizer, while the performance of the AdaGrad is similar to the GD optimizer. By comparing Figures~\ref{gidqnn_app_exp_heisenberg_fig_3} and \ref{gidqnn_app_exp_heisenberg_fig_4}, we notice that uniformly initialized circuits converge slower when the qubit number and the circuit depth increase. 

\begin{figure}[t]
\centering
\subfigure[]{
  \includegraphics[width=.31\linewidth]{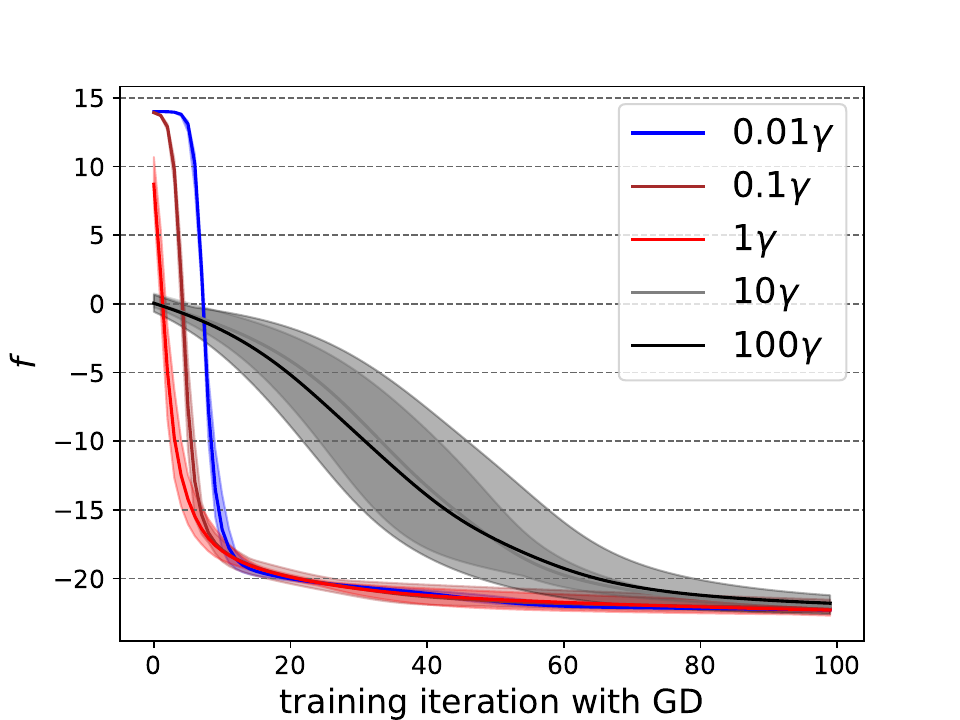}  
  \label{gidqnn_app_exp_heisenberg_multi_gd_loss_240}
}
 \subfigure[]{
  \includegraphics[width=.31\linewidth]{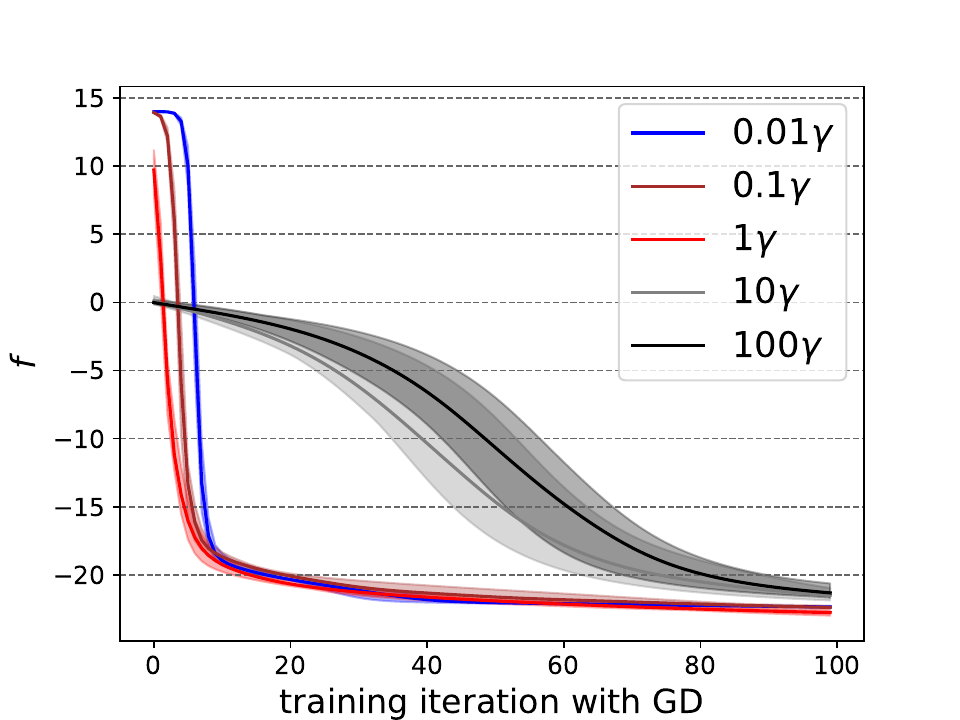}  
  \label{gidqnn_app_exp_heisenberg_multi_gd_loss_300}
}
\subfigure[]{
  \includegraphics[width=.31\linewidth]{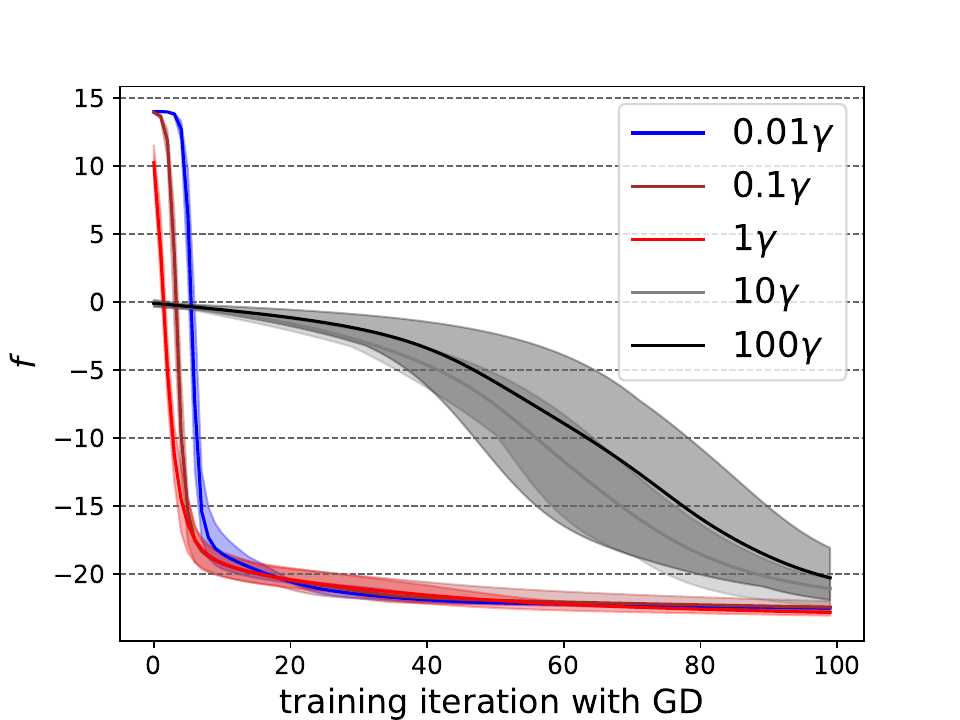}  
  \label{gidqnn_app_exp_heisenberg_multi_gd_loss_360}
}
 \subfigure[]{  
  \includegraphics[width=.31\linewidth]{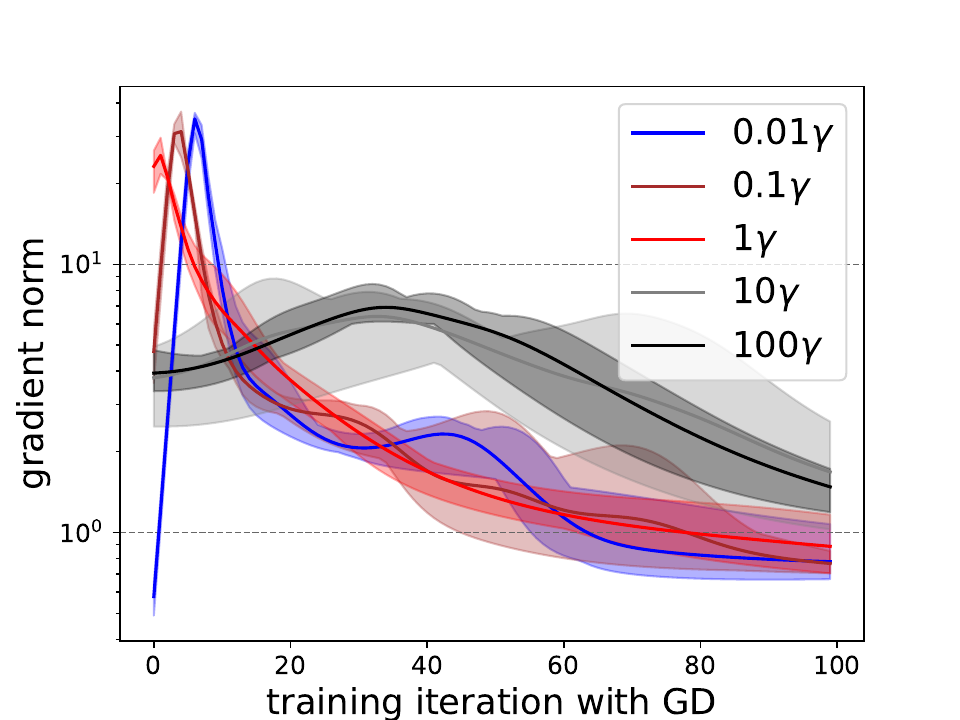}  
  \label{gidqnn_app_exp_heisenberg_multi_gd_gradnorm_240}
}
\subfigure[]{
  \includegraphics[width=.31\linewidth]{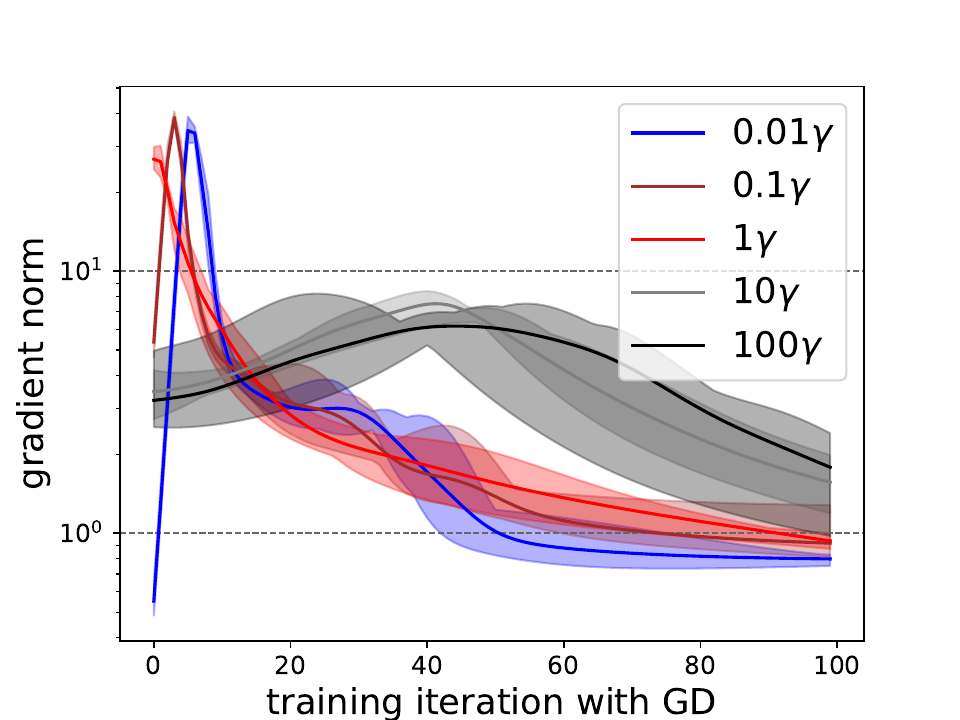}  
  \label{gidqnn_app_exp_heisenberg_multi_gd_gradnorm_300}
}
\subfigure[]{
  \includegraphics[width=.31\linewidth]{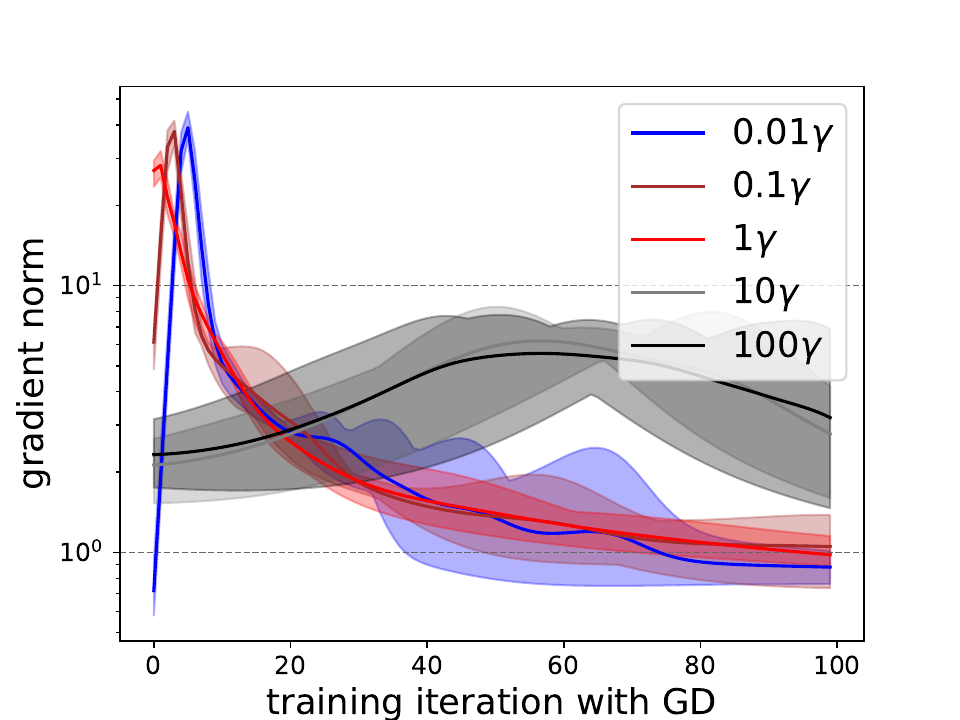}  
  \label{gidqnn_app_exp_heisenberg_multi_gd_gradnorm_360}
}
\caption{Numerical results of finding the ground state energy of the Heisenberg model using the noiseless gradient descent. 
Figures~\ref{gidqnn_app_exp_heisenberg_multi_gd_loss_240}-\ref{gidqnn_app_exp_heisenberg_multi_gd_loss_360} show the loss during optimizations for different $L \in \{14,18,22\}$ using the circuit~1 in the main text. For each $L$, we adopt Gaussian initializaions with different variances $0.01\gamma,0.1\gamma,\gamma,10\gamma,100\gamma$, where the value $\gamma$ follows the formulation in Theorem~4.1. 
Figures~\ref{gidqnn_app_exp_heisenberg_multi_gd_gradnorm_240}-\ref{gidqnn_app_exp_heisenberg_multi_gd_gradnorm_360} show the $\ell_2$ norm of corresponding gradients during the optimization. Each line illustrates the average of $5$ rounds of independent experiments.}
\label{gidqnn_app_exp_heisenberg_fig_1}
\end{figure}

\begin{figure}[t]
\centering
\subfigure[]{
  \includegraphics[width=.31\linewidth]{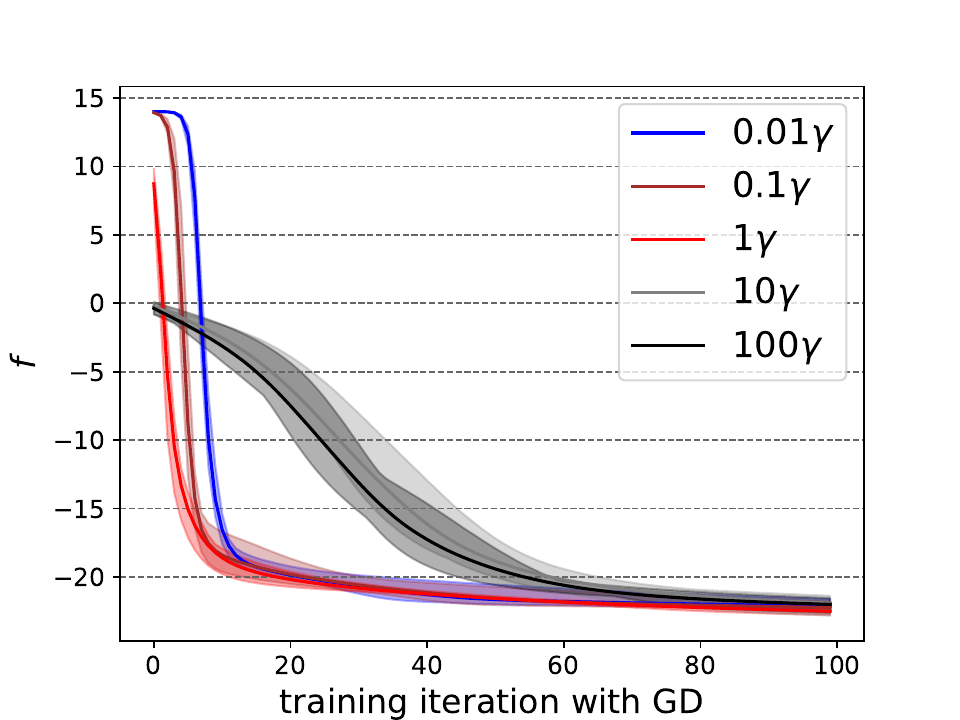}  
  \label{gidqnn_app_exp_heisenberg_multi_gd_loss_240_noisy}
}
 \subfigure[]{
  \includegraphics[width=.31\linewidth]{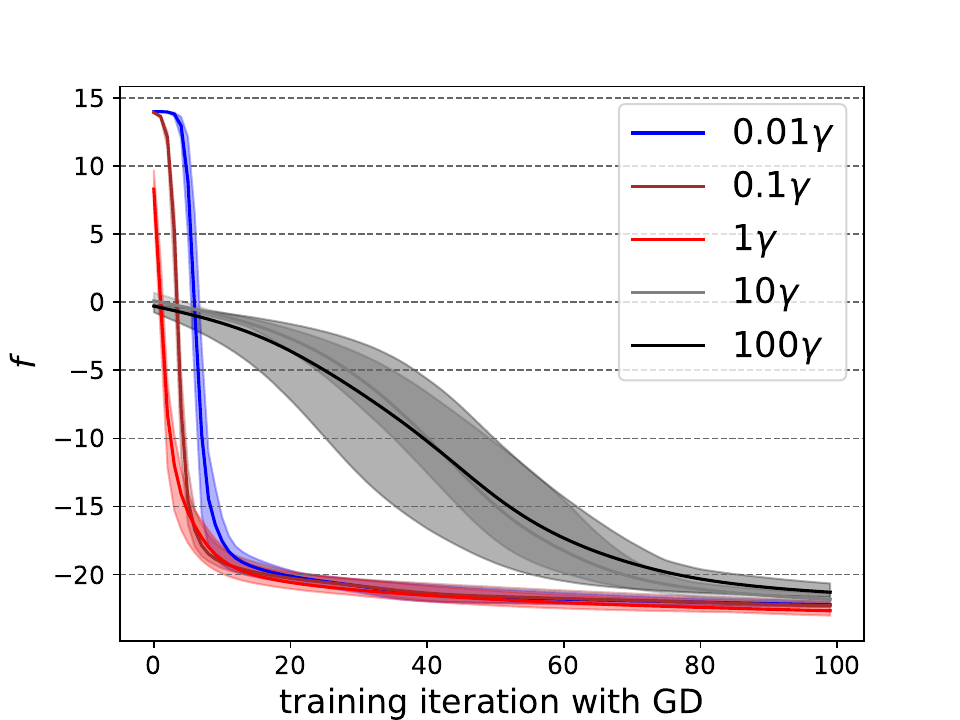}  
  \label{gidqnn_app_exp_heisenberg_multi_gd_loss_300_noisy}
}
\subfigure[]{
  \includegraphics[width=.31\linewidth]{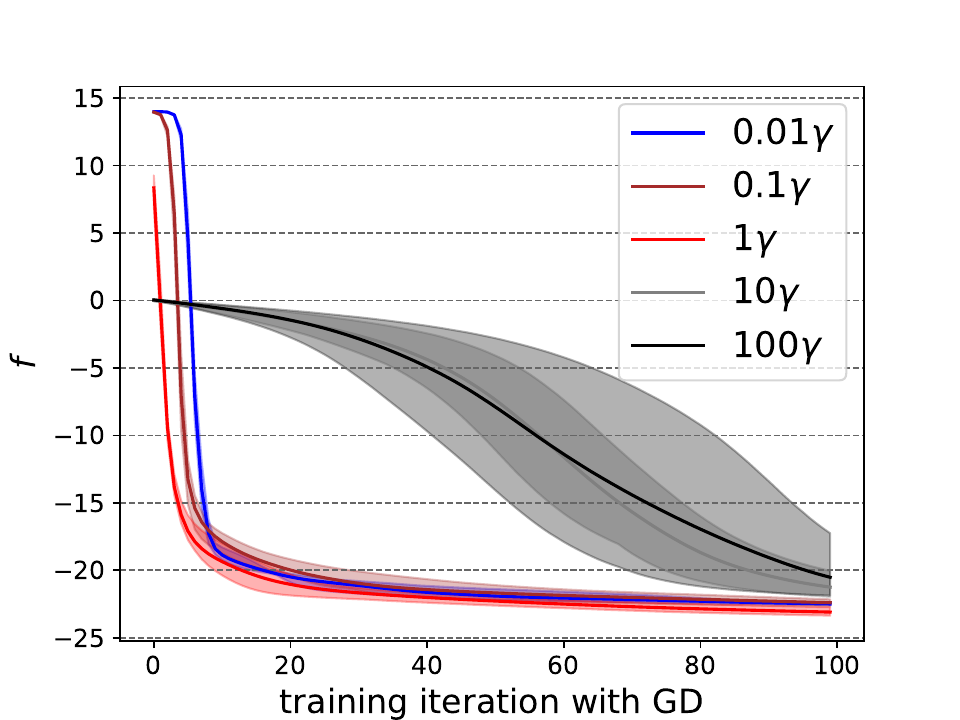}  
  \label{gidqnn_app_exp_heisenberg_multi_gd_loss_360_noisy}
}
 \subfigure[]{  
  \includegraphics[width=.31\linewidth]{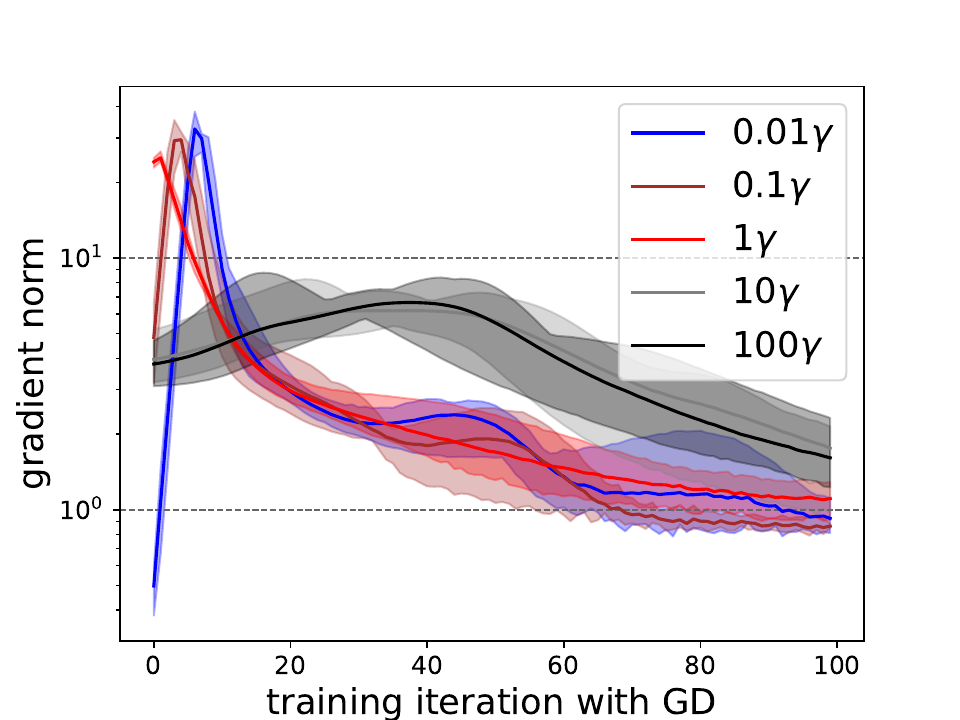}  
  \label{gidqnn_app_exp_heisenberg_multi_gd_gradnorm_240_noisy}
}
\subfigure[]{
  \includegraphics[width=.31\linewidth]{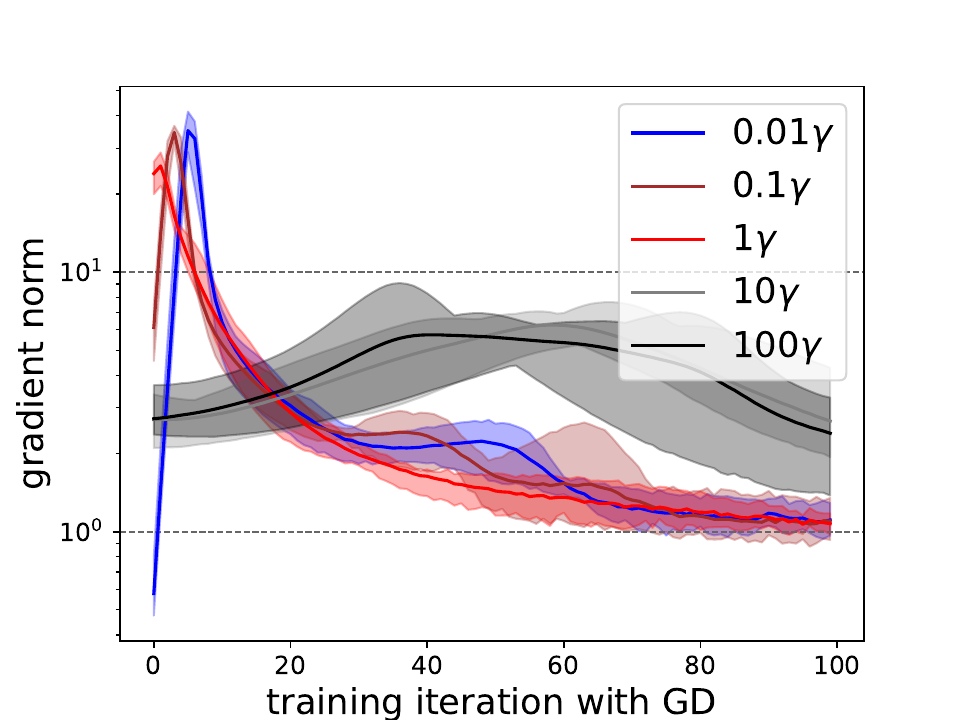}  
  \label{gidqnn_app_exp_heisenberg_multi_gd_gradnorm_300_noisy}
}
\subfigure[]{
  \includegraphics[width=.31\linewidth]{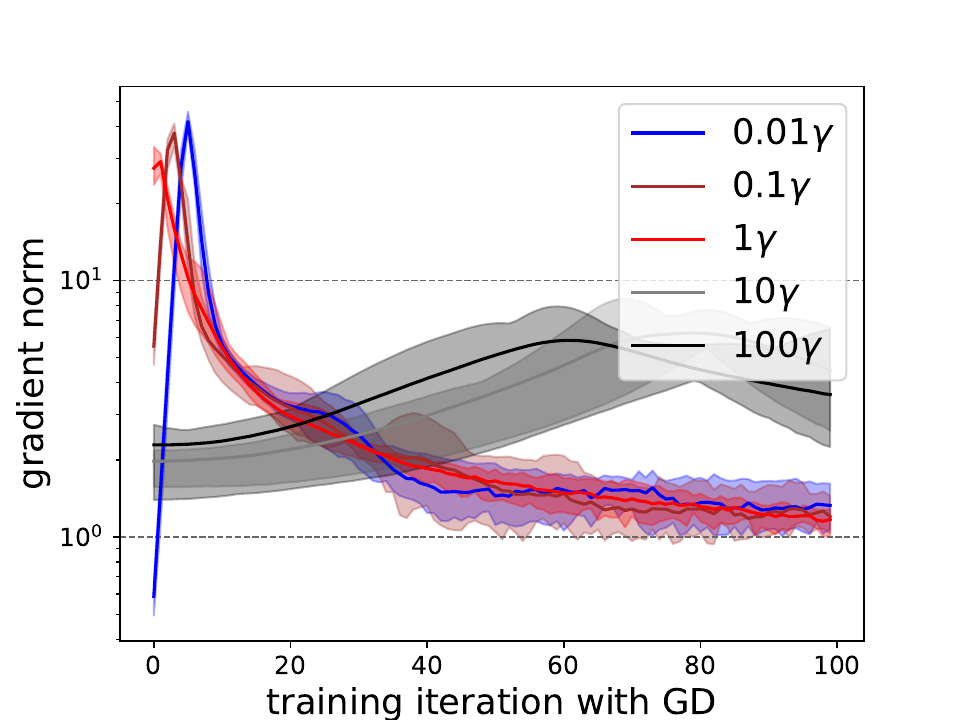}  
  \label{gidqnn_app_exp_heisenberg_multi_gd_gradnorm_360_noisy}
}
\caption{
Numerical results of finding the ground state energy of the Heisenberg model using the noisy gradient descent. 
Figures~\ref{gidqnn_app_exp_heisenberg_multi_gd_loss_240_noisy}-\ref{gidqnn_app_exp_heisenberg_multi_gd_loss_360_noisy} show the loss during optimizations for different $L \in \{14,18,22\}$ using the circuit~1 in the main text. For each $L$, we adopt Gaussian initializaions with different variances $0.01\gamma,0.1\gamma,\gamma,10\gamma,100\gamma$, where the value $\gamma$ follows the formulation in Theorem~4.1. 
Figures~\ref{gidqnn_app_exp_heisenberg_multi_gd_gradnorm_240_noisy}-\ref{gidqnn_app_exp_heisenberg_multi_gd_gradnorm_360_noisy} show the $\ell_2$ norm of corresponding gradients during the optimization. Each line illustrates the average of $5$ rounds of independent experiments.
}
\label{gidqnn_app_exp_heisenberg_fig_2}
\end{figure}

\subsection{Quantum chemistry}
\label{gidqnn_app_experiment_chem}

In this section, we introduce additional experiment results toward finding the ground state energy of the Heisenberg model with different circuit depths. We repeat the LiH task in the main text with the depth $L \in \{24,48,72\}$ by stacking the circuit $V_{\rm Givens}$ in Eq.~(11). The noise setting follows the adaptive noise with the variance in Eq.~(12). We adopt gradient descent and the Adam optimizer with learning rates 0.1 and 0.01, respectively. The result is shown in Figure~\ref{gidqnn_app_exp_lih_fig_1}. For the gradient descent case, the convergence rate of the loss function increases when the circuit depth grows. For the Adam case, circuits with different depths show similar convergence speeds.

\begin{figure}[t]
\centering
\subfigure[]{
  \includegraphics[width=.31\linewidth]{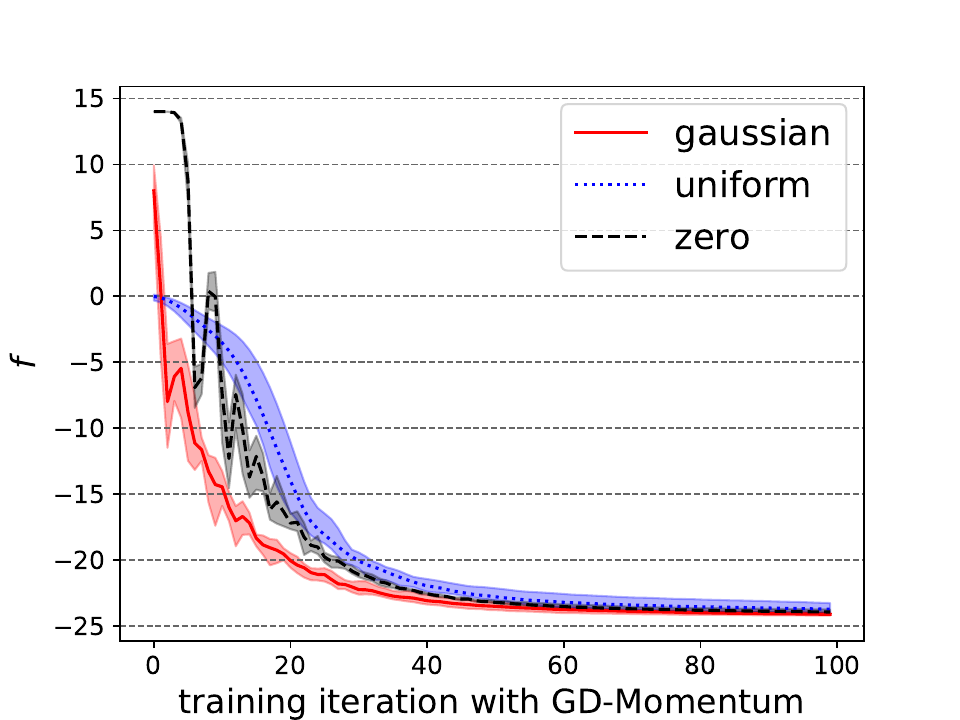}  
  \label{gidqnn_app_exp_heisenberg_gd_momentum_loss_300_noisy}
}
 \subfigure[]{
  \includegraphics[width=.31\linewidth]{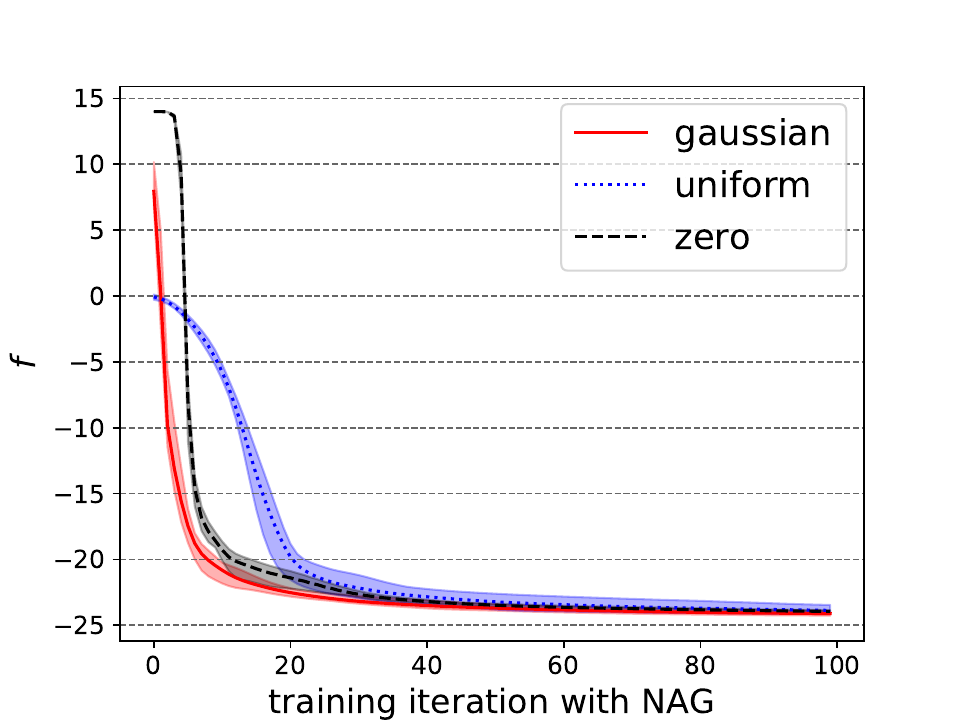}  
  \label{gidqnn_app_exp_heisenberg_nag_loss_300_noisy}
}
\subfigure[]{
  \includegraphics[width=.31\linewidth]{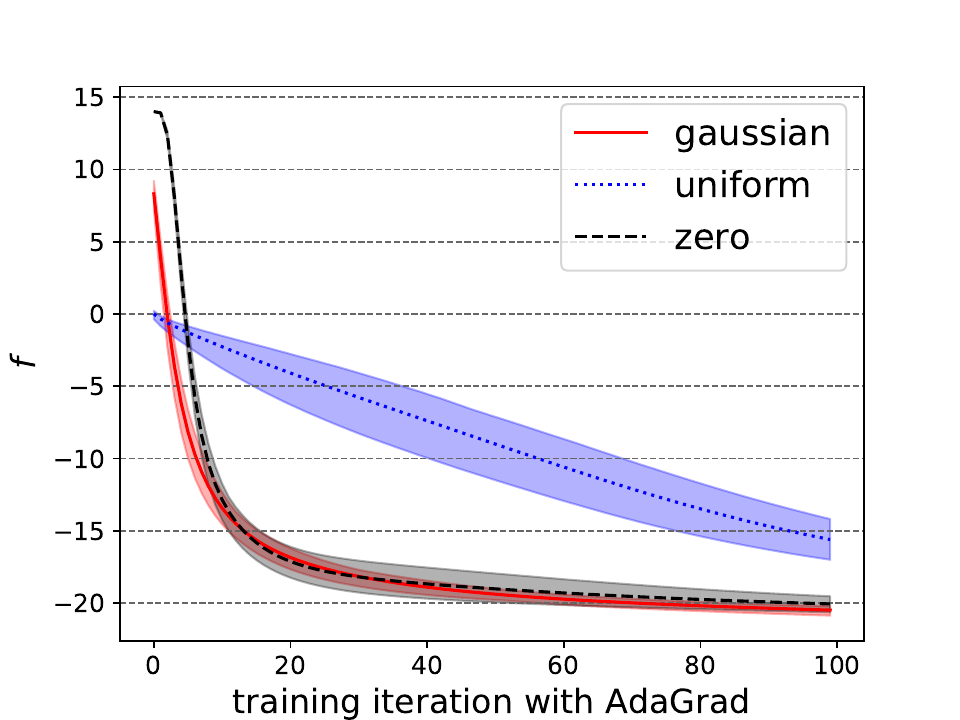}  
  \label{gidqnn_app_exp_heisenberg_adagrad_loss_300_noisy}
}
 \subfigure[]{  
  \includegraphics[width=.31\linewidth]{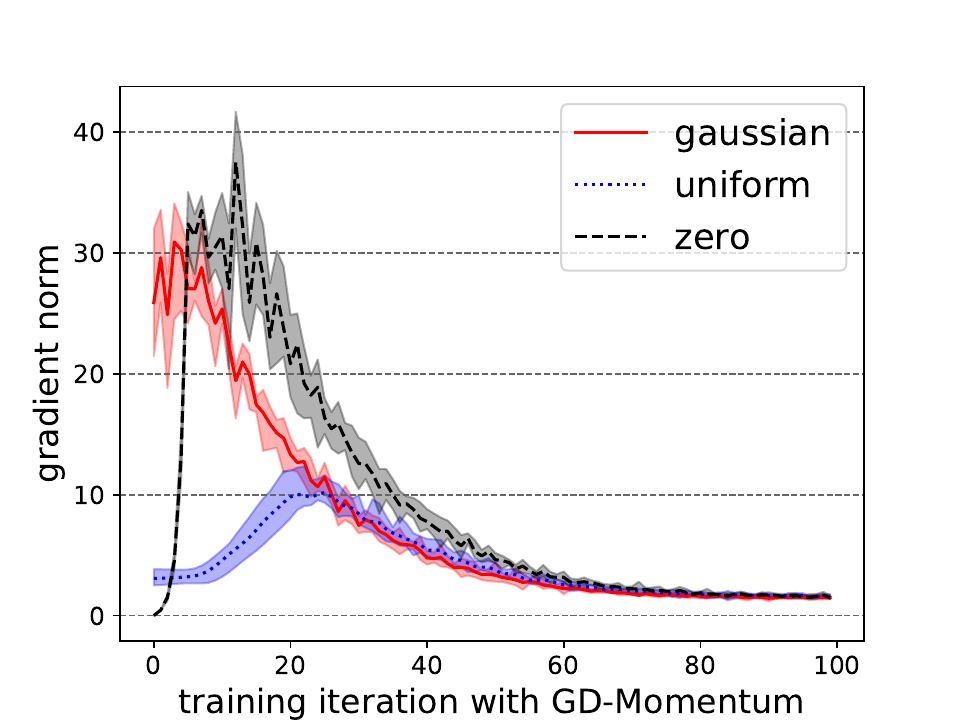}
  \label{gidqnn_app_exp_heisenberg_gd_momentum_grad_300_noisy}
}
\subfigure[]{
  \includegraphics[width=.31\linewidth]{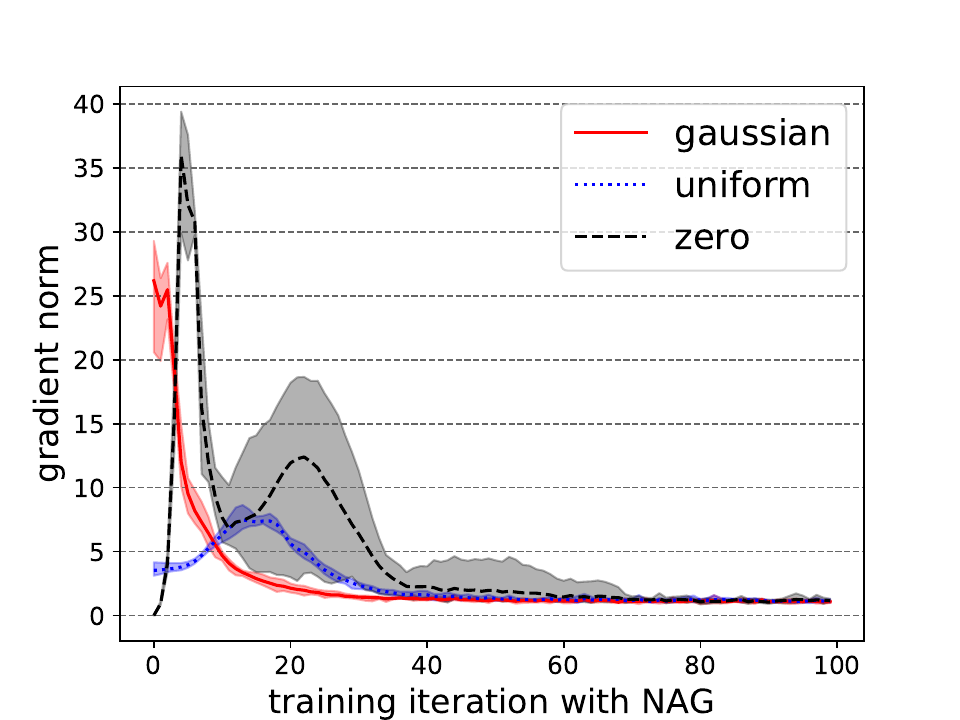}  
  \label{gidqnn_app_exp_heisenberg_nag_grad_300_noisy}
}
\subfigure[]{
  \includegraphics[width=.31\linewidth]{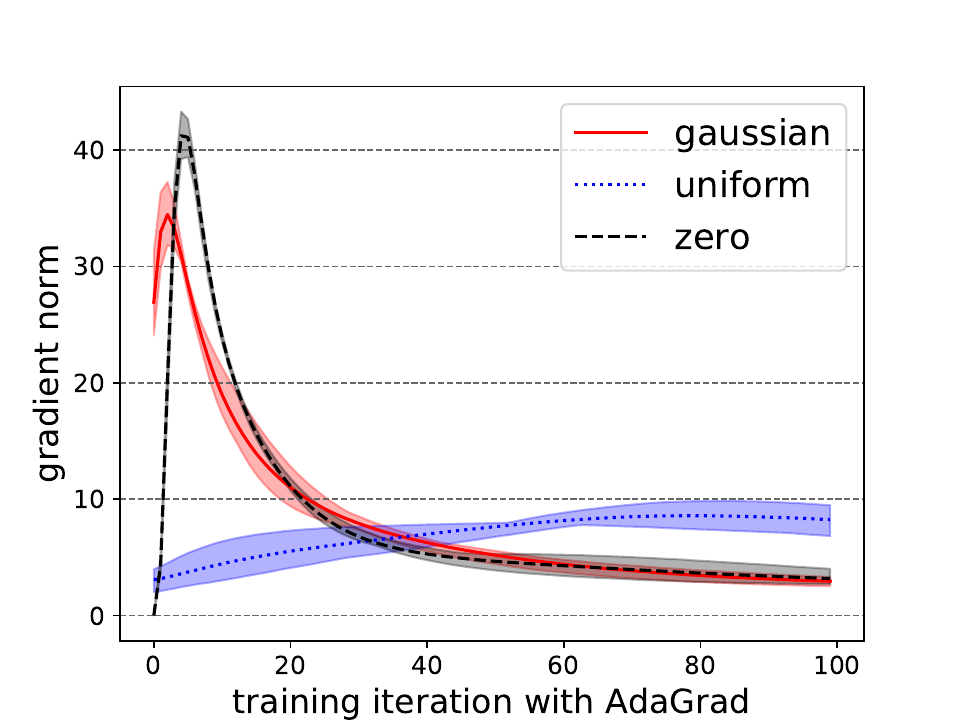}  
  \label{gidqnn_app_exp_heisenberg_adagrad_grad_300_noisy}
}
\caption{
Numerical results of finding the ground state energy of the Heisenberg model with qubits $N=15$ (noisy case). 
Figures~\ref{gidqnn_app_exp_heisenberg_gd_momentum_loss_300_noisy}-\ref{gidqnn_app_exp_heisenberg_adagrad_loss_300_noisy} show the loss during optimizations using the gradient descent with momentum, the Nesterov accelerated gradient (NAG), and the adaptive gradient (AdaGrad), respectively. 
Figures~\ref{gidqnn_app_exp_heisenberg_gd_momentum_grad_300_noisy}-\ref{gidqnn_app_exp_heisenberg_adagrad_grad_300_noisy} show the $\ell_2$ norm of gradients during the optimization. Each line illustrates the average of $5$ rounds of independent experiments.
}
\label{gidqnn_app_exp_heisenberg_fig_3}
\end{figure}

\begin{figure}[h]
\centering
\subfigure[]{
  \includegraphics[width=.31\linewidth]{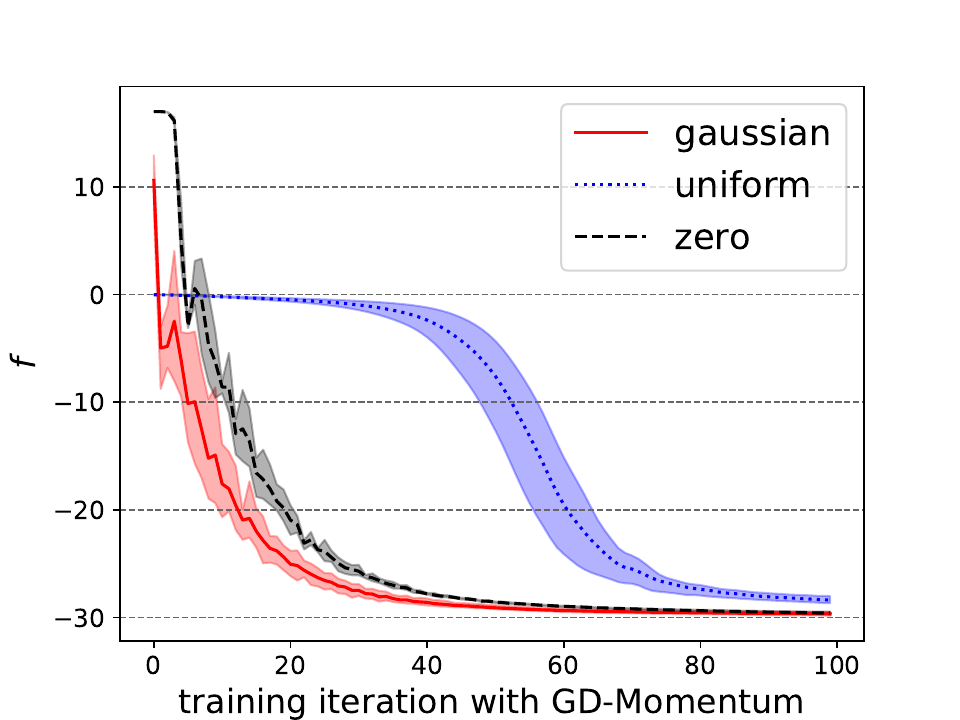}  
  \label{gidqnn_app_exp_heisenberg_gd_momentum_loss_720_noisy}
}
 \subfigure[]{
  \includegraphics[width=.31\linewidth]{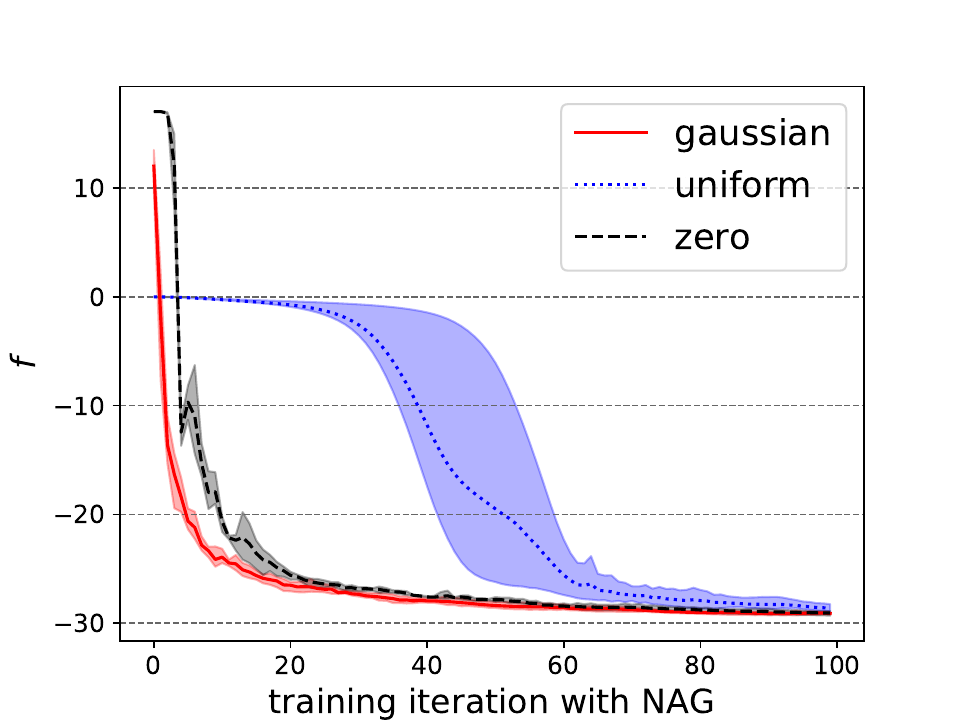}  
  \label{gidqnn_app_exp_heisenberg_nag_loss_720_noisy}
}
\subfigure[]{
  \includegraphics[width=.31\linewidth]{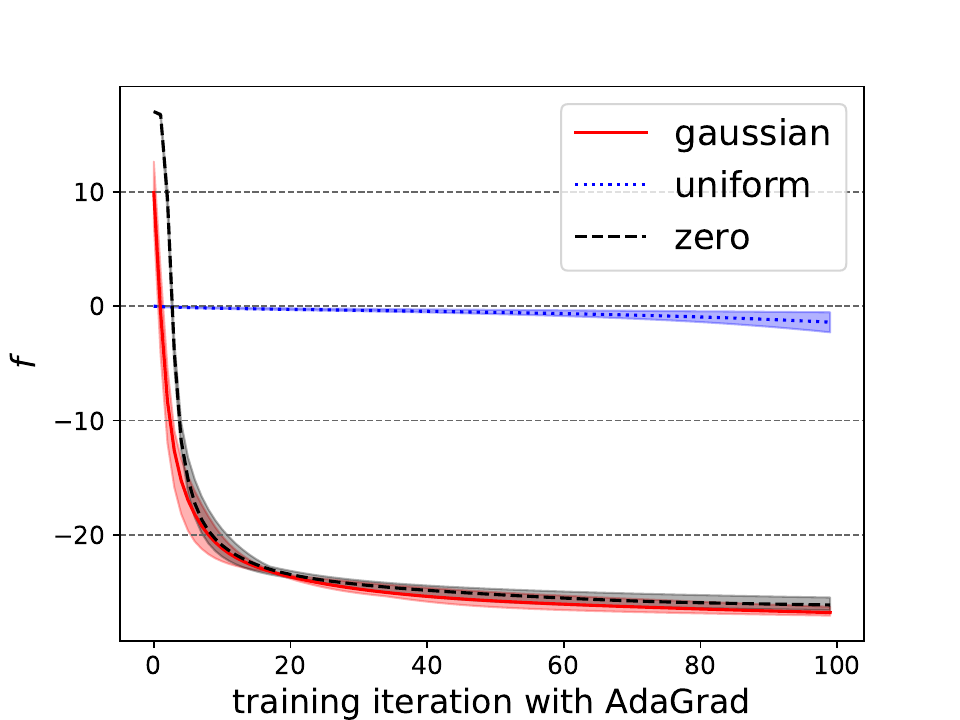}  
  \label{gidqnn_app_exp_heisenberg_adagrad_loss_720_noisy}
}
 \subfigure[]{  
  \includegraphics[width=.31\linewidth]{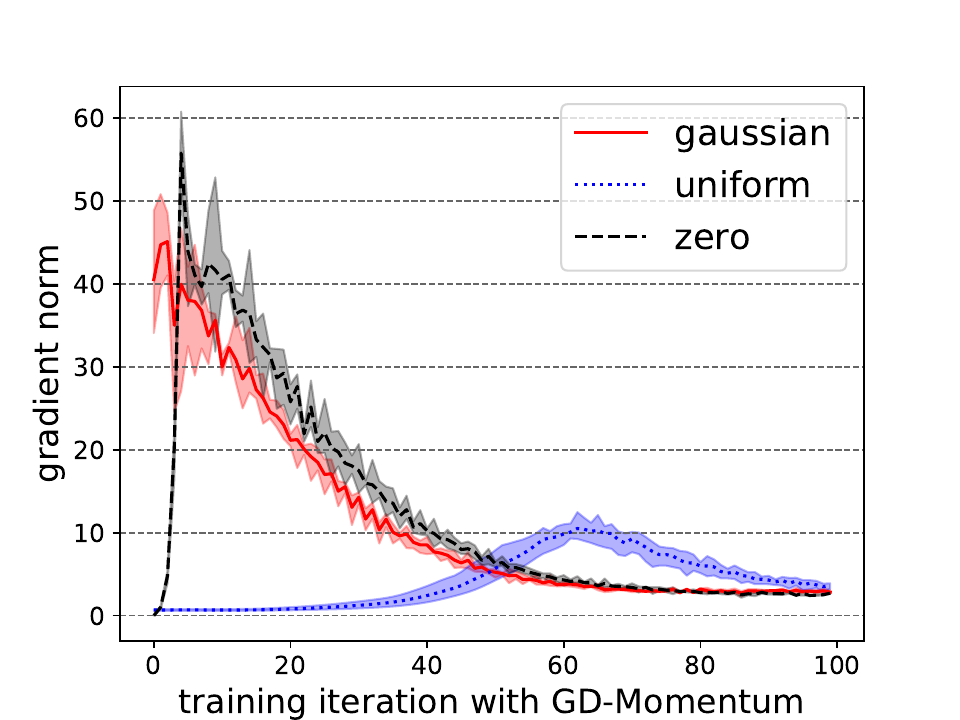}  
  \label{gidqnn_app_exp_heisenberg_gd_momentum_grad_720_noisy}
}
\subfigure[]{
  \includegraphics[width=.31\linewidth]{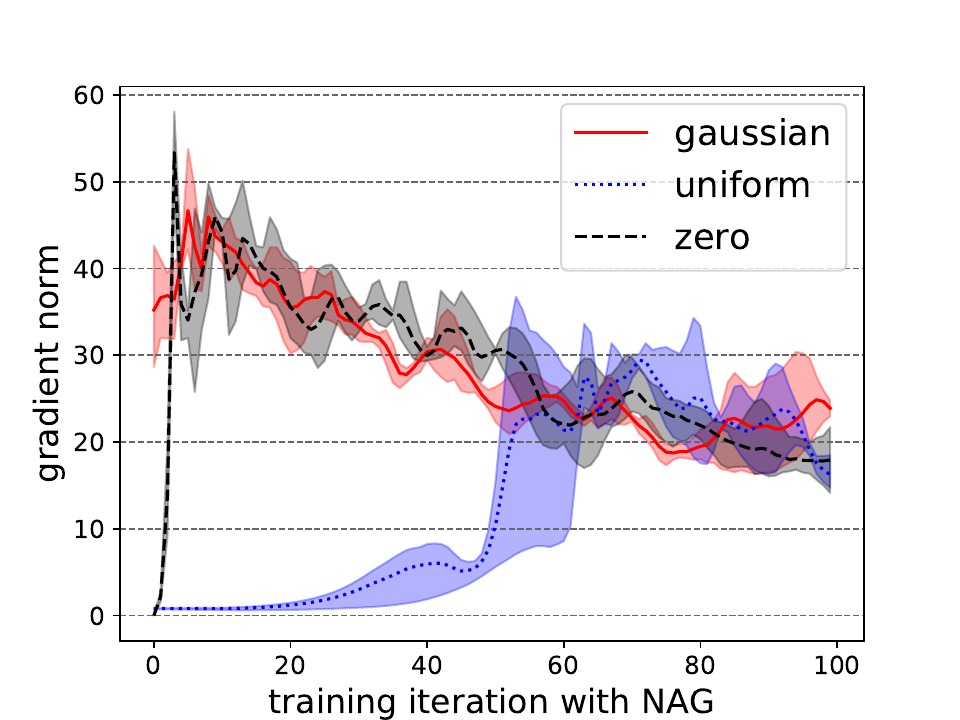}  
  \label{gidqnn_app_exp_heisenberg_nag_grad_720_noisy}
}
\subfigure[]{
  \includegraphics[width=.31\linewidth]{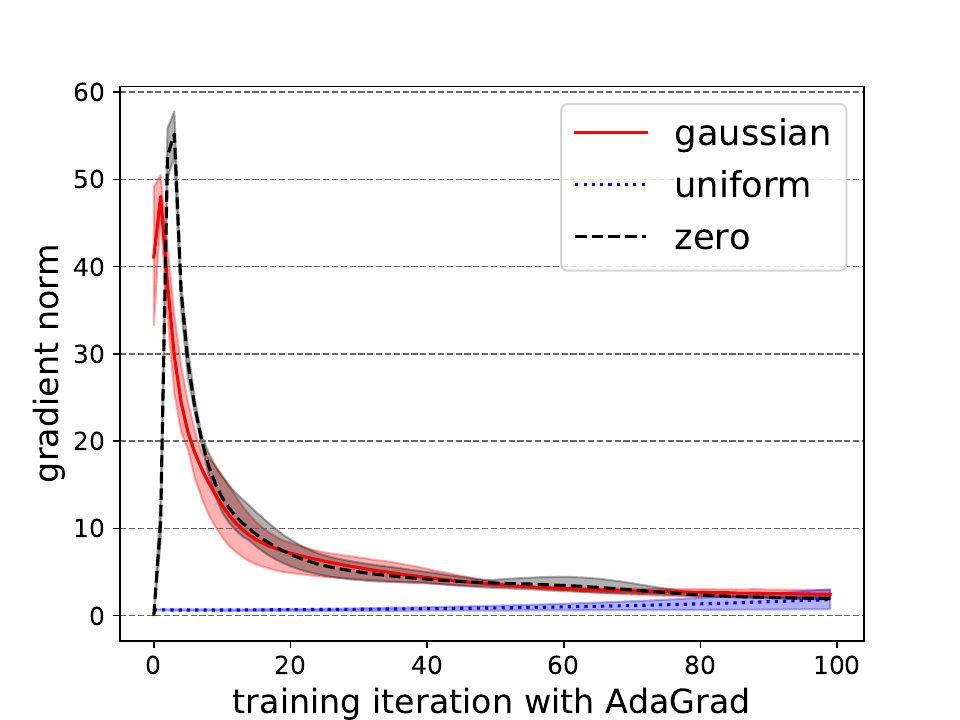}  
  \label{gidqnn_app_exp_heisenberg_adagrad_grad_720_noisy}
}
\caption{
Numerical results of finding the ground state energy of the Heisenberg model with qubits $N=18$ (noisy case). 
Figures~\ref{gidqnn_app_exp_heisenberg_gd_momentum_loss_720_noisy}-\ref{gidqnn_app_exp_heisenberg_adagrad_loss_720_noisy} show the loss during optimizations using the gradient descent with momentum, the Nesterov accelerated gradient (NAG), and the adaptive gradient (AdaGrad), respectively. 
Figures~\ref{gidqnn_app_exp_heisenberg_gd_momentum_grad_720_noisy}-\ref{gidqnn_app_exp_heisenberg_adagrad_grad_720_noisy} show the $\ell_2$ norm of gradients during the optimization. Each line illustrates the average of $3$ rounds of independent experiments.
}
\label{gidqnn_app_exp_heisenberg_fig_4}
\end{figure}

\begin{figure}[h]
\centering
\subfigure[]{
  \includegraphics[width=.23\linewidth]{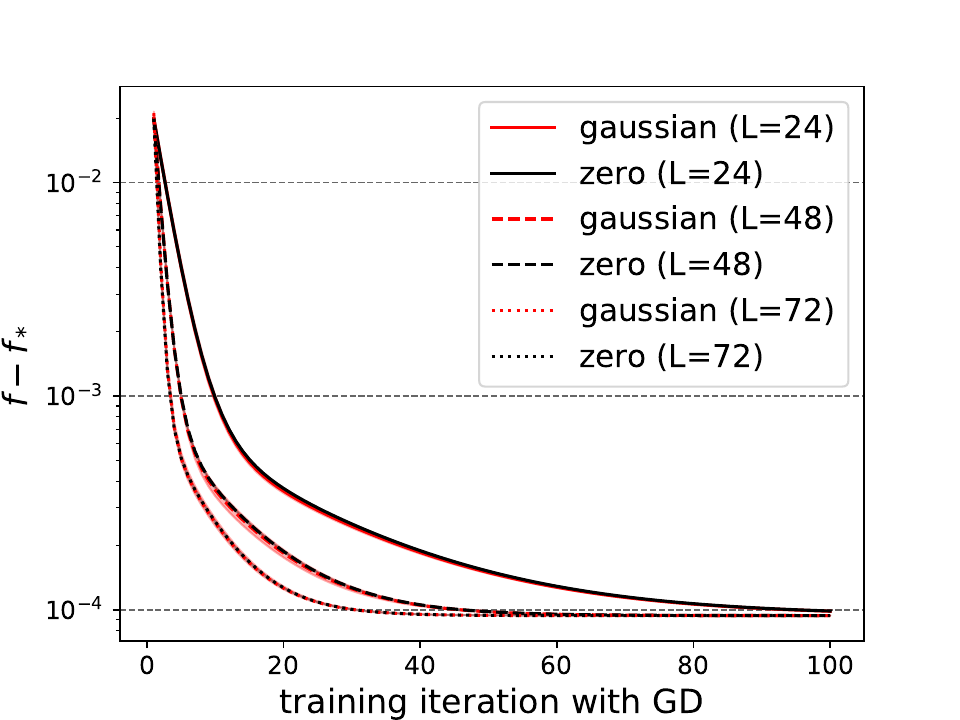}  
  \label{gidqnn_app_exp_lih_gd_loss_noisy}
}
 \subfigure[]{
  \includegraphics[width=.23\linewidth]{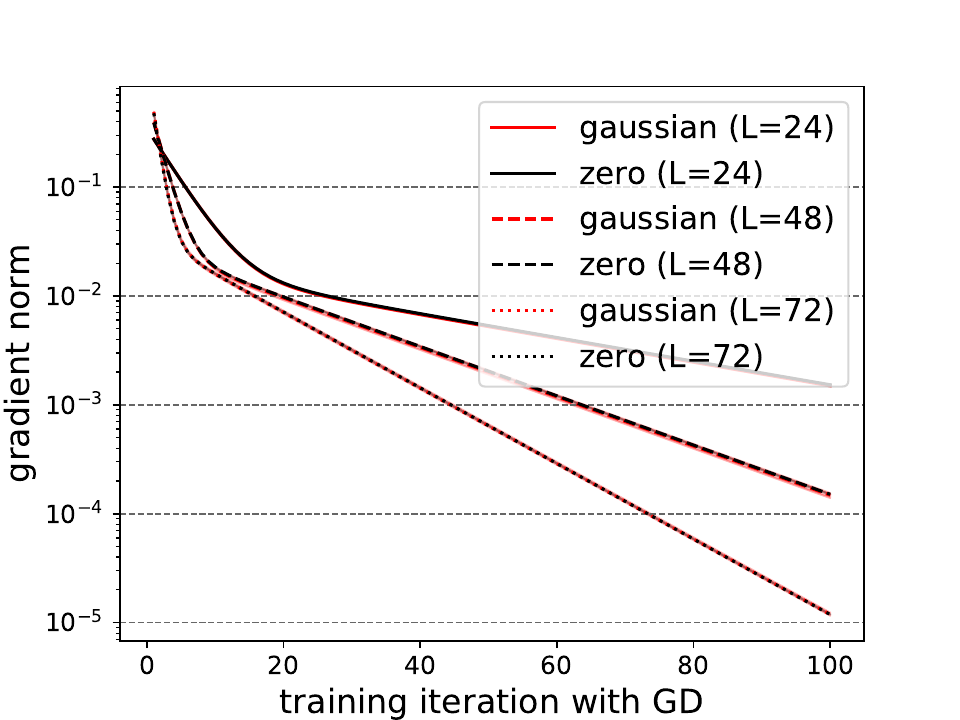}  
  \label{gidqnn_app_exp_lih_gd_grad_noisy}
}
\subfigure[]{
  \includegraphics[width=.23\linewidth]{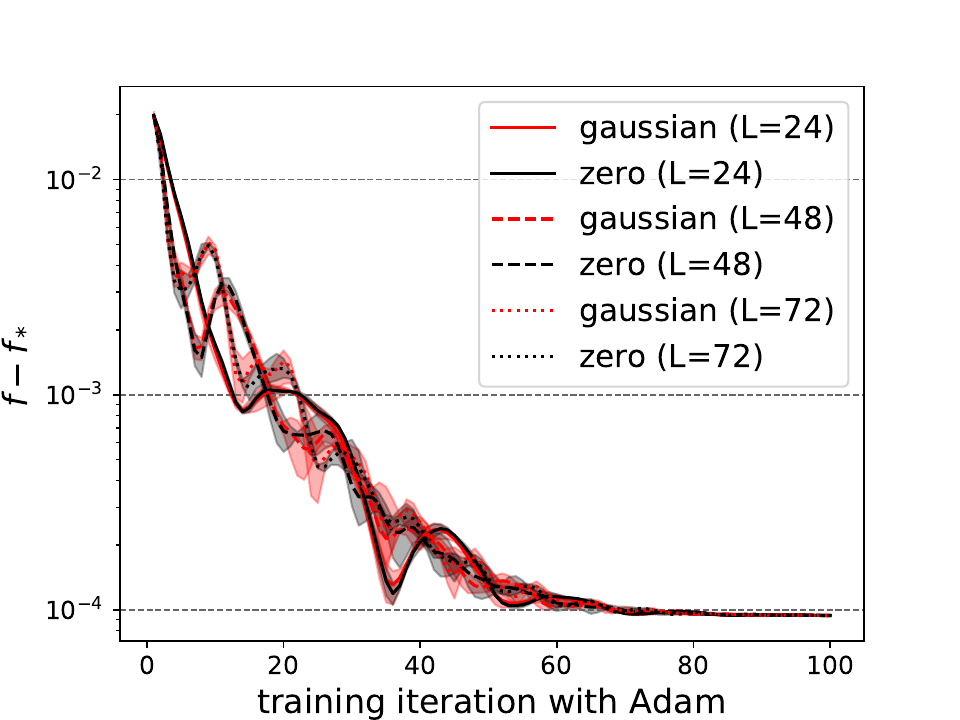}  
  \label{gidqnn_app_exp_lih_adam_loss_noisy}
}
 \subfigure[]{
  \includegraphics[width=.23\linewidth]{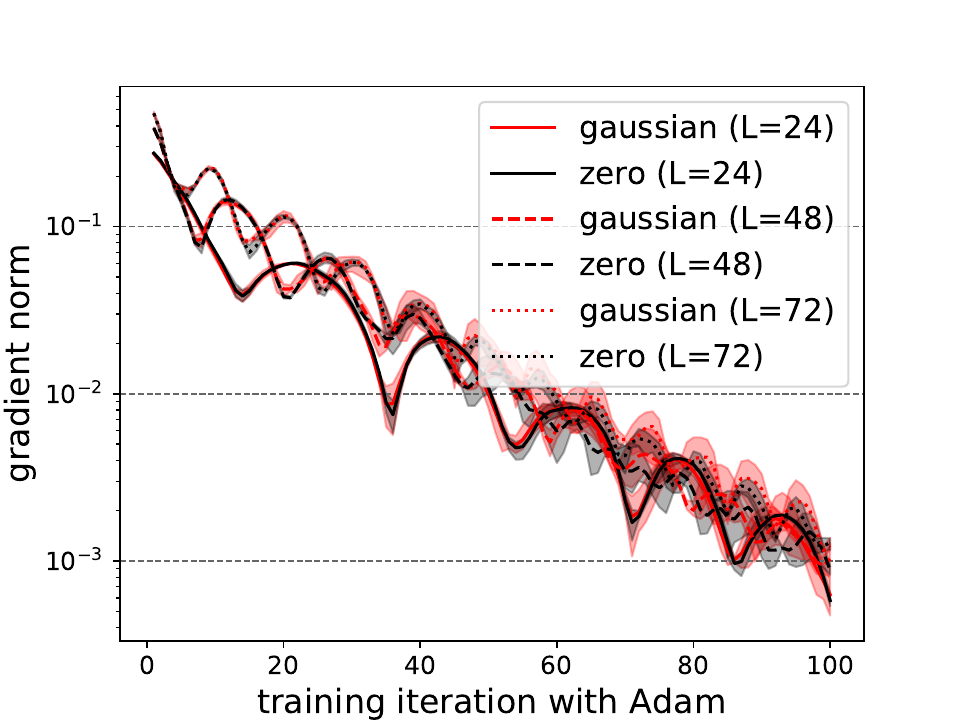}  
  \label{gidqnn_app_exp_lih_adam_grad_noisy}
}
\caption{
Numerical results of finding the ground state energy of the LiH molecule using noisy gradients. 
Figures~\ref{gidqnn_app_exp_lih_gd_loss_noisy} and \ref{gidqnn_app_exp_lih_adam_loss_noisy} show the loss during optimizations for different $L \in \{24,48,72\}$ with the gradient descent and the Adam optimizer, respectively. Figures~\ref{gidqnn_app_exp_lih_gd_grad_noisy} and \ref{gidqnn_app_exp_lih_adam_grad_noisy} show the $\ell_2$ norm of gradients during the optimization. Each line illustrates the average of $3$ rounds of independent experiments.
}
\label{gidqnn_app_exp_lih_fig_1}
\end{figure}


\section{Technical Lemmas}
\label{gidqnn_app_technicals}
In this section, we provide some technical lemmas.

\begin{lemma}\label{gidqnn_lemma_rho_special}
Let $\theta$ be a variable with Gaussian distribution $\mathcal{N}(0,\gamma^2)$. Let $\rho=\sum_{k} c_k \rho_k$ be the linear combination of density matrices $\{\rho_k\}$ with real coefficients $\{c_k\}$.
Let $G$ be a hermitian unitary and $V=e^{-i {\theta} G}$.
Let $O$ be an arbitrary hermitian quantum observable that anti-commutes with $G$. 
Then
\begin{align}
\mathop{\E}_{\theta \sim \mathcal{N}(0,\gamma^2)} {\rm Tr}\left[ O V \rho V^\dag \right]^2 &\geq (1-4\gamma^2) {\rm Tr}\left[ O \rho \right]^2 + 4\gamma^2 (1-4\gamma^2)  {\rm Tr} \left[ iGO \rho \right]^2 . \label{gidqnn_lemma_rho_special_eq}
\end{align}
\end{lemma}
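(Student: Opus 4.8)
The plan is to exploit the fact that $G$ is a Hermitian unitary, so $G^2=I$ and hence $V=e^{-i\theta G}=\cos\theta\, I-i\sin\theta\, G$. Substituting this into $V\rho V^\dag$ and expanding gives $V\rho V^\dag=\cos^2\theta\,\rho+i\cos\theta\sin\theta\,(\rho G-G\rho)+\sin^2\theta\,G\rho G$. Taking the trace against $O$ and repeatedly using the anti-commutation relation $OG=-GO$ together with cyclicity of the trace, I expect the middle term to collapse: concretely $GOG=-O$ so $\Tr[OG\rho G]=\Tr[GOG\rho]=-\Tr[O\rho]$, while $\Tr[O\rho G]=\Tr[GO\rho]=-\Tr[OG\rho]$ gives $\Tr[O(\rho G-G\rho)]=2\Tr[GO\rho]$. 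This yields the clean identity
\[
\Tr[O V\rho V^\dag]=\cos(2\theta)\,\Tr[O\rho]+\sin(2\theta)\,\Tr[iGO\rho].
\]

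Next I would record that both scalars here are real: $\rho$ is Hermitian (a real combination of density matrices), $O$ is Hermitian, and $iGO$ is Hermitian precisely because $G$ and $O$ are Hermitian and anti-commute, so $(iGO)^\dag=-iOG=iGO$. Writing $A:=\Tr[O\rho]$ and $B:=\Tr[iGO\rho]$, I square the identity and take the expectation, using the Gaussian moment formulas $\E[\cos(t\theta)]=e^{-t^2\gamma^2/2}$ and $\E[\sin(t\theta)]=0$ for $\theta\sim\mathcal{N}(0,\gamma^2)$. The cross term is proportional to $\E[\sin(4\theta)]=0$, while $\E[\cos^2(2\theta)]=\tfrac12(1+e^{-8\gamma^2})$ and $\E[\sin^2(2\theta)]=\tfrac12(1-e^{-8\gamma^2})$, giving
\[
\mathop{\E}_{\theta}\Tr[O V\rho V^\dag]^2=\frac{1+e^{-8\gamma^2}}{2}\,A^2+\frac{1-e^{-8\gamma^2}}{2}\,B^2.
\]

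Finally I would reduce the claimed bound to two elementary scalar inequalities. The coefficient of $A^2$ satisfies $\tfrac12(1+e^{-8\gamma^2})\ge 1-4\gamma^2$ because $e^{-x}\ge 1-x$ for all $x$; the coefficient of $B^2$ satisfies $\tfrac12(1-e^{-8\gamma^2})\ge 4\gamma^2(1-4\gamma^2)$ because $e^{-x}\le 1-x+\tfrac{x^2}{2}$ for $x\ge 0$ (set $x=8\gamma^2$; the latter inequality follows since the function $x\mapsto 1-x+x^2/2-e^{-x}$ vanishes to second order at $0$ and has nonnegative second derivative on $[0,\infty)$). Combining these two estimates term by term gives exactly (\ref{gidqnn_lemma_rho_special_eq}).

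I do not anticipate a serious obstacle: the only steps requiring care are the sign bookkeeping in the anti-commutator manipulations (getting $GOG=-O$ and the factor $2$ in the cross term right) and verifying that $iGO$ is Hermitian, which is what keeps the whole expansion real and forces the $\sin(4\theta)$ cross term to drop out under the expectation.
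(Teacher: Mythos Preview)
Your proposal is correct and follows essentially the same approach as the paper: expand $V=\cos\theta\,I-i\sin\theta\,G$, use the anti-commutation $OG=-GO$ to obtain the identity $\Tr[OV\rho V^\dag]=\cos(2\theta)\,\Tr[O\rho]+\sin(2\theta)\,\Tr[iGO\rho]$, square and take the Gaussian expectation, and then bound the coefficients $\tfrac{1}{2}(1\pm e^{-8\gamma^2})$ using $1-x\le e^{-x}\le 1-x+\tfrac{x^2}{2}$ at $x=8\gamma^2$. The paper's proof is identical in structure and uses the same bounds.
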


\begin{proof}

By replacing the term 
\begin{equation*}
V = e^{-i \theta G} = I \cos \theta -iG \sin \theta,
\end{equation*}
we have
\begin{align}
{\Tr}\left[ O V \rho V^\dag \right] &= {\Tr} \left[ O (I \cos \theta -i G \sin \theta) \rho (I \cos \theta +i G \sin \theta) \right] \nonumber \\
&= \cos 2\theta \Tr \left[ O \rho \right] + \sin 2\theta \Tr \left[ iGO \rho \right], \label{gidqnn_lemma_rho_special_1_3}
\end{align}
where Eq.~(\ref{gidqnn_lemma_rho_special_1_3}) follows from the condition $OG+GO=0$. Since $O$ anti-commutes with $G$, $iGO$ could be served as a hermitian observable. 
Based on Eq.~(\ref{gidqnn_lemma_rho_special_1_3}), we have
\begin{align}
\mathop{\E}_{\theta \sim \mathcal{N}(0,\gamma^2)} {\Tr}\left[ O V \rho V^\dag \right]^2 &= \mathop{\E}_{\theta \sim \mathcal{N}(0,\gamma^2)} \Big( \cos 2\theta \Tr \left[ O \rho \right] + \sin 2\theta \Tr \left[ iGO \rho \right] \Big)^2 \nonumber \\
&= \frac{1+e^{-8\gamma^2}}{2} \Tr \left[ O \rho \right]^2 + \frac{1-e^{-8\gamma^2}}{2} \Tr \left[ iGO \rho \right]^2 \label{gidqnn_lemma_rho_special_2_2} \\
&\geq (1-4\gamma^2) \Tr \left[ O \rho \right]^2 + 4\gamma^2 (1-4\gamma^2) \Tr \left[ iGO \rho \right]^2 \label{gidqnn_lemma_rho_special_2_3},
\end{align}
where Eq.~(\ref{gidqnn_lemma_rho_special_2_2}) is obtained by calculating expectation terms. InEq.~(\ref{gidqnn_lemma_rho_special_2_3}) holds since $1-8\gamma^2 \leq e^{-8\gamma^2} \leq 1-8\gamma^2 + 32\gamma^4$. Thus, we have proved Eq.~(\ref{gidqnn_lemma_rho_special_eq}). 

\end{proof}

\begin{lemma}\label{gidqnn_lemma_special_pm}
Let $\theta$ be a variable with Gaussian distribution $\mathcal{N}(0,\gamma^2)$. Let  $\rho$ be the density matrix of a quantum state. 
Let $G$ be a hermitian unitary and $V=e^{-i \theta  G}$.
Let $O$ be an arbitrary hermitian quantum observable that anti-commutes with $G$. Then
\begin{align}
\mathop{\E}\limits_{\theta \sim \mathcal{N}(0,\gamma^2)} \left( \frac{\partial}{\partial \theta}
{\rm Tr}\left[ O V \rho V^\dag \right] \right)^2 &\geq (1-4\gamma^2) \left( \frac{\partial}{\partial \theta}
{\rm Tr}\left[ O V \rho V^\dag \right] \right)^2 \bigg|_{\theta=0} + 16\gamma^2 (1-4\gamma^2) {\rm Tr}\left[ O \rho \right]^2  . \label{gidqnn_lemma_special_pm_eq}
\end{align}
\end{lemma}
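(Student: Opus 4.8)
The plan is to mirror the structure of the proof of Lemma~\ref{gidqnn_lemma_rho_special}, but now applied to the partial derivative of the trace rather than the trace itself. First I would compute $\frac{\partial}{\partial \theta} {\rm Tr}[O V \rho V^\dag]$ explicitly using the expansion $V = e^{-i\theta G} = I\cos\theta - iG\sin\theta$. From Eq.~(\ref{gidqnn_lemma_rho_special_1_3}) in the previous lemma we already know that ${\rm Tr}[O V \rho V^\dag] = \cos 2\theta\, {\rm Tr}[O\rho] + \sin 2\theta\, {\rm Tr}[iGO\rho]$, so differentiating gives
\begin{equation*}
\frac{\partial}{\partial \theta} {\rm Tr}[O V \rho V^\dag] = -2\sin 2\theta\, {\rm Tr}[O\rho] + 2\cos 2\theta\, {\rm Tr}[iGO\rho].
\end{equation*}
In particular, evaluating at $\theta=0$ yields $\left(\frac{\partial}{\partial \theta}{\rm Tr}[O V\rho V^\dag]\right)\big|_{\theta=0} = 2\,{\rm Tr}[iGO\rho]$.

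Next I would square the derivative expression and take the Gaussian expectation. Using ${\rm Tr}[iGO\rho]^2 = \frac14 \left(\frac{\partial}{\partial\theta}{\rm Tr}[O V\rho V^\dag]\right)^2\big|_{\theta=0}$, the computation is exactly the same moment computation as in Eq.~(\ref{gidqnn_lemma_rho_special_2_2}): the cross term vanishes because $\mathbb{E}[\sin 2\theta \cos 2\theta] = \frac12 \mathbb{E}[\sin 4\theta] = 0$ by symmetry of the mean-zero Gaussian, and $\mathbb{E}[\sin^2 2\theta] = \frac{1-e^{-8\gamma^2}}{2}$, $\mathbb{E}[\cos^2 2\theta] = \frac{1+e^{-8\gamma^2}}{2}$. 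This gives
\begin{equation*}
\mathop{\E}\limits_{\theta} \left( \frac{\partial}{\partial\theta}{\rm Tr}[O V\rho V^\dag] \right)^2 = 2(1-e^{-8\gamma^2})\, {\rm Tr}[O\rho]^2 + 2(1+e^{-8\gamma^2})\, {\rm Tr}[iGO\rho]^2.
\end{equation*}

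Finally I would convert this exact identity into the stated lower bound using the same elementary inequalities $1-8\gamma^2 \le e^{-8\gamma^2} \le 1 - 8\gamma^2 + 32\gamma^4$ that appear in the previous lemma. For the $\cos$ term, $1+e^{-8\gamma^2} \ge 2 - 8\gamma^2 = 2(1-4\gamma^2)$, so $2(1+e^{-8\gamma^2})\,{\rm Tr}[iGO\rho]^2 \ge 4(1-4\gamma^2)\,{\rm Tr}[iGO\rho]^2 = (1-4\gamma^2)\left(\frac{\partial}{\partial\theta}{\rm Tr}[O V\rho V^\dag]\right)^2\big|_{\theta=0}$. For the $\sin$ term, $1-e^{-8\gamma^2} \ge 8\gamma^2 - 32\gamma^4 = 8\gamma^2(1-4\gamma^2)$, so $2(1-e^{-8\gamma^2})\,{\rm Tr}[O\rho]^2 \ge 16\gamma^2(1-4\gamma^2)\,{\rm Tr}[O\rho]^2$. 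Adding the two bounds yields Eq.~(\ref{gidqnn_lemma_special_pm_eq}). I do not anticipate a serious obstacle here; the only mild subtlety is keeping straight which trigonometric moment multiplies which trace and which direction each elementary bound on $e^{-8\gamma^2}$ needs to go — one bound must be an upper bound and the other a lower bound, and getting them backwards would break the inequality. The role of the anti-commutation hypothesis is, as before, purely to make $iGO$ Hermitian so that ${\rm Tr}[iGO\rho]$ is real and its square is nonnegative; without it the squaring step would not produce a clean real decomposition.
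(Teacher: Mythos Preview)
Your proposal is correct and follows essentially the same approach as the paper's own proof: differentiate Eq.~(\ref{gidqnn_lemma_rho_special_1_3}), evaluate at $\theta=0$, compute the Gaussian second moments to obtain $2(1-e^{-8\gamma^2}){\rm Tr}[O\rho]^2 + 2(1+e^{-8\gamma^2}){\rm Tr}[iGO\rho]^2$, and then apply the two-sided bounds $1-8\gamma^2 \le e^{-8\gamma^2} \le 1-8\gamma^2+32\gamma^4$ exactly as you describe. Your additional remarks on why the cross term vanishes and on the role of the anti-commutation hypothesis are correct and simply make explicit what the paper leaves implicit.
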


\begin{proof}

By calculating the gradient for both sides of Eq.~(\ref{gidqnn_lemma_rho_special_1_3}), we obtain  
\begin{align}
\frac{\partial}{\partial \theta} \Tr\left[ O V \rho V^\dag \right] &= -2\sin 2\theta \Tr\left[ O \rho \right] + 2 \cos 2\theta \Tr\left[ iGO \rho \right]. \label{gidqnn_lemma_special_pm_2_1}
\end{align}

Let $\theta=0$ in Eq.~(\ref{gidqnn_lemma_special_pm_2_1}), we obtain
\begin{align}
\frac{\partial}{\partial \theta} \Tr\left[ O V \rho V^\dag \right]\bigg|_{\theta=0} &= 2 \Tr\left[ iGO \rho \right]. \label{gidqnn_lemma_special_pm_2_2}
\end{align}

Now we proceed to prove Lemma~\ref{gidqnn_lemma_special_pm}. 
\begin{align}
    \text{The left part of Eq.~(\ref{gidqnn_lemma_special_pm_eq})}	 
    &= \mathop{\E}\limits_{\theta \sim \mathcal{N}(0,\gamma^2)} \left( -2\sin 2\theta \Tr\left[ O \rho \right] + 2 \cos 2\theta \Tr\left[ iGO \rho \right] \right)^2 \nonumber \\
    &=  2(1-e^{-8\gamma^2}) \Tr\left[ O \rho \right]^2 + 2(1+e^{-8\gamma^2}) \Tr\left[ iGO \rho \right]^2 \label{gidqnn_lemma_special_pm_4_2} \\
    &\geq 16\gamma^2 (1-4\gamma^2) \Tr\left[ O \rho \right]^2 + 4 (1-4\gamma^2) \Tr\left[ iGO \rho \right]^2 \label{gidqnn_lemma_special_pm_4_3} \\
    &= (1-4\gamma^2) \left( \frac{\partial}{\partial \theta} \Tr\left[ O V \rho V^\dag \right] \right)^2\bigg|_{\theta=0} + 16\gamma^2 (1-4\gamma^2) \Tr\left[ O \rho \right]^2 \label{gidqnn_lemma_special_pm_4_4}.
\end{align} 
Eq.~(\ref{gidqnn_lemma_special_pm_4_2}) is obtained by 
calculating expectation terms.
InEq.~(\ref{gidqnn_lemma_special_pm_4_3}) is obtained by using $1- 8\gamma^2 \leq e^{-8\gamma^2} \leq 1-8\gamma^2+32\gamma^4$.
Eq.~(\ref{gidqnn_lemma_special_pm_4_4}) follows from Eq.~(\ref{gidqnn_lemma_special_pm_2_2}).
Thus, we have proved Eq.~(\ref{gidqnn_lemma_special_pm_eq}). 

\end{proof}

\begin{lemma}\label{gidqnn_lemma_rho}
Denote by $\rho=\sum_{k} c_k \rho_k$ the linear combination of density matrices $\{\rho_k\}$ with real coefficients $\{c_k\}$. Let $V_{h}(\theta)= W_1 e^{-i {\theta} G_1} W_2 \cdots W_h e^{-i {\theta} G_h}$, where $\{G_n\}_{n=1}^{h}$ is a list of hermitian unitaries and $\{W_n\}_{n=1}^{h}$ is a list of unitary matrices.
Denote by $O$ an arbitrary hermitian quantum observable. 
Then
\begin{align}
    \mathop{\E}_{\theta \sim \mathcal{N}(0,\gamma^2)} {\rm Tr} \left[ O V_h(\theta) \rho {{V_h}(\theta)}^\dag \right]^2 &\geq {\rm Tr}\left[ O V_h(0) \rho {V_h(0)}^\dag \right]^2 - \left[ 12h(h-1)+4 \right] \gamma^2  \|\bm{c}\|_1^2 \|O\|_2^2, \label{gidqnn_lemma_rho_eq}
\end{align}
where $\|\bm{c}\|_1=\sum_{k} |c_k|$ denotes the $\ell_1$ norm of $\bm{c}$, $\|O\|_2$ denotes the spectral norm of $O$, and the variance $\gamma^2 \leq \frac{1}{12h^2}$.
\end{lemma}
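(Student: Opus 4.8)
The plan is to reduce the multi-gate statement in Lemma~\ref{gidqnn_lemma_rho} to an iterated application of the single-gate Lemma~\ref{gidqnn_lemma_rho_special}, tracking the accumulated error term by term. Write $V_h(\theta) = W_1 e^{-i\theta G_1} W_2 \cdots W_h e^{-i\theta G_h}$. The natural first step is to handle the case when $O$ does \emph{not} anti-commute with the relevant generator, which Lemma~\ref{gidqnn_lemma_rho_special} does not cover directly. For this I would decompose each conjugation $e^{-i\theta G}(\cdot)e^{i\theta G}$ using $e^{-i\theta G} = I\cos\theta - iG\sin\theta$ and split any observable into a part that commutes with $G$ and a part that anti-commutes with $G$ (since $G$ is a Hermitian unitary, $G^2 = I$, so $O = \tfrac12(O + GOG) + \tfrac12(O - GOG)$, and conjugation acts trivially on the first piece and as the anti-commuting rotation on the second). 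This gives, for a single factor, a bound of the form ${\rm Tr}[O V_1 \rho V_1^\dag]^2 \ge {\rm Tr}[O\rho]^2 - (\text{const})\gamma^2 \|\bm c\|_1^2 \|O\|_2^2$ after using $|\cos 2\theta - 1| \le 2\theta^2$, $|\sin 2\theta| \le 2|\theta|$ in expectation (so $\E[(\cos2\theta-1)^2]$, $\E[\sin^2 2\theta]$, and the cross terms are all $O(\gamma^2)$), together with the crude bounds $|{\rm Tr}[A\rho]| \le \|\bm c\|_1\|A\|_2$ and $\|GO\|_2 = \|O\|_2$, $\|\tfrac12(O\pm GOG)\|_2 \le \|O\|_2$.

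Next I would set up the telescoping. Define $\rho^{(0)} = \rho$ and, peeling gates from the inside out, $O^{(m)} = W_{h-m+1}^\dag O^{(m-1)} W_{h-m+1}$ or more convergently think of it as absorbing the fixed unitaries into the observable: since the $W_n$ are unitary, conjugating $O$ by them does not change $\|O\|_2$, and conjugating the $\rho_k$ by them keeps them density matrices with the same $\|\bm c\|_1$. So the only sources of error are the $h$ parameterized rotations $e^{-i\theta G_n}$. Applying the single-gate estimate successively — starting from ${\rm Tr}[O V_h(\theta)\rho V_h(\theta)^\dag]^2$, taking expectation over $\theta$, and at each stage peeling off one $e^{-i\theta G_n}$ at a cost of $O(\gamma^2)\|\bm c\|_1^2\|O\|_2^2$ — would naively give a total error $\propto h\gamma^2$. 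The sharper $h(h-1)$ dependence in \eqref{gidqnn_lemma_rho_eq} comes from the fact that all $h$ rotations share the \emph{same} $\theta$: when I expand ${\rm Tr}[O V_h(\theta)\rho V_h^\dag]$ as a trigonometric polynomial in the single variable $\theta$, I get a sum over subsets of generators, and squaring then taking $\E_\theta$ produces cross terms; careful bookkeeping of which monomials survive (using $\E[\cos^a 2\theta \sin^b 2\theta]$ type moments, all $\le 1$ and, when $b \ge 1$, $= O(\gamma^2)$ or smaller) shows the leading correction to ${\rm Tr}[O V_h(0)\rho V_h(0)^\dag]^2$ is of order $h^2\gamma^2$, with the explicit constant $12h(h-1)+4$ dropping out of counting pairs of "active" gates plus the diagonal terms. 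The constraint $\gamma^2 \le \tfrac{1}{12h^2}$ is exactly what is needed to discard the higher-order ($\gamma^4$ and beyond) terms in the moment expansions while keeping all bounds clean.

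The main obstacle, I expect, is the second-order bookkeeping: getting the constant $12h(h-1)+4$ rather than just $O(h^2\gamma^2)$ requires either (i) a clean induction on $h$ where the inductive hypothesis carries the precise constant and one shows the increment from $h-1$ to $h$ is at most $12(h-1)\cdot 2\gamma^2$ or so, or (ii) a direct expansion of the single-variable trigonometric polynomial $\prod_n(\cos\theta \cdot(\cdot) - \sin\theta\cdot(\cdot))$ conjugation with honest tracking of cross terms. The induction route seems cleaner: assume the bound holds for $V_{h-1}$ with constant $12(h-1)(h-2)+4$, write $V_h(\theta) = \big(W_1 e^{-i\theta G_1}\big) V_{h-1}'(\theta)$ where $V_{h-1}'$ has the tail $h-1$ gates, apply the single-gate analysis to the outermost rotation (splitting $O$ into $G_1$-commuting and anti-commuting parts, which costs a further $O(h\gamma^2)$ because the $G_1$-rotation's $\sin 2\theta$ factor multiplies terms that are themselves order-$\gamma$ away from their $\theta=0$ values via the inductive hypothesis — this is where the extra factor of $h$ in the increment comes from, since the inner error is already $O(h^2\gamma^2)$ but the cross term only picks up one more power appropriately), and collect constants. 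I would need to be careful that the inductive hypothesis is applied to traces of the form ${\rm Tr}[O' V_{h-1}' \rho' V_{h-1}'^\dag]$ where $O'$ ranges over $\tfrac12(O\pm G_1 O G_1)$ and also over $G_1 \cdot$ these (from the mixed $\cos\theta\sin\theta$ terms), all of which still satisfy $\|O'\|_2 \le \|O\|_2$, and that $\rho'$ (conjugated by $W_1$) is still a real combination of density matrices with the same $\ell_1$ norm. Once that accounting is pinned down, the rest is the routine moment estimates and the elementary inequality $1 - 8\gamma^2 \le e^{-8\gamma^2}$ already used in the proofs of Lemmas~\ref{gidqnn_lemma_rho_special} and \ref{gidqnn_lemma_special_pm}.
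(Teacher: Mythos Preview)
Your proposal identifies the right ingredients---the commute/anti-commute split $O = \tfrac12(O+GOG)+\tfrac12(O-GOG)$, the trigonometric-polynomial expansion in the single variable $\theta$, and Gaussian moment estimates---but the overall architecture has a gap. The opening plan to ``reduce to an iterated application of Lemma~\ref{gidqnn_lemma_rho_special}'' cannot work as stated: that lemma already integrates out $\theta$, and since all $h$ rotations share the \emph{same} $\theta$, you cannot take that expectation gate-by-gate and then combine. You notice this midway, but your preferred induction (option~(i)) inherits the same problem. After peeling off the outermost gate you get
\[
\Tr[O V_h\rho V_h^\dag] \;=\; A(\theta) + \cos 2\theta\, B(\theta) + \sin 2\theta\, C(\theta),
\]
where $A,B,C$ are each of the form $\Tr[O' V_{h-1}'\rho V_{h-1}'^\dag]$ for fixed $O'$. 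The inductive hypothesis controls $\E_\theta[A^2]$, $\E_\theta[B^2]$, $\E_\theta[C^2]$ separately, but it says nothing about mixed moments like $\E_\theta[\cos 2\theta\cdot A(\theta)B(\theta)]$ or $\E_\theta[\sin 2\theta\cdot B(\theta)C(\theta)]$, and these cross-correlations are precisely what generate the $h(h-1)$ scaling. Your sketch of the inductive increment (``costs a further $O(h\gamma^2)$\ldots'') does not explain how those correlated moments are bounded, and a second-moment-only inductive hypothesis is too weak to close the loop.

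The paper takes your option~(ii) and makes it precise. It first proves, by induction on the number of gates, a closed-form expansion (the paper's Eq.~\eqref{gidqnn_lemma_rho_detail_f_ij}) of $\Tr[O V_h\rho V_h^\dag]$ as a sum over index strings $(\bm{i},\bm{j})$ of monomials $(\cos 2\theta)^{a}(\sin 2\theta)^{b}\Tr[O_{\bm{i}}^{\bm{j}}\rho]$, where the $O_{\bm{i}}^{\bm{j}}$ are exactly the nested commute/anti-commute pieces you describe. Only \emph{after} the full expansion does it square and take $\E_\theta$, splitting into the $\bm{j}=\bm{0}$ block (all sine powers zero) and the cross block. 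Each block is bounded by a separate inductive sub-lemma (the paper's Eq.~\eqref{gidqnn_lemma_rho_4_1}) controlling partial sums of the $\Tr[O_{\bm{i}}^{\bm{j}}\rho]$, followed by explicit Gaussian-moment computations in which the condition $\gamma^2\le 1/(12h^2)$ is used to sum geometric tails. The two blocks contribute $(6h-2)\gamma^2$ and $2(6h^2-9h+3)\gamma^2$ respectively, summing to $[12h(h-1)+4]\gamma^2$. So the route to the stated constant is: expand fully first, then bound; not: bound one layer, then recurse.
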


\begin{proof}

Before the proof, we define several notations for convenience.
We define $V_0 = I$ and 
\begin{equation}\label{gidqnn_lemma_rho_v_j}
V_j(\theta) = W_{h+1-j} e^{-i {\theta} G_{h+1-j}} \cdots W_h e^{-i {\theta} G_h}, \forall j \in \{1,\cdots, h\}.
\end{equation}
We denote $\bm{0}_k$, $\bm{1}_k$, and $\bm{2}_k$ as $k$-dimensional vectors with components $0$, $1$, and $2$, respectively.
We define $O_{i_1,\cdots,i_k}^{j_1,\cdots,j_k}=O$ for the $k=0$ case and
\begin{equation}\label{gidqnn_lemma_rho_o}
O_{i_1,\cdots,{i}_k}^{j_1,\cdots,{j}_k} = \left\{
\begin{aligned}
    & {W_k}^\dag O_{i_1,\cdots,{i}_{k-1}}^{j_1,\cdots,{j}_{k-1}} W_k , &\text{ if } i_{k}=0,\ j_{k}=0, \\
    & \frac{1}{2} G_k \left\{ G_k, {W_k}^\dag O_{i_1,\cdots,{i}_{k-1}}^{j_1,\cdots,{j}_{k-1}} W_k \right\} , &\text{ if } i_{k}=1,\ j_{k}=0, \\
    & \frac{1}{2} G_k \left[ G_k, {W_k}^\dag O_{i_1,\cdots,{i}_{k-1}}^{j_1,\cdots,{j}_{k-1}} W_k \right] , &\text{ if } i_{k}=2,\ j_{k}=0, \\
    & iG_k O_{i_1,\cdots,{i}_{k-1},i_k}^{j_1,\cdots,{j}_{k-1},0} , &\text{ if } j_{k}=1 ,
\end{aligned}
\right.
\end{equation}
for increasing $k\in \{1,\cdots,h\}$, where $i_k \in \{0,1,2\}$ and $j_k \in \{0,1\}$.

For all $1 \leq k \leq \ell \leq h$, the definition~(\ref{gidqnn_lemma_rho_o}) provides the commuting and anti-commuting parts of $O_{i_1,\cdots,{i}_{k-1},0,i_{k+1},\cdots,{i}_{\ell}}^{j_1,\cdots,{j}_{k-1},0,j_{k+1},\cdots,{j}_{\ell}}$ with respect to $G_k$, respectively, i.e.,
\begin{align*}
O_{i_1,\cdots,{i}_{k-1},0,i_{k+1},\cdots,{i}_{\ell}}^{j_1,\cdots,{j}_{k-1},0,j_{k+1},\cdots,{j}_{\ell}} &= O_{i_1,\cdots,{i}_{k-1},1,i_{k+1},\cdots,{i}_{\ell}}^{j_1,\cdots,{j}_{k-1},0,j_{k+1},\cdots,{j}_{\ell}} + O_{i_1,\cdots,{i}_{k-1},2,i_{k+1},\cdots,{i}_{\ell}}^{j_1,\cdots,{j}_{k-1},0,j_{k+1},\cdots,{j}_{\ell}} , \\
G_k O_{i_1,\cdots,{i}_{k-1},1,i_{k+1},\cdots,{i}_{\ell}}^{j_1,\cdots,{j}_{k-1},0,j_{k+1},\cdots,{j}_{\ell}} &= O_{i_1,\cdots,{i}_{k-1},1,i_{k+1},\cdots,{i}_{\ell}}^{j_1,\cdots,{j}_{k-1},0,j_{k+1},\cdots,{j}_{\ell}} G_k , \\ 
G_k O_{i_1,\cdots,{i}_{k-1},2,i_{k+1},\cdots,{i}_{\ell}}^{j_1,\cdots,{j}_{k-1},0,j_{k+1},\cdots,{j}_{\ell}} &= - O_{i_1,\cdots,{i}_{k-1},2,i_{k+1},\cdots,{i}_{\ell}}^{j_1,\cdots,{j}_{k-1},0,j_{k+1},\cdots,{j}_{\ell}} G_k . 
\end{align*}
Since for all $k \in [L]$, $G_k$ is a unitary matrix, $O_{\bm{i}}^{\bm{j}}$ is a hermitian observable for all $\bm{i} \in \{0,1,2\}^{\ell}$, $\bm{j} \in \{0,1\}^{\ell}$, and $\ell \in [L]$.
Meanwhile, the spectral norm of $O_{\bm{i}}^{\bm{j}}$ is bounded,
\begin{align}
\left\| O_{i_1,\cdots,i_{h-1},i_h}^{j_1,\cdots,j_{h-1},j_h} \right\|_2  \leq{}& \left\| O_{i_1,\cdots,i_{h-1},i_h}^{j_1,\cdots,j_{h-1},0} \right\|_2
\leq \frac{1}{2} \left\| O_{i_1,\cdots,i_{h-1},0}^{j_1,\cdots,j_{h-1},0} \right\|_2 + \frac{1}{2} \left\| O_{i_1,\cdots,i_{h-1},0}^{j_1,\cdots,j_{h-1},0} \right\|_2 \nonumber \\
={}& \left\| O_{i_1,\cdots,i_{h-1},0}^{j_1,\cdots,j_{h-1},0} \right\|_2 = \left\| O_{i_1,\cdots,i_{h-1}}^{j_1,\cdots,j_{h-1}} \right\|_2 \leq \|O\|_2 , \label{gidqnn_lemma_rho_norm_o}
\end{align}
where $\|A\|_2$ denotes the spectral norm of the matrix $A$.
Moreover, for all $k,\ell \geq 0$ such that $ k+\ell \leq h$, the observable $O_{i_1,\cdots,i_k,\bm{0}_{h-k-\ell},i_{h-\ell+1},\cdots,i_{h}}^{j_1,\cdots,j_k,\bm{0}_{h-k-\ell},j_{h-\ell+1},\cdots,j_{h}}$ could be recovered by 
\begin{equation}\label{gidqnn_lemma_rho_sum_o_i}
\sum_{n=k+1}^{h-\ell} \sum_{i_n=1}^{2} O_{i_1,\cdots,i_k,i_{k+1},\cdots,i_{h-\ell},i_{h-\ell+1},\cdots,i_{h}}^{j_1,\cdots,j_k,\bm{0}_{h-k-\ell},j_{h-\ell+1},\cdots,j_{h}} =  O_{i_1,\cdots,i_k,\bm{0}_{h-k-\ell},i_{h-\ell+1},\cdots,i_{h}}^{j_1,\cdots,j_k,\bm{0}_{h-k-\ell},j_{h-\ell+1},\cdots,j_{h}} . 
\end{equation}

Now we begin the proof. To analyze the expectation with respect to the parameter $\theta$, we need the detailed formulation of ${\rm Tr} \left[ O V_h \rho {V_h}^\dag \right]$ as the function of $\theta$. In fact, for all $h' \in \{0,1,\cdots,h\}$ and all
$$\bm{i} \in \{0,1,2\}^{h-h'},\ \bm{j} \in \{0,1\}^{h-h'},$$
we have
\begin{align}\label{gidqnn_lemma_rho_detail_f_ij}
\Tr \left[ O_{\bm{i}}^{\bm{j}} V_{h'} \rho {V_{h'}}^\dag \right] &= \sum_{\bm{j}'=\bm{0}_{h'}}^{\bm{1}_{h'}} \sum_{\bm{i}'=\bm{j}'+\bm{1}_{h'}}^{\bm{2}_{h'}} \left( \cos 2\theta \right)^{\|\bm{i}'\|_1 - \|\bm{j}'\|_1 - h'} \left( \sin 2\theta \right)^{\|\bm{j}'\|_1 } \Tr \left[ O_{\bm{i},\bm{i}'}^{\bm{j}, \bm{j}'} \rho \right] ,
\end{align}
where $\|\bm{i}'\|_1 \equiv \sum_{k=1}^{\text{dim}(\bm{i}')} |i'_{k}|$ denotes the $\ell_1$ norm of the vector $\bm{i}$. 

Eq.~(\ref{gidqnn_lemma_rho_detail_f_ij}) can be proved inductively. First, for the case $h'=0$, Eq.~(\ref{gidqnn_lemma_rho_detail_f_ij}) holds trivially. Next, we assume that Eq.~(\ref{gidqnn_lemma_rho_detail_f_ij}) holds for the $h'=k$ case. Then for all
$$\bm{i} \in \{0,1,2\}^{h-k-1},\ \bm{j} \in \{0,1\}^{h-k-1},$$
we have  
\begin{align}
{\Tr} \left[ O_{\bm{i}}^{\bm{j}} V_{k+1} \rho {V_{k+1}}^\dag \right] 
=& {\Tr} \left[ O_{\bm{i}}^{\bm{j}} W_{h-k} (I \cos \theta -i G_{h-k} \sin \theta) V_{k} \rho {V_{k}}^\dag (I \cos \theta +i G_{h-k} \sin \theta) {W_{h-k}}^\dag \right] \nonumber \\
=& \cos^2 \theta \Tr \left[ O_{\bm{i},0}^{\bm{j},0} V_{k} \rho {V_{k}}^\dag \right] +\sin^2 \theta \Tr \left[ G_{h-k} O_{\bm{i},0}^{\bm{j},0} G_{h-k}  V_{k} \rho {V_{k}}^\dag \right] \nonumber \\
&+ \sin \theta \cos \theta \Tr \left[ iG_{h-k} O_{\bm{i},0}^{\bm{j},0} V_{k} \rho {V_{k}}^\dag \right] - \sin \theta \cos \theta \Tr \left[ O_{\bm{i},0}^{\bm{j},0} iG_{h-k} V_{k} \rho {V_{k}}^\dag \right] \label{gidqnn_lemma_rho_detail_f_2_2} \\
=& \Tr \left[ O_{\bm{i},1}^{\bm{j},0} V_{k} \rho {V_{k}}^\dag \right] + \cos 2\theta \Tr \left[ O_{\bm{i},2}^{\bm{j},0} V_{k} \rho {V_{k}}^\dag \right] + \sin 2\theta \Tr \left[ O_{\bm{i},2}^{\bm{j},1} V_{k} \rho {V_{k}}^\dag \right] ,
\label{gidqnn_lemma_rho_detail_f_2_3}
\end{align} 
where Eqs.~(\ref{gidqnn_lemma_rho_detail_f_2_2}) and  (\ref{gidqnn_lemma_rho_detail_f_2_3}) are derived by using the definition (\ref{gidqnn_lemma_rho_o}). 
We proceed by employing the $h'=k$ case of Eq.~(\ref{gidqnn_lemma_rho_detail_f_ij}), such that
\begin{align}
\text{Eq.~(\ref{gidqnn_lemma_rho_detail_f_2_3})} 
=& \sum_{\bm{j}'=\bm{0}_{k}}^{\bm{1}_{k}} \sum_{\bm{i}'=\bm{j}'+\bm{1}_{k}}^{\bm{2}_{k}} \left( \cos 2\theta \right)^{\|\bm{i}'\|_1 - \|\bm{j}'\|_1-k} \left( \sin 2\theta \right)^{\|\bm{j}'\|_1} \Tr \left[ O_{\bm{i},1,\bm{i}'}^{\bm{j},0,\bm{j}'} \rho \right] \nonumber \\
&+ \cos 2\theta \sum_{\bm{j}'=\bm{0}_{k}}^{\bm{1}_{k}} \sum_{\bm{i}'=\bm{j}'+\bm{1}_{k}}^{\bm{2}_{k}} \left( \cos 2\theta \right)^{\|\bm{i}'\|_1 - \|\bm{j}'\|_1-k} \left( \sin 2\theta \right)^{\|\bm{j}'\|_1} \Tr \left[ O_{\bm{i},2,\bm{i}'}^{\bm{j},0,\bm{j}'} \rho \right] \nonumber \\
&+ \sin 2\theta \sum_{\bm{j}'=\bm{0}_{k}}^{\bm{1}_{k}} \sum_{\bm{i}'=\bm{j}'+\bm{1}_{k}}^{\bm{2}_{k}} \left( \cos 2\theta \right)^{\|\bm{i}'\|_1 - \|\bm{j}'\|_1-k} \left( \sin 2\theta \right)^{\|\bm{j}'\|_1} \Tr \left[ O_{\bm{i},2,\bm{i}'}^{\bm{j},1,\bm{j}'} \rho \right] \nonumber \\
=& \sum_{\bm{j}'=\bm{0}_{k+1}}^{\bm{1}_{k+1}} \sum_{\bm{i}'=\bm{j}'+\bm{1}_{k+1}}^{\bm{2}_{k+1}} \left( \cos 2\theta \right)^{\|\bm{i}'\|_1 - \|\bm{j}'\|_1-k-1} \left( \sin 2\theta \right)^{\|\bm{j}'\|_1} \Tr \left[ O_{\bm{i},\bm{i}'}^{\bm{j},\bm{j}'} \rho \right] ,
\end{align}
which matches the formulation of the $h'=k+1$ case of Eq.~(\ref{gidqnn_lemma_rho_detail_f_ij}). Thus, Eq.~(\ref{gidqnn_lemma_rho_detail_f_ij}) has been proved.

Now we begin to prove Eq.~(\ref{gidqnn_lemma_rho_eq}). Employing the $h'=h$ case of Eq.~(\ref{gidqnn_lemma_rho_detail_f_ij}) could yield
\begin{align}
&{}\ \mathop{\E}_{\theta \sim \mathcal{N}(0,\gamma^2)} \left( {\Tr} \left[ O V_h \rho {V_h}^\dag \right] \right)^2  
\nonumber
\\
=& \mathop{\E}_{\theta \sim \mathcal{N}(0,\gamma^2)} \left( \sum_{\bm{j}=\bm{0}_{h}}^{\bm{1}_{h}} \sum_{\bm{i}=\bm{j}+\bm{1}_h}^{\bm{2}_h} (\cos 2\theta)^{\|\bm{i}\|_1-\|\bm{\bm{j}}\|_1-h} (\sin 2\theta)^{\|\bm{j}\|_1}  \Tr \left[ O_{\bm{i}}^{\bm{j}} \rho \right] \right)^2 \label{gidqnn_lemma_rho_1_1} \\
=& \mathop{\E}_{\theta \sim \mathcal{N}(0,\gamma^2)} \left( \sum_{\bm{i}=\bm{1}_h}^{\bm{2}_h} (\cos 2\theta)^{\|\bm{i}\|_1-h}   \Tr \left[ O_{\bm{i}}^{\bm{0}_h} \rho \right]  + \sum_{\bm{j}>\bm{0}_{h}}^{\bm{1}_{h}} \sum_{\bm{i}=\bm{j}+\bm{1}_h}^{\bm{2}_h} (\cos 2\theta)^{\|\bm{i}\|_1-\|\bm{j}\|_1-h} (\sin 2\theta)^{\|\bm{j}\|_1}  \Tr \left[ O_{\bm{i}}^{\bm{j}} \rho \right] \right)^2 \label{gidqnn_lemma_rho_1_2} \\
\geq& \mathop{\E}_{\theta \sim \mathcal{N}(0,\gamma^2)} \left( \sum_{\bm{i}=\bm{1}_h}^{\bm{2}_h} (\cos 2\theta)^{\|\bm{i}\|_1-h}   \Tr \left[ O_{\bm{i}}^{\bm{0}_h} \rho \right] \right)^2 \nonumber \\
&+ 2 \mathop{\E}_{\theta \sim \mathcal{N}(0,\gamma^2)} \sum_{\bm{j}>\bm{0}_{h}}^{\bm{1}_{h}} \sum_{\bm{i}=\bm{j}+\bm{1}_h}^{\bm{2}_h} (\cos 2\theta)^{\|\bm{i}\|_1-\|\bm{j}\|_1-h} (\sin 2\theta)^{\|\bm{j}\|_1}  \Tr \left[ O_{\bm{i}}^{\bm{j}} \rho \right]  \sum_{\bm{i}'=\bm{1}_h}^{\bm{2}_h} (\cos 2\theta)^{\|\bm{i}'\|_1-h}   \Tr \left[ O_{\bm{i}'}^{\bm{0}_h} \rho \right] . \label{gidqnn_lemma_rho_1_3} 
\end{align}
InEq.~(\ref{gidqnn_lemma_rho_1_3}) is obtained by discarding the square of the latter term in the bracket of Eq.~(\ref{gidqnn_lemma_rho_1_2}).
We remark that if Eqs.~(\ref{gidqnn_lemma_rho_2_1}) and (\ref{gidqnn_lemma_rho_2_2}) hold, we can prove Eq.~(\ref{gidqnn_lemma_rho_eq}) by using Eqs.~(\ref{gidqnn_lemma_rho_1_1}-\ref{gidqnn_lemma_rho_1_3}). 
\begin{align}
\mathop{\E}_{\theta \sim \mathcal{N}(0,\gamma^2)} \left( \sum_{\bm{i}=\bm{1}_h}^{\bm{2}_h} (\cos 2\theta)^{\|\bm{i}\|_1-h}   \Tr \left[ O_{\bm{i}}^{\bm{0}_h} \rho \right] \right)^2 - \left( {\Tr}\left[ O V_h(0) \rho V_h(0)^\dag \right] \right)^2 \geq - (6h-2) \gamma^2 \|\bm{c}\|_1^2 \|O\|_2^2 \label{gidqnn_lemma_rho_2_1},
\end{align}
\begin{align}
& \mathop{\E}_{\theta \sim \mathcal{N}(0,\gamma^2)} \sum_{\bm{j}>\bm{0}_{h}}^{\bm{1}_{h}} \sum_{\bm{i}=\bm{j}+{\bm{1}_h}}^{\bm{2}_h} (\cos 2\theta)^{\|\bm{i}\|_1-\|\bm{j}\|_1-h} (\sin 2\theta)^{\|\bm{j}\|_1}  \Tr \left[ O_{\bm{i}}^{\bm{j}} \rho \right]  \sum_{\bm{i}'=\bm{1}_h}^{\bm{2}_h} (\cos 2\theta)^{\|\bm{i}'\|_1-h}   \Tr \left[ O_{\bm{i}'}^{\bm{0}_h} \rho \right]  \nonumber \\
\geq & - \left( 6h^2 - 9h + 3 \right) \gamma^2 \|\bm{c}\|_1^2 \|O\|_2^2 . 
\label{gidqnn_lemma_rho_2_2}
\end{align}

In the following proof, we would derive Eqs.~(\ref{gidqnn_lemma_rho_2_1}) and (\ref{gidqnn_lemma_rho_2_2}). 
We focus on the Eq.~(\ref{gidqnn_lemma_rho_2_1}) first. In fact, the left side of Eq.~(\ref{gidqnn_lemma_rho_2_1}) is bounded by
\begin{align}
& \mathop{\E}_{\theta \sim \mathcal{N}(0,\gamma^2)} \left( \sum_{\bm{i}=\bm{1}_h}^{\bm{2}_h} \left[ 1-(\cos 2\theta)^{\|\bm{i}\|_1-h} -1\right] \Tr \left[ O_{\bm{i}}^{\bm{0}_h} \rho \right] \right)^2 - \left( {\Tr}\left[ O_{\bm{0}_h}^{\bm{0}_h} \rho \right] \right)^2 \label{gidqnn_lemma_rho_3_0} \\
=& \mathop{\E}_{\theta \sim \mathcal{N}(0,\gamma^2)} \left( \sum_{\bm{i}=\bm{1}_h}^{\bm{2}_h} \left[ 1-(\cos 2\theta)^{\|\bm{i}\|_1-h} -1\right] \Tr \left[ O_{\bm{i}}^{\bm{0}_h} \rho \right] \right)^2 - \left( \sum_{\bm{i}=\bm{1}_h}^{\bm{2}_h} {\Tr}\left[ O_{\bm{i}}^{\bm{0}_h} \rho \right] \right)^2 \label{gidqnn_lemma_rho_3_1} \\
\geq & -2 \left| \sum_{\bm{i}=\bm{1}_h}^{\bm{2}_h} {\Tr}\left[ O_{\bm{i}}^{\bm{0}_h} \rho \right] \right| \mathop{\E}_{\theta \sim \mathcal{N}(0,\gamma^2)} \left| \sum_{\bm{i}=\bm{1}_h}^{\bm{2}_h} \left[ 1-(\cos 2\theta)^{\|\bm{i}\|_1-h} \right] \Tr \left[ O_{\bm{i}}^{\bm{0}_h} \rho \right] \right| \label{gidqnn_lemma_rho_3_2} \\
= & -2 \left| {\Tr}\left[ O_{\bm{0}_h}^{\bm{0}_h} \rho \right] \right| \mathop{\E}_{\theta \sim \mathcal{N}(0,\gamma^2)} \left| \sum_{\bm{i}=\bm{1}_h}^{\bm{2}_h} \left[ 1-(\cos 2\theta)^{\|\bm{i}\|_1-h} \right] \Tr \left[ O_{\bm{i}}^{\bm{0}_h} \rho \right] \right| \label{gidqnn_lemma_rho_3_3} \\
\geq & -2 \|\bm{c}\|_1 \|O\|_2 \mathop{\E}_{\theta \sim \mathcal{N}(0,\gamma^2)} \left| \sum_{\bm{i}=\bm{1}_h}^{\bm{2}_h} \left[ 1-(\cos 2\theta)^{\|\bm{i}\|_1-h} \right] \Tr \left[ O_{\bm{i}}^{\bm{0}_h} \rho \right] \right| \label{gidqnn_lemma_rho_3_4} \\
\geq & -2 \|\bm{c}\|_1^2 \|O\|_2^2 \mathop{\E}_{\theta \sim \mathcal{N}(0,\gamma^2)} \left[ (2-\cos2\theta)^{h}-1\right] . \label{gidqnn_lemma_rho_3_5}
\end{align}
Eq.~(\ref{gidqnn_lemma_rho_3_0}) is obtained by using the definition~(\ref{gidqnn_lemma_rho_o}).
Eq.~(\ref{gidqnn_lemma_rho_3_1}) is derived by using Eq.~(\ref{gidqnn_lemma_rho_sum_o_i}).
InEq.~(\ref{gidqnn_lemma_rho_3_2}) is obtained by using $(a-b)^2-b^2 \geq -2|a|\cdot|b|$.
Eq.~(\ref{gidqnn_lemma_rho_3_3}) yields from Eq.~(\ref{gidqnn_lemma_rho_sum_o_i}).
InEq.~(\ref{gidqnn_lemma_rho_3_4}) is derived by using
\begin{equation}
\label{gidqnn_lemma_rho_bound_tr_oi_rho}
    \left| \Tr \left[ O_{\bm{i}}^{\bm{j}} \rho \right] \right| = \left| \sum_k c_k \Tr \left[ O_{\bm{i}}^{\bm{j}} \rho_k \right] \right| \leq \sum_k |c_k| \left| \Tr \left[ O_{\bm{i}}^{\bm{j}} \rho_k \right] \right| \leq \sum_k |c_k| \left\|O_{\bm{i}}^{\bm{j}} \right\|_2 \leq \|\bm{c}\|_1 \|O\|_2.
\end{equation}
InEq.~(\ref{gidqnn_lemma_rho_3_5}) is obtained by using the $h'=h$ case of InEq.~(\ref{gidqnn_lemma_rho_4_1}), i.e.,
\begin{align}
    \left| \sum_{\bm{i}'=\bm{j}'+\bm{1}_{h'}}^{\bm{2}_{h'}} \left[ 1- \left(\cos 2\theta\right)^{\|\bm{i}'\|_1-\|\bm{j}'\|_1-h'} \right] \Tr \left[ O_{\bm{i}',\bm{i}}^{\bm{j}',\bm{j}} \rho \right] \right| \leq \left[ (2-\cos2\theta)^{h'-\|\bm{j}'\|_1}-1\right] \|\bm{c}\|_1 \left\| O \right\|_2 \label{gidqnn_lemma_rho_4_1}
\end{align}
for all $h' \in \{0,1,\cdots,h\}$, $\bm{j}' \in \{0,1\}^{ h'}$, $\bm{i} \in \{0,1,2\}^{h-h'}$, and $\bm{j} \in \{0,1\}^{h-h'}$. 

InEq.~(\ref{gidqnn_lemma_rho_4_1}) can be proved inductively. First, for the case $h'=0$, Eq.~(\ref{gidqnn_lemma_rho_4_1}) holds trivially. Next we assume that Eq.~(\ref{gidqnn_lemma_rho_4_1}) holds for the case $h'=k$. Then for all $\bm{i} \in \{0,1,2\}^{h-k-1}$ and $\bm{j} \in \{0,1\}^{h-k-1}$, we have
\begin{align}
& \left| \sum_{\bm{i}'=\bm{j}'+\bm{1}_{k+1}}^{\bm{2}_{k+1}} \left[ 1- \left(\cos 2\theta\right)^{\|\bm{i}'\|_1-\|\bm{j}'\|_1 - k-1} \right] \Tr \left[ O_{\bm{i}',\bm{i}}^{\bm{j}',\bm{j}} \rho \right] \right| \nonumber \\
={}& \left| \sum_{\bm{i}'=\bm{j}'+\bm{1}_{k}}^{\bm{2}_{k}} \sum_{i'_{k+1}=j'_{k+1}+1}^{2} \left[ 1- \left(\cos 2\theta\right)^{\|\bm{i}'\|_1 + i'_{k+1}-\|\bm{j}'\|_1 -j'_{k+1} - k-1} \right] \Tr \left[ O_{\bm{i}',i'_{k+1},\bm{i}}^{\bm{j}',j'_{k+1},\bm{j}} \rho \right] \right| \label{gidqnn_lemma_rho_4_2} .
\end{align}
For the case $j'_{k+1}=1$, 
\begin{align}
\text{Eq.~(\ref{gidqnn_lemma_rho_4_2})} ={}& \left| \sum_{\bm{i}'=\bm{j}'+\bm{1}_{k}}^{\bm{2}_{k}}  \left[ 1- \left(\cos 2\theta\right)^{\|\bm{i}'\|_1 -\|\bm{j}'\|_1 - k} \right] \Tr \left[ O_{\bm{i}',2,\bm{i}}^{\bm{j}',1,\bm{j}} \rho \right] \right| \nonumber \\
\leq{}& \left[ (2-\cos2\theta)^{k-\|\bm{j}'\|_1}-1\right] \|\bm{c}\|_1 \left\| O \right\|_2 \label{gidqnn_lemma_rho_4_2_special_1_2} \\
={}& \left[ (2-\cos2\theta)^{k+1-\|\bm{j}'\|_1-j'_{k+1}}-1\right] \|\bm{c}\|_1 \left\| O \right\|_2 \label{gidqnn_lemma_rho_4_2_special_1_3} .
\end{align}
InEq.~(\ref{gidqnn_lemma_rho_4_2_special_1_2}) is derived by using the $h'=k$ case of InEq.~(\ref{gidqnn_lemma_rho_4_1}). Eq.~(\ref{gidqnn_lemma_rho_4_2_special_1_3}) is derived by using $j_{k+1}'=1$.
For the case $j'_{k+1}=0$, 
\begin{align}
{} \text{Eq.~(\ref{gidqnn_lemma_rho_4_2})} 
={}& \left| \sum_{\bm{i}'=\bm{j}'+\bm{1}_{k}}^{\bm{2}_{k}}  \left[ 1- \left(\cos 2\theta\right)^{\|\bm{i}'\|_1 -\|\bm{j}'\|_1 - k} \right] \Tr \left[ O_{\bm{i}',1,\bm{i}}^{\bm{j}',0,\bm{j}} \rho \right] \right. \nonumber \\
&+{} \left. \sum_{\bm{i}'=\bm{j}'+\bm{1}_{k}}^{\bm{2}_{k}} \left[ 1- \left(\cos 2\theta\right)^{\|\bm{i}'\|_1 -\|\bm{j}'\|_1 - k+1} \right] \Tr \left[ O_{\bm{i}',2,\bm{i}}^{\bm{j}',0,\bm{j}} \rho \right] \right| \nonumber \\
={}& \left| \sum_{\bm{i}'=\bm{j}'+\bm{1}_{k}}^{\bm{2}_{k}}  \left[ 1- \left(\cos 2\theta\right)^{\|\bm{i}'\|_1 -\|\bm{j}'\|_1 - k} \right] \Tr \left[ O_{\bm{i}',0,\bm{i}}^{\bm{j}',0,\bm{j}} \rho \right] \right. \nonumber \\
&+{} (1-\cos2\theta) \sum_{\bm{i}'=\bm{j}'+\bm{1}_{k}}^{\bm{2}_{k}} \left[ \left(\cos 2\theta\right)^{\|\bm{i}'\|_1 -\|\bm{j}'\|_1 - k} -1+1 \right] \Tr \left[ O_{\bm{i}',2,\bm{i}}^{\bm{j}',0,\bm{j}} \rho \right] \Bigg| \label{gidqnn_lemma_rho_4_3} \\
\leq{}& \left| \sum_{\bm{i}'=\bm{j}'+\bm{1}_{k}}^{\bm{2}_{k}}  \left[ 1- \left(\cos 2\theta\right)^{\|\bm{i}'\|_1 -\|\bm{j}'\|_1 - k} \right] \Tr \left[ O_{\bm{i}',0,\bm{i}}^{\bm{j}',0,\bm{j}} \rho \right] \right| \nonumber \\
&+{} (1-\cos2\theta) \left| \sum_{\bm{i}'=\bm{j}'+\bm{1}_{k}}^{\bm{2}_{k}} \left[1- \left(\cos 2\theta\right)^{\|\bm{i}'\|_1 -\|\bm{j}'\|_1 - k} \right] \Tr \left[ O_{\bm{i}',2,\bm{i}}^{\bm{j}',0,\bm{j}} \rho \right] \right| \nonumber \\
&+{} (1-\cos2\theta) \left| \sum_{\bm{i}'=\bm{j}'+\bm{1}_{k}}^{\bm{2}_{k}} \Tr \left[ O_{\bm{i}',2,\bm{i}}^{\bm{j}',0,\bm{j}} \rho \right] \right|  \label{gidqnn_lemma_rho_4_4} \\
\leq{}& \left[ (2-\cos2\theta)^{k-\|\bm{j}'\|_1}-1 \right] \|\bm{c}\|_1 \left\| O \right\|_2 + (1-\cos 2\theta) \left[ (2-\cos2\theta)^{k-\|\bm{j}'\|_1}-1 \right] \|\bm{c}\|_1 \left\| O \right\|_2 \nonumber \\
&+{} (1-\cos 2\theta) \|\bm{c}\|_1 \left\| O \right\|_2 \label{gidqnn_lemma_rho_4_5} \\
\leq{}& \left[ (2-\cos2\theta)^{k+1-\|\bm{j}'\|_1-j'_{k+1}}-1 \right] \|\bm{c}\|_1 \left\| O \right\|_2. \label{gidqnn_lemma_rho_4_6}
\end{align}
Eq.~(\ref{gidqnn_lemma_rho_4_3}) is derived by using Eq.~(\ref{gidqnn_lemma_rho_sum_o_i}). InEq.~(\ref{gidqnn_lemma_rho_4_4}) is obtained since $|a+b| \leq |a|+|b|$. InEq.~(\ref{gidqnn_lemma_rho_4_5}) is obtained using the $h'=k$ case of Eq.~(\ref{gidqnn_lemma_rho_4_1}) and Eq.~(\ref{gidqnn_lemma_rho_sum_o_i}). InEq.~(\ref{gidqnn_lemma_rho_4_6}) is derived by using Eq.~(\ref{gidqnn_lemma_rho_norm_o}). 
Thus we have proved Eq.~(\ref{gidqnn_lemma_rho_4_1}) since Eqs.~(\ref{gidqnn_lemma_rho_4_2_special_1_3}) and (\ref{gidqnn_lemma_rho_4_6}) match the $h'=k+1$ case. 

Since $\cos 2\theta \geq 1-2\theta^2$, we could further bound Eq.~(\ref{gidqnn_lemma_rho_3_5}) by
\begin{align}
\text{Eq.~(\ref{gidqnn_lemma_rho_3_5})}\geq & -2 \|\bm{c}\|_1^2 \|O\|_2^2 \mathop{\E}_{\theta \sim \mathcal{N}(0,\gamma^2)} \left[ (1+ 2\theta^2) ^{h}-1\right] \label{gidqnn_lemma_rho_5_1} \\
= & -2 \|\bm{c}\|_1^2 \|O\|_2^2 \mathop{\E}_{\theta \sim \mathcal{N}(0,\gamma^2)} \sum_{t=1}^{h} \binom{h}{t} (2\theta^2)^t \nonumber \\ 
= & -2 \|\bm{c}\|_1^2 \|O\|_2^2 \sum_{t=1}^{h} \binom{h}{t} (2t-1)!! (2\gamma^2)^t \label{gidqnn_lemma_rho_5_3} \\ 
\geq & -2 \|\bm{c}\|_1^2 \|O\|_2^2 \sum_{t=1}^{h} {h(h-1)^{t-1}} 2^{t-1} \left(\frac{1}{6h^2} \right)^{t-1} (2\gamma^2) \label{gidqnn_lemma_rho_5_4} \\ 
= & -2 \|\bm{c}\|_1^2 \|O\|_2^2 2h\gamma^2 \left[ 1+ \frac{h-1}{3h^2} \sum_{t=0}^{h-2} \left( \frac{h-1}{3h^2} \right)^{t} \right] \nonumber \\ 
\geq & - \left( 6h-2 \right)  \|\bm{c}\|_1^2 \|O\|_2^2 \gamma^2 \label{gidqnn_lemma_rho_5_6} .
\end{align}
Eq.~(\ref{gidqnn_lemma_rho_5_3}) is derived by calculating expectation terms.
InEq.~(\ref{gidqnn_lemma_rho_5_4}) yields from $\frac{(2t-1)!!}{t!} \leq 2^{t-1}$ and the condition $\gamma^2 \leq \frac{1}{12h^2}$. Thus, we have proved InEq.~(\ref{gidqnn_lemma_rho_2_1}).

Next, we focus on the Eq.~(\ref{gidqnn_lemma_rho_2_2}). The left side of Eq.~(\ref{gidqnn_lemma_rho_2_2}) could be bounded by
\begin{align}
= & \mathop{\E}_{\theta \sim \mathcal{N}(0,\gamma^2)} \sum_{\bm{j}>\bm{0}_{h}, 2|\|\bm{j}\|_1}^{\bm{1}_{h}} \sum_{\bm{i}=\bm{j}+\bm{1}_h}^{\bm{2}_h} (\cos 2\theta)^{\|\bm{i}\|_1-\|\bm{j}\|_1-h} (\sin 2\theta)^{\|\bm{j}\|_1}  \Tr \left[ O_{\bm{i}}^{\bm{j}} \rho \right] \nonumber \\
& \qquad \cdot \sum_{\bm{i}'=\bm{1}_h}^{\bm{2}_h} (\cos 2\theta)^{\|\bm{i}'\|_1-h}   \Tr \left[ O_{\bm{i}'}^{\bm{0}_h} \rho \right]  \label{gidqnn_lemma_rho_6_1} \\
\geq & - \mathop{\E}_{\theta \sim \mathcal{N}(0,\gamma^2)} \sum_{\bm{j}>\bm{0}_{h}, 2|\|\bm{j}\|_1}^{\bm{1}_{h}} (\sin 2\theta)^{\|\bm{j}\|_1} \left( \left| \sum_{\bm{i}=\bm{j}+\bm{1}_h}^{\bm{2}_h} \left[ 1- (\cos 2\theta)^{\|\bm{i}\|_1-\|\bm{j}\|_1-h} \right] \Tr \left[ O_{\bm{i}}^{\bm{j}} \rho \right] \right| \right. \nonumber \\ 
& \qquad + \left. \left| \sum_{\bm{i}=\bm{j}+\bm{1}_h}^{\bm{2}_h} \Tr \left[ O_{\bm{i}}^{\bm{j}} \rho \right] \right| \right) 
 \cdot \left( \left| \sum_{\bm{i}'=\bm{1}_h}^{\bm{2}_h} \left[ 1- (\cos 2\theta)^{\|\bm{i}'\|_1-h} \right] \Tr \left[ O_{\bm{i}'}^{\bm{0}_h} \rho \right] \right| + \left| \sum_{\bm{i}'=\bm{1}_h}^{\bm{2}_h} \Tr \left[ O_{\bm{i}'}^{\bm{0}_h} \rho \right] \right| \right) \label{gidqnn_lemma_rho_6_2} \\
\geq & - \mathop{\E}_{\theta \sim \mathcal{N}(0,\gamma^2)} \sum_{\bm{j}>\bm{0}_{h}, 2|\|\bm{j}\|_1}^{\bm{1}_{h}} (\sin 2\theta)^{\|\bm{j}\|_1} \left( \left[ (2-\cos2\theta)^{h-\|\bm{j}\|_1}-1\right] \|\bm{c}\|_1 \left\|O_{\bm{0}_{h}}^{\bm{0}_{h}} \right\|_2 + \left| \Tr \left[ O_{2\bm{j}}^{\bm{j}} \rho \right] \right| \right) \nonumber \\
& \qquad \cdot \left( \left[ (2-\cos2\theta)^{h}-1\right] \|\bm{c}\|_1 \left\|O_{\bm{0}_{h}}^{\bm{0}_{h}} \right\|_2 + \left| \Tr \left[ O_{\bm{0}_h}^{\bm{0}_h} \rho \right] \right| \right) \label{gidqnn_lemma_rho_6_3} \\
\geq & - \mathop{\E}_{\theta \sim \mathcal{N}(0,\gamma^2)} \sum_{\bm{j}>\bm{0}_{h}, 2|\|\bm{j}\|_1}^{\bm{1}_{h}} (\sin 2\theta)^{\|\bm{j}\|_1} (2-\cos2\theta)^{2h-\|\bm{j}\|_1}  \|\bm{c}\|_1^2 \left\|O \right\|_2^2  \label{gidqnn_lemma_rho_6_4} \\
\geq & - \|\bm{c}\|_1^2 \|O\|_2^2 \mathop{\E}_{\theta \sim \mathcal{N}(0,\gamma^2)} \sum_{\bm{j}>\bm{0}_{h}, 2|\|\bm{j}\|_1}^{\bm{1}_{h}} (2\theta)^{\|\bm{j}\|_1} (1+2\theta^2)^{2h-\|\bm{j}\|_1} . \label{gidqnn_lemma_rho_6_5}
\end{align}
Eq.~(\ref{gidqnn_lemma_rho_6_1}) is obtained by noticing that the expectation of $\sin^a 2\theta \cos^b 2\theta$ equals to zero, if $a$ is odd.
InEq.~(\ref{gidqnn_lemma_rho_6_2}) is obtained by using $\sum_{i,j} a_i b_j \geq  - ( \sum_i |a_i| ) ( \sum_j |b_j| )$ and $|a+b| \leq |a|+|b|$.
InEq.~(\ref{gidqnn_lemma_rho_6_3}) is derived by using the $h'=h$ case of Eq.~(\ref{gidqnn_lemma_rho_4_1}) and Eq.~(\ref{gidqnn_lemma_rho_sum_o_i}).
InEq.~(\ref{gidqnn_lemma_rho_6_4}) is obtained by using  $\|O_{\bm{0}_h}^{\bm{0}_h}\|=\|O\|$ and Eq.~(\ref{gidqnn_lemma_rho_bound_tr_oi_rho}).
InEq.~(\ref{gidqnn_lemma_rho_6_5}) is derived by using $(\sin 2\theta)^2 \leq (2\theta)^2$ and $\cos 2\theta \geq 1-2\theta^2$.

We proceed from InEq.~(\ref{gidqnn_lemma_rho_6_5}), which could be further bounded by
\begin{align}
= & - \|\bm{c}\|_1^2 \|O\|_2^2 \mathop{\E}_{\theta \sim \mathcal{N}(0,\gamma^2)} \sum_{t=1}^{\lfloor h/2 \rfloor} \binom{h}{2t} (2\theta)^{2t} \sum_{m=0}^{2h-2t} \binom{2h-2t}{m} (2\theta^2)^m \label{gidqnn_lemma_rho_6_6} \\
= & - \|\bm{c}\|_1^2 \|O\|_2^2 \sum_{t=1}^{\lfloor h/2 \rfloor} \binom{h}{2t} \sum_{m=0}^{2h-2t} \binom{2h-2t}{m} 2^{2t+m} (2t+2m-1)!! \gamma^{2t+2m} \label{gidqnn_lemma_rho_6_7} \\
\geq & - \|\bm{c}\|_1^2 \|O\|_2^2 \sum_{t=1}^{\lfloor h/2 \rfloor} \sum_{m=0}^{2h-2t} \frac{h(h-1)^{2t-1}}{2^t t!(2t-1)!!} \frac{(2h-2)^{m}}{m!} 2^{2t+m} \nonumber \\
&\qquad \cdot (2t-1)!! (2t+1)(2t+3)\cdots (2t+2m-1) \gamma^{2t+2m} \label{gidqnn_lemma_rho_6_8} \\
\geq & - \|\bm{c}\|_1^2 \|O\|_2^2 \sum_{t=1}^{\lfloor h/2 \rfloor} \sum_{m=0}^{2h-2t} \frac{h(h-1)^{2t-1}}{2^t 2^{t-1}} \frac{(2h-2)^{m}}{m!} 2^{2t+m} (2h)^{m} \gamma^{2t+2m} \label{gidqnn_lemma_rho_6_9} \\
\geq & - \|\bm{c}\|_1^2 \|O\|_2^2 \sum_{t=1}^{\lfloor h/2 \rfloor} \left( 2h(h-1)^{2t-1} \gamma^{2t} + \sum_{m=1}^{2h-2t} 4h(h-1)^{2t-1} (2h-2)^{m} (2h)^{m} \gamma^{2t+2m} \right)\label{gidqnn_lemma_rho_6_10} \\
= & - \|\bm{c}\|_1^2 \|O\|_2^2 \left( \sum_{t=1}^{\lfloor h/2 \rfloor} 2h(h-1)^{2t-1} \gamma^{2t} \right) \cdot \left( 1 + 2 \sum_{m=1}^{2h-2t} \left[ 4h (h-1)\gamma^2 \right]^{m} \right)\label{gidqnn_lemma_rho_6_11} \\
\geq & - \|\bm{c}\|_1^2 \|O\|_2^2 3h(h-1) \gamma^{2} \left( 1 + 12h(h-1)\gamma^2 \right)\label{gidqnn_lemma_rho_6_12} \\
\geq & - \left( 6h^2 - 9h + 3 \right) \gamma^2 \|\bm{c}\|_1^2 \|O\|_2^2 . \label{gidqnn_lemma_rho_6_13}
\end{align}
Here, Eq.~(\ref{gidqnn_lemma_rho_6_6}) is obtained since the summation $\sum_{\bm{j} > \bm{0}_h}^{\bm{1}_h}$ contains $\binom{h}{2t}$ terms such that $\|\bm{j}\|_1=2t$, for all $t \in \{1,\cdots,\lfloor \frac{h}{2} \rfloor\}$.
Eq.~(\ref{gidqnn_lemma_rho_6_7}) is derived by calculating expectation terms.
InEq.~(\ref{gidqnn_lemma_rho_6_8}) is obtained by using 
\begin{equation*}
    \binom{h}{2t} \leq \frac{h(h-1)^{2t-1}}{(2t)!} = \frac{h(h-1)^{2t-1}}{2^t t! (2t-1)!!} \text{ and } \binom{2h-2t}{m} \leq \frac{(2h-2t)^m}{m!} .
\end{equation*}
InEq.~(\ref{gidqnn_lemma_rho_6_9}) is derived by using $t! \geq 2^{t-1}$ and 
\begin{equation*}
    (2t+2k-1)(2t+2m-2k+1) \leq (2t+m)^2 \leq (2h)^2, \forall k \in \{1,\cdots,m-1\} .
\end{equation*}
InEq.~(\ref{gidqnn_lemma_rho_6_10}) is obtained by splitting the summation $\sum_m$ and using $m!\geq 2^{m-1}, \forall m \geq 1$.
InEq.~(\ref{gidqnn_lemma_rho_6_12}) is derived by calculating geometric sequences with the condition $\gamma^2 \leq \frac{1}{12h^2}$. InEq.~(\ref{gidqnn_lemma_rho_6_13}) follows from the condition $\gamma^2 \leq \frac{1}{12h^2}$.
Thus, we have proved Eq.~(\ref{gidqnn_lemma_rho_2_2}). 

\end{proof}

\begin{lemma}\label{gidqnn_lemma_grad}
Let $\rho$ be the density matrix of a quantum state. Let $V_{h}=W_1 e^{-i {\theta} G_1} W_2 \cdots W_h e^{-i {\theta} G_h}$, where $\{G_n\}_{n=1}^{h}$ is a list of hermitian unitaries and $\{W_n\}_{n=1}^{h}$ is a list of  unitary matrices.
Denote by $O$ an arbitrary hermitian quantum observable. 
Then
\begin{align}
    \mathop{\E}_{\theta \sim \mathcal{N}(0,\gamma^2)} \left( \frac{ \partial }{\partial \theta} {\rm Tr} \left[ O V_h \rho {V_h}^\dag \right] \right)^2 \geq{}& (1-4\gamma^2) \left( \frac{ \partial }{\partial \theta} {\rm Tr} \left[ O V_h \rho {V_h}^\dag \right] \Big|_{\theta=0} \right)^2 \nonumber \\
    &-{} 96 h^2 (h-1) \gamma^2 \|O\|_2^2 - 20 h^2 (h-1)(h-2) \gamma^2 \|O\|_2^2, \label{gidqnn_lemma_grad_eq}
\end{align}
where $\|O\|_2$ denotes the spectral norm of $O$ and the variance $\gamma^2 \leq \frac{1}{16h^3}$.
\end{lemma}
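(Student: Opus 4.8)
The plan is to prove Lemma~\ref{gidqnn_lemma_grad} as the gradient analogue of Lemma~\ref{gidqnn_lemma_rho}, reusing the same expansion machinery rather than the single-gate Lemmas~\ref{gidqnn_lemma_rho_special} and \ref{gidqnn_lemma_special_pm} (the $h$ gates all share $\theta$, so one cannot simply ``peel off'' one of them). First I would take the explicit trigonometric expansion of $\mathrm{Tr}[OV_h\rho V_h^\dag]$ already established inside the proof of Lemma~\ref{gidqnn_lemma_rho} (Eq.~(\ref{gidqnn_lemma_rho_detail_f_ij}) with $h'=h$) and differentiate it term by term, using
\[
\partial_\theta\big[(\cos 2\theta)^q(\sin 2\theta)^p\big]=2p\,(\cos 2\theta)^{q+1}(\sin 2\theta)^{p-1}-2q\,(\cos 2\theta)^{q-1}(\sin 2\theta)^{p+1}.
\]
This writes $\partial_\theta\mathrm{Tr}[OV_h\rho V_h^\dag]$ as an explicit sum of monomials $(\cos 2\theta)^a(\sin 2\theta)^b\,\mathrm{Tr}[O_{\bm{i}}^{\bm{j}}\rho]$. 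Evaluating at $\theta=0$ kills every monomial except the $\sin 2\theta$-free ones arising from differentiating the $\|\bm{j}\|_1=1$ sector of Eq.~(\ref{gidqnn_lemma_rho_detail_f_ij}); collapsing the remaining free $i$-indices by the recovery identity Eq.~(\ref{gidqnn_lemma_rho_sum_o_i}) then gives a closed form for $\partial_\theta\mathrm{Tr}[OV_h\rho V_h^\dag]\big|_{\theta=0}$ (a sum of $h$ terms indexed by the position of the single ``$1$'' in $\bm{j}$), which is precisely the quantity multiplied by $1-4\gamma^2$ on the right-hand side of Eq.~(\ref{gidqnn_lemma_grad_eq}).

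Second, I would split the full derivative into a ``leading'' part and a ``remainder''. The leading part is the $\sin 2\theta$-free family of monomials $2(\cos 2\theta)^{q+1}\mathrm{Tr}[O_{\bm{i}}^{\bm{j}}\rho]$ coming from the $\|\bm{j}\|_1=1$ sector; writing $1=(\cos 2\theta)^{q+1}+[1-(\cos 2\theta)^{q+1}]$ and collapsing the ``$1$'' contributions via Eq.~(\ref{gidqnn_lemma_rho_sum_o_i}) recovers $\partial_\theta\mathrm{Tr}[OV_h\rho V_h^\dag]\big|_{\theta=0}$ up to a correction governed by $1-(\cos 2\theta)^{q+1}$. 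The remainder collects (i) the $\sin^2 2\theta$ piece that differentiation leaves behind in the $\|\bm{j}\|_1=1$ sector, and (ii) all contributions of the $\|\bm{j}\|_1=3$ sector (the next odd value, which is the first that pairs with $\|\bm{j}\|_1=1$ into an even total power of $\sin 2\theta$). Squaring and taking $\mathbb{E}_{\theta\sim\mathcal{N}(0,\gamma^2)}$, all cross terms with odd total $\sin 2\theta$-power vanish exactly as in Lemma~\ref{gidqnn_lemma_rho}; the diagonal of the leading part produces $\mathbb{E}[\cos^2 2\theta]$-weighted contributions, and since $\mathbb{E}[\cos^2 2\theta]=\tfrac12(1+e^{-8\gamma^2})\ge 1-4\gamma^2$ (as used in the proof of Lemma~\ref{gidqnn_lemma_special_pm}) this yields $(1-4\gamma^2)\big(\partial_\theta\mathrm{Tr}[OV_h\rho V_h^\dag]\big|_{\theta=0}\big)^2$. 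The two negative error terms come from the $1-(\cos 2\theta)^{q+1}$ correction together with item~(i), and from the $\|\bm{j}\|_1=1$ versus $\|\bm{j}\|_1=3$ cross term of item~(ii).

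Third, the error estimates would be obtained by re-running, in the differentiated setting, the inductive bounds of Lemma~\ref{gidqnn_lemma_rho}: the telescoping inequality Eq.~(\ref{gidqnn_lemma_rho_4_1}), the estimate $|\mathrm{Tr}[O_{\bm{i}}^{\bm{j}}\rho]|\le\|O\|_2$ (Eq.~(\ref{gidqnn_lemma_rho_bound_tr_oi_rho}) with a single density matrix, so $\|\bm{c}\|_1=1$), the operator-norm bound Eq.~(\ref{gidqnn_lemma_rho_norm_o}), the pointwise inequalities $(\sin 2\theta)^2\le(2\theta)^2$ and $\cos 2\theta\ge 1-2\theta^2$, and the Gaussian moments $\mathbb{E}[\theta^{2t}]=(2t-1)!!\,\gamma^{2t}$. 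Differentiating introduces prefactors $p$ and $q$, each bounded by $h$, so relative to the $[12h(h-1)+4]\gamma^2$ error of Lemma~\ref{gidqnn_lemma_rho} the present bounds gain one extra power of $h$ (from the $\|\bm{j}\|_1=1$ sector and the derivative prefactors), yielding $96h^2(h-1)\gamma^2\|O\|_2^2$, and two extra powers of $h$ (from the larger $\|\bm{j}\|_1=3$ index set), yielding $20h^2(h-1)(h-2)\gamma^2\|O\|_2^2$; the geometric series in $h^2\gamma^2$ remain summable under the hypothesis $\gamma^2\le\frac{1}{16h^3}$. The main obstacle is exactly this last step: one cannot invoke Eqs.~(\ref{gidqnn_lemma_rho_4_1})--(\ref{gidqnn_lemma_rho_6_13}) verbatim, but must re-prove the corresponding telescoping and summation inequalities for the differentiated coefficients, carefully tracking the derivative prefactors and the number of surviving $O_{\bm{i}}^{\bm{j}}$ monomials so that the constants land precisely on $96$ and $20$ and on the stated polynomial degrees in $h$.
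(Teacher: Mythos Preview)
Your plan is correct and matches the paper's proof: differentiate Eq.~(\ref{gidqnn_lemma_rho_detail_f_ij}), isolate the $\|\bm{j}\|_1=1$ sine-derivative sector as the leading piece, drop the square of the remainder, and bound three error pieces (the paper's Eqs.~(\ref{gidqnn_lemma_grad_2_1})--(\ref{gidqnn_lemma_grad_2_3})) whose constants assemble as $4\cdot\tfrac{13}{3}+8\cdot\tfrac{59}{6}=96$ and $8\cdot\tfrac{5}{2}=20$. Two small refinements to keep in mind: the surviving cross terms run over \emph{all} odd $\|\bm{j}\|_1$ (not only $1$ and $3$, though the higher ones are indeed geometric tails under $\gamma^2\le\tfrac{1}{16h^3}$), and the derivative prefactor $(h+\|\bm{j}\|_1-\|\bm{i}\|_1)$ forces a new telescoping inequality beyond Eq.~(\ref{gidqnn_lemma_rho_4_1}), namely the paper's Eq.~(\ref{gidqnn_lemma_grad_4_5}) with $g(x)=x\big[(3-\cos 2\theta)(2-\cos 2\theta)^{x-1}-1\big]$.
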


\begin{proof}

For convenience, we follow the notation ${O}_{\bm{i}}^{\bm{j}}$ in Eq.~(\ref{gidqnn_lemma_rho_o}). We can obtain the detailed formulation of $\frac{ \partial }{\partial \theta} {\rm Tr} \left[ O V_h \rho {V_h}^\dag \right]$ by using the $h'=h$ case of Eq.~(\ref{gidqnn_lemma_rho_detail_f_ij}),
\begin{align}
&\quad\ \frac{ \partial }{\partial \theta} {\rm Tr} \left[ O V_h \rho {V_h}^\dag \right] = \frac{\partial}{\partial \theta} \sum_{\bm{j}=\bm{0}_{h}}^{\bm{1}_{h}} \sum_{\bm{i}=\bm{j}+\bm{1}_h}^{\bm{2}_h} \left( \cos 2\theta \right)^{\|\bm{i}\|_1 - \|\bm{j}\|_1-h} \left( \sin 2\theta \right)^{\|\bm{j}\|_1} \Tr \left[ O_{\bm{i}}^{\bm{j}} \rho \right] \label{gidqnn_lemma_grad_1_1} \\
=& 2 \sum_{\bm{j}=\bm{0}_{h}}^{\bm{1}_{h}} \sum_{\bm{i}=\bm{j}+\bm{1}_h}^{\bm{2}_h} (h+\|\bm{j}\|_1 - \|\bm{i}\|_1) \left( \cos 2\theta \right)^{\|\bm{i}\|_1 - \|\bm{j}\|_1-h-1} \left( \sin 2\theta \right)^{\|\bm{j}\|_1+1} \Tr \left[ O_{\bm{i}}^{\bm{j}} \rho \right]  \nonumber \\
&+ 2 \sum_{\bm{j}=\bm{0}_{h}}^{\bm{1}_{h}} \sum_{\bm{i}=\bm{j}+\bm{1}_h}^{\bm{2}_h} \|\bm{j}\|_1 \left( \cos 2\theta \right)^{\|\bm{i}\|_1 - \|\bm{j}\|_1-h+1} \left( \sin 2\theta \right)^{\|\bm{j}\|_1-1} \Tr \left[ O_{\bm{i}}^{\bm{j}} \rho \right]  \label{gidqnn_lemma_grad_1_2} \\
=& 2\sum_{\bm{j}=\bm{0}_h}^{\bm{1}_{h}} \sum_{\bm{i}=\bm{j}+\bm{1}_h}^{\bm{2}_h}  (h +\|\bm{j}\|_1 - \|\bm{i}\|_1) \left( \cos 2\theta \right)^{\|\bm{i}\|_1 - \|\bm{j}\|_1-h-1} \left( \sin 2\theta \right)^{\|\bm{j}\|_1+1}  \Tr \left[ O_{\bm{i}}^{\bm{j}} \rho \right] \nonumber \\ 
&+ 2\sum_{\|\bm{j}\|_1=1} \sum_{\bm{i}=\bm{j}+\bm{1}_h}^{\bm{2}_h} \left( \cos 2\theta \right)^{\|\bm{i}\|_1 -h} \Tr \left[ O_{\bm{i}}^{\bm{j}} \rho \right] \nonumber \\
&+ 2\sum_{\|\bm{j}\|_1 \geq 2}^{\bm{1}_{h}} \sum_{\bm{i}=\bm{j}+\bm{1}_h}^{\bm{2}_h} \|\bm{j}\|_1 \left( \cos 2\theta \right)^{\|\bm{i}\|_1 - \|\bm{j}\|_1-h+1} \left( \sin 2\theta \right)^{\|\bm{j}\|_1-1}  \Tr \left[ O_{\bm{i}}^{\bm{j}} \rho \right] .\label{gidqnn_lemma_grad_1_3}
\end{align}
Here, Eq.~(\ref{gidqnn_lemma_grad_1_1}) follows from Eq.~(\ref{gidqnn_lemma_rho_detail_f_ij}). Eq.~(\ref{gidqnn_lemma_grad_1_2}) is derived by calculating the gradient of sine and cosine terms.
By discarding the square of the sum of the first and the third term in Eq.~(\ref{gidqnn_lemma_grad_1_3}), we obtain
\begin{align}
&{} \quad \left( \frac{ \partial }{\partial \theta} {\rm Tr} \left[ O V_h \rho {V_h}^\dag \right] \right)^2  \geq 4 \left( \sum_{\|\bm{j}\|_1=1} \sum_{\bm{i}=\bm{j}+\bm{1}_h}^{\bm{2}_h}  \left( \cos 2\theta \right)^{\|\bm{i}\|_1 -h}  \Tr \left[ O_{\bm{i}}^{\bm{j}} \rho \right] \right)^2 \nonumber \\
&+{} 8 \left( \sum_{\|\bm{j}\|_1 \geq 2}^{\bm{1}_{h}} \sum_{\bm{i}=\bm{j}+\bm{1}_h}^{\bm{2}_h}  \|\bm{j}\|_1 \left( \cos 2\theta \right)^{\|\bm{i}\|_1 - \|\bm{j}\|_1-h+1} \left( \sin 2\theta \right)^{\|\bm{j}\|_1-1}  \Tr \left[ O_{\bm{i}}^{\bm{j}} \rho \right] \right) \nonumber \\
&{} \quad \cdot \left( \sum_{\|\bm{j}'\|_1=1} \sum_{\bm{i}'=\bm{j}'+\bm{1}_h}^{\bm{2}_h}  \left( \cos 2\theta \right)^{\|\bm{i}'\|_1 -h} \Tr \left[ O_{\bm{i}'}^{\bm{j}'} \rho \right] \right) \nonumber \\
&+{} 8 \left( \sum_{\bm{j}=\bm{0}_h}^{\bm{1}_{h}} \sum_{\bm{i}=\bm{j}+\bm{1}_h}^{\bm{2}_h} (h +\|\bm{j}\|_1 - \|\bm{i}\|_1)  \left( \cos 2\theta \right)^{\|\bm{i}\|_1 - \|\bm{j}\|_1-h-1} \left( \sin 2\theta \right)^{\|\bm{j}\|_1+1}  \Tr \left[ O_{\bm{i}}^{\bm{j}} \rho \right] \right) \nonumber \\
&{}\quad \cdot \left( \sum_{\|\bm{j}'\|_1=1} \sum_{\bm{i}'=\bm{j}'+\bm{1}_h}^{\bm{2}_h}  \left( \cos 2\theta \right)^{\|\bm{i}'\|_1 -h} \Tr \left[ O_{\bm{i}'}^{\bm{j}'} \rho \right] \right) \label{gidqnn_lemma_grad_1_4} .
\end{align}
Let $\theta=0$ in Eq.~(\ref{gidqnn_lemma_grad_1_3}), we obtain
\begin{align}
\frac{ \partial }{\partial \theta} {\rm Tr} \left[ O V_h \rho {V_h}^\dag \right] \Big|_{\theta=0} = 2 \sum_{\|\bm{j}\|_1=1} \sum_{\bm{i}=\bm{j}+\bm{1}_h}^{\bm{2}_h}  \Tr \left[ O_{\bm{i}}^{\bm{j}} \rho \right]. \label{gidqnn_lemma_grad_theta0}
\end{align}
Thus, we could obtain Eq.~(\ref{gidqnn_lemma_grad_eq}) if Eqs.~(\ref{gidqnn_lemma_grad_2_1}-\ref{gidqnn_lemma_grad_2_3}) hold.
\begin{align}
\mathop{\E}_{\theta \sim \mathcal{N}(0,\gamma^2)} & \left( \sum_{\|\bm{j}\|_1=1} \sum_{\bm{i}=\bm{j}+\bm{1}_h}^{\bm{2}_h}  \left( \cos 2\theta \right)^{\|\bm{i}\|_1 -h} \Tr \left[ O_{\bm{i}}^{\bm{j}} \rho \right] \right)^2 - (1-4\gamma^2) \left( \sum_{\|\bm{j}\|_1=1} \sum_{\bm{i}=\bm{j}+\bm{1}_h}^{\bm{2}_h}  \Tr \left[ O_{\bm{i}}^{\bm{j}} \rho \right] \right)^2 \nonumber \\
& \geq -\frac{13}{3} h^2 (h-1) \gamma^2 \|O\|_2^2 \label{gidqnn_lemma_grad_2_1} , \\
\mathop{\E}_{\theta \sim \mathcal{N}(0,\gamma^2)} & \left( \sum_{\bm{j} = \bm{0}_h}^{\bm{1}_{h}} \sum_{\bm{i}=\bm{j}+\bm{1}_h}^{\bm{2}_h} (h +\|\bm{j}\|_1 - \|\bm{i}\|_1)  \left( \cos 2\theta \right)^{\|\bm{i}\|_1 - \|\bm{j}\|_1-h-1} \left( \sin 2\theta \right)^{\|\bm{j}\|_1+1}  \Tr \left[ O_{\bm{i}}^{\bm{j}} \rho \right] \right) \nonumber \\
\cdot &\left( \sum_{\|\bm{j}'\|_1=1} \sum_{\bm{i}'=\bm{j}'+\bm{1}_h}^{\bm{2}_h}  \left( \cos 2\theta \right)^{\|\bm{i}'\|_1 -h} \Tr \left[ O_{\bm{i}'}^{\bm{j}'} \rho \right] \right)  \geq  - \frac{59}{6} h^2 (h-1) \gamma^2 \|O\|_2^2 \label{gidqnn_lemma_grad_2_2} , \\
\mathop{\E}_{\theta \sim \mathcal{N}(0,\gamma^2)} & \left( \sum_{\|\bm{j}\|_1 \geq 2}^{\bm{1}_{h}} \sum_{\bm{i}=\bm{j}+\bm{1}_h}^{\bm{2}_h}  \|\bm{j}\|_1 \left( \cos 2\theta \right)^{\|\bm{i}\|_1 - \|\bm{j}\|_1-h+1} \left( \sin 2\theta \right)^{\|\bm{j}\|_1-1}  \Tr \left[ O_{\bm{i}}^{\bm{j}} \rho \right] \right) \nonumber \\
\cdot &\left( \sum_{\|\bm{j}'\|_1=1} \sum_{\bm{i}'=\bm{j}'+\bm{1}_h}^{\bm{2}_h}  \left( \cos 2\theta \right)^{\|\bm{i}'\|_1 -h} \Tr \left[ O_{\bm{i}'}^{\bm{j}'} \rho \right] \right) \geq	- \frac{5}{2} h^2 (h-1)(h-2) \gamma^2 \|O\|_2^2 \label{gidqnn_lemma_grad_2_3} .
\end{align}

We begin by proving Eq.~(\ref{gidqnn_lemma_grad_2_1}). The left side of Eq.~(\ref{gidqnn_lemma_grad_2_1}) can be lower bounded as
\begin{align}
\geq{}&  \mathop{\E}_{\theta \sim \mathcal{N}(0,\gamma^2)} \left( \sum_{\|\bm{j}\|_1=1} \sum_{\bm{i}=\bm{j}+\bm{1}_h}^{\bm{2}_h} \left[ \cos 2\theta - \left( \cos 2\theta \right)^{\|\bm{i}\|_1 -h} -\cos 2\theta \right] \Tr \left[ O_{\bm{i}}^{\bm{j}} \rho \right] \right)^2 \nonumber \\
&\qquad -  \left( \cos2\theta \sum_{\|\bm{j}\|_1=1} \sum_{\bm{i}=\bm{j}+\bm{1}_h}^{\bm{2}_h}  \Tr \left[ O_{\bm{i}}^{\bm{j}} \rho \right] \right)^2 \label{gidqnn_lemma_grad_3_1} \\
\geq{}& -2 \mathop{\E}_{\theta \sim \mathcal{N}(0,\gamma^2)} (\cos 2\theta)^2 \sum_{\|\bm{j}\|_1=1} \left| \sum_{\bm{i}=\bm{j}+\bm{1}_h}^{\bm{2}_h} \left[ 1 - \left( \cos 2\theta \right)^{\|\bm{i}\|_1 -1-h} \right] \Tr \left[ O_{\bm{i}}^{\bm{j}} \rho \right] \right| \nonumber \\
&\qquad \cdot \sum_{\|\bm{j}'\|_1=1} \left| \sum_{\bm{i}'=\bm{j}'+\bm{1}_h}^{\bm{2}_h}  \Tr \left[ O_{\bm{i}'}^{\bm{j}'} \rho \right] \right| \label{gidqnn_lemma_grad_3_2} \\
\geq{}& -2 \mathop{\E}_{\theta \sim \mathcal{N}(0,\gamma^2)} (\cos 2\theta)^2 \sum_{\|\bm{j}\|_1=1} \left[ (2-\cos 2\theta)^{h-1} -1 \right] \|O\|_2 \cdot \sum_{\|\bm{j}\|_1=1} \left| \sum_{\bm{i}=\bm{j}+\bm{1}_h}^{\bm{2}_h}  \Tr \left[ O_{\bm{i}}^{\bm{j}} \rho \right] \right| \label{gidqnn_lemma_grad_3_3} \\
\geq{}& -2 \mathop{\E}_{\theta \sim \mathcal{N}(0,\gamma^2)} (\cos 2\theta)^2 h \left[ (2-\cos 2\theta)^{h-1} -1 \right] \|O\|_2 \cdot h \|O\|_2 \label{gidqnn_lemma_grad_3_4} \\
\geq{}& -2h^2 \|O\|_2^2 \mathop{\E}_{\theta \sim \mathcal{N}(0,\gamma^2)}  \left[ (1+ 2\theta^2)^{h-1} -1 \right]  \label{gidqnn_lemma_grad_3_5} \\
\geq{}& -\frac{13}{3} h^2 (h-1) \gamma^2 \|O\|_2^2 . \label{gidqnn_lemma_grad_3_6}
\end{align}
Here, InEq.~(\ref{gidqnn_lemma_grad_3_1}) follows from $$1-4\gamma^2 = \mathbb{E}_{\theta} [ 1-4\theta^2 ] \leq \mathbb{E}_{\theta} (1-2\theta^2)^2 \leq \mathbb{E}_{\theta} (\cos 2\theta)^2 . $$ InEq.~(\ref{gidqnn_lemma_grad_3_2}) is obtained by using $(a-b)^2-b^2 \geq -2|a|\cdot|b|$. InEq.~(\ref{gidqnn_lemma_grad_3_3}) follows from the $h'=h$ and $\|\bm{j}\|=1$ case of Eq.~(\ref{gidqnn_lemma_rho_4_1}). InEq.~(\ref{gidqnn_lemma_grad_3_4}) is derived by using Eq.~(\ref{gidqnn_lemma_rho_sum_o_i}). InEq.~(\ref{gidqnn_lemma_grad_3_5}) is obtained by using $\cos 2\theta \geq 1-2\theta^2$. InEq.~(\ref{gidqnn_lemma_grad_3_6}) follows from 
the derivation below.
\begin{align}
\mathop{\E}_{\theta \sim \mathcal{N}(0,\gamma^2)}  (1+ 2\theta^2)^{h-1} -1 ={}& \mathop{\E}_{\theta \sim \mathcal{N}(0,\gamma^2)} \sum_{t=1}^{h-1} \binom{h-1}{t} (2\theta^2)^{t} \nonumber \\
={}& \sum_{t=1}^{h-1} \binom{h-1}{t} (2t-1)!! (2\gamma^2)^{t} \label{gidqnn_lemma_grad_3_7} \\
\leq{}& \sum_{t=1}^{h-1} (h-1)(h-2)^{t-1} 2^{t-1}
 (2\gamma^2)^{t} \nonumber \\
\leq{}& 2(h-1)\gamma^2 \sum_{t=1}^{h-1} \left[ h^3 \gamma^2 \right]^{t-1} \label{gidqnn_lemma_grad_3_8} \\
\leq{}& \frac{13}{6} (h-1) \gamma^2 \label{gidqnn_lemma_grad_3_9},
\end{align}
where Eq.~(\ref{gidqnn_lemma_grad_3_7}) is obtained by calculating expectation terms. InEq~(\ref{gidqnn_lemma_grad_3_8}) follows from $h^3 \geq 4(h-2)$ for integer $h$. InEq.~(\ref{gidqnn_lemma_grad_3_9}) is derived by calculating geometric sequences with the condition $\gamma^2 \leq \frac{1}{16h^3}$.

Next, we prove Eq.~(\ref{gidqnn_lemma_grad_2_2}). The left side of Eq.~(\ref{gidqnn_lemma_grad_2_2}) could be lower bounded by
\begin{align}
={}& \mathop{\E}_{\theta \sim \mathcal{N}(0,\gamma^2)}  \left( \sum_{\|\bm{j}'\|_1=1} \sum_{\bm{i}'=\bm{j}'+\bm{1}_h}^{\bm{2}_h}  \left( \cos 2\theta \right)^{\|\bm{i}'\|_1 -1-h} \Tr \left[ O_{\bm{i}'}^{\bm{j}'} \rho \right] \right)  \nonumber \\
& \qquad \quad \ \cdot \left( \sum_{\substack{\bm{j} = \bm{0}_h \\ 2|(\|\bm{j}\|_1+1)}}^{\bm{1}_{h}} \sum_{\bm{i}=\bm{j}+\bm{1}_h}^{\bm{2}_h} (h +\|\bm{j}\|_1 - \|\bm{i}\|_1)  \left( \cos 2\theta \right)^{\|\bm{i}\|_1 - \|\bm{j}\|_1-h} \left( \sin 2\theta \right)^{\|\bm{j}\|_1+1}  \Tr \left[ O_{\bm{i}}^{\bm{j}} \rho \right] \right) \label{gidqnn_lemma_grad_4_1} \\
\geq{}& - \mathop{\E}_{\theta \sim \mathcal{N}(0,\gamma^2)}   \sum_{\|\bm{j}'\|_1=1} \left( \left| \sum_{\bm{i}'=\bm{j}'+\bm{1}_h}^{\bm{2}_h}  \left[ 1 - \left( \cos 2\theta \right)^{\|\bm{i}'\|_1 -1-h} \right] \Tr \left[ O_{\bm{i}'}^{\bm{j}'} \rho \right] \right| + \left| \sum_{\bm{i}'=\bm{j}'+\bm{1}_h}^{\bm{2}_h} \Tr \left[ O_{\bm{i}'}^{\bm{j}'} \rho \right] \right| \right) \nonumber \\
& \qquad \quad \ \cdot \sum_{\substack{\bm{j} = \bm{0}_h \\ 2|(\|\bm{j}\|_1+1)}}^{\bm{1}_{h}} \left( \sin 2\theta \right)^{\|\bm{j}\|_1+1} \left| \sum_{\bm{i}=\bm{j}+\bm{1}_h}^{\bm{2}_h} (\|\bm{i}\|_1 - \|\bm{j}\|_1 -h)  \left( \cos 2\theta \right)^{\|\bm{i}\|_1 - \|\bm{j}\|_1-h}  \Tr \left[ O_{\bm{i}}^{\bm{j}} \rho \right] \right|    \label{gidqnn_lemma_grad_4_2}  \\
\geq{}& - \mathop{\E}_{\theta \sim \mathcal{N}(0,\gamma^2)}   \sum_{\|\bm{j}'\|_1=1} (2-\cos2\theta)^{h-1} \|O\|_2  \sum_{\substack{\bm{j} = \bm{0}_h \\ 2|(\|\bm{j}\|_1+1)}}^{\bm{1}_{h}} \left( \sin 2\theta \right)^{\|\bm{j}\|_1+1} \nonumber \\
& \qquad \left| \sum_{\bm{i}=\bm{j}+\bm{1}_h}^{\bm{2}_h} \left[ (h-\|\bm{j}\|_1) - (\|\bm{i}\|_1 - \|\bm{j}\|_1 -h) \left( \cos 2\theta \right)^{\|\bm{i}\|_1 - \|\bm{j}\|_1-h} - (h-\|\bm{j}\|_1) \right] \Tr \left[ O_{\bm{i}}^{\bm{j}} \rho \right] \right|    \label{gidqnn_lemma_grad_4_3} \\
\geq{}& - h \|O\|_2^2 \mathop{\E}_{\theta \sim \mathcal{N}(0,\gamma^2)}  (2-\cos2\theta)^{h-1}  \nonumber \\
& \qquad \cdot \sum_{\substack{\bm{j} = \bm{0}_h \\ 2|(\|\bm{j}\|_1+1)}}^{\bm{1}_{h}} \left( \sin 2\theta \right)^{\|\bm{j}\|_1+1}  (h-\|\bm{j}\|_1)  (3-\cos2\theta)(2-\cos2\theta)^{h-\|\bm{j}\|_1-1}      \label{gidqnn_lemma_grad_4_4} .
\end{align}
Here, Eq.~(\ref{gidqnn_lemma_grad_4_1}) is obtained by noticing that the expectation of $\sin^a 2\theta \cos^b 2\theta$ equals to zero, if $a$ is odd. InEq.~(\ref{gidqnn_lemma_grad_4_2}) is derived by using $|\sum_k a_k| \leq \sum_k |a_k|$. InEq.~(\ref{gidqnn_lemma_grad_4_3}) is obtained by using the $h'=h$, $\|\bm{j}\|_1=1$ case of Eq.~(\ref{gidqnn_lemma_rho_4_1}) and Eq.~(\ref{gidqnn_lemma_rho_sum_o_i}).
InEq.~(\ref{gidqnn_lemma_grad_4_4}) follows from Eq.~(\ref{gidqnn_lemma_rho_sum_o_i}) and the $h'=h$ case of Eq.~(\ref{gidqnn_lemma_grad_4_5}), i.e.
\begin{align}
&{} \left| \sum_{\bm{i}'=\bm{j}'+\1_{h'}}^{\2_{h'}} \left[ (h'-\|\bm{j}'\|_1) - (\|\bm{i}'\|_1-\|\bm{j}'\|_1-h') (\cos2\theta)^{\|\bm{i}'\|_1-\|\bm{j}'\|_1-h'} \right] \Tr \left[ O_{\bm{i}',\bm{i}}^{\bm{j}',\bm{j}} \rho \right] \right| \nonumber \\
\leq{}& g(h'-\|\bm{j}'\|_1) \left\| O \right\|_2 \label{gidqnn_lemma_grad_4_5}
\end{align}
for all $h' \in \{0,1,\cdots,h\}$, $\bm{i} \in \{0,1,2\}^{h-h'}$, and $\bm{j} \in \{0,1\}^{h-h'}$,
where 
\begin{equation}\label{gidqnn_lemma_grad_g_x}
g(x)=x \left[ (3-\cos2\theta)(2-\cos2\theta)^{x-1}-1 \right] .
\end{equation}
InEq.~(\ref{gidqnn_lemma_grad_4_5}) can be proved inductively. First, InEq.~(\ref{gidqnn_lemma_grad_4_5}) holds trivially when $h'=0$. Next, we assume that  InEq.~(\ref{gidqnn_lemma_grad_4_5}) holds for the $h'=k$ case. Then, for all $\bm{i} \in \{0,1,2\}^{h-k-1}$ and  $\bm{j} \in \{0,1\}^{h-k-1}$, we have
\begin{align}
& \left| \sum_{\bm{i}'=\bm{j}'+\1_{k+1}}^{\2_{k+1}} \left[ (k+1-\|\bm{j}'\|_1) - (\|\bm{i}'\|_1-\|\bm{j}'\|_1-k-1) (\cos2\theta)^{\|\bm{i}'\|_1-\|\bm{j}'\|_1-k-1} \right] \Tr \left[ O_{\bm{i}',\bm{i}}^{\bm{j}',\bm{j}} \rho \right] \right| \nonumber \\
={}& \Bigg| \sum_{i'_{k+1}=j'_{k+1}+1}^2  \sum_{\bm{i}'=\bm{j}'+\1_{k}}^{\2_{k}} \big[ (k+1-\|\bm{j}'\|_1-j'_{k+1})  \nonumber \\
&-{} (\|\bm{i}'\|_1+i'_{k+1} -\|\bm{j}'\|_1-j'_{k+1} -k-1) (\cos2\theta)^{\|\bm{i}'\|_1 + i'_{k+1} -\|\bm{j}'\|_1 -j'_{k+1} -k-1} \big] \Tr \left[ O_{\bm{i}',i'_{k+1},\bm{i}}^{\bm{j}',j'_{k+1},\bm{j}} \rho \right] \Bigg| . \label{gidqnn_lemma_grad_5_0}
\end{align}
For the case $j'_{k+1}=1$, we have
\begin{align}
\text{Eq.~(\ref{gidqnn_lemma_grad_5_0})} ={}& \Bigg|  \sum_{\bm{i}'=\bm{j}'+\1_{k}}^{\2_{k}} \big[ (k-\|\bm{j}'\|_1) - (\|\bm{i}'\|_1 -\|\bm{j}'\|_1 -k) (\cos2\theta)^{\|\bm{i}'\|_1 -\|\bm{j}'\|_1 -k} \big] \Tr \left[ O_{\bm{i}',2,\bm{i}}^{\bm{j}',1,\bm{j}} \rho \right] \Bigg| \nonumber \\
\leq{}& g(k-\|\bm{j}'\|_1) \left\| O \right\|_2 \label{gidqnn_lemma_grad_5_sp_1} \\
={}& g(k+1-\|\bm{j}'\|_1-j'_{k+1}) \left\| O \right\|_2 \label{gidqnn_lemma_grad_5_sp_2} .
\end{align}
Here, Eq.~(\ref{gidqnn_lemma_grad_5_sp_1}) follows from the $h'=k$ case of InEq.~(\ref{gidqnn_lemma_grad_4_5}). Eq.~(\ref{gidqnn_lemma_grad_5_sp_2}) is obtained by using $j_{k+1}'=1$. We remark that InEq.~(\ref{gidqnn_lemma_grad_5_sp_2}) matches the $h'=k+1$ case of InEq.~(\ref{gidqnn_lemma_grad_4_5}).

For the case $j'_{k+1}=0$, the situation is more complicated. We have
\begin{align}
{}& \text{Eq.~(\ref{gidqnn_lemma_grad_5_0})} \nonumber \\
={}& \Bigg|   \sum_{\bm{i}'=\bm{j}'+\1_{k}}^{\2_{k}} \left[ (k+1-\|\bm{j}'\|_1) - (\|\bm{i}'\|_1 -\|\bm{j}'\|_1 -k) (\cos2\theta)^{\|\bm{i}'\|_1 -\|\bm{j}'\|_1 -k} \right] \Tr \left[ O_{\bm{i}',1,\bm{i}}^{\bm{j}',0,\bm{j}} \rho \right] \nonumber \\
&+{}  \sum_{\bm{i}'=\bm{j}'+\1_{k}}^{\2_{k}} \left[ (k+1-\|\bm{j}'\|_1) - (\|\bm{i}'\|_1 -\|\bm{j}'\|_1 -k+1) (\cos2\theta)^{\|\bm{i}'\|_1 -\|\bm{j}'\|_1 -k+1} \right] \Tr \left[ O_{\bm{i}',2,\bm{i}}^{\bm{j}',0,\bm{j}} \rho \right] \Bigg| \nonumber \\
\leq{}& \Bigg| \sum_{\bm{i}'=\bm{j}'+\1_{k}}^{\2_{k}} \Tr \left[ O_{\bm{i}',1,\bm{i}}^{\bm{j}',0,\bm{j}} \rho \right] \nonumber \\
&+{} \sum_{\bm{i}'=\bm{j}'+\1_{k}}^{\2_{k}} \left[ (k-\|\bm{j}'\|_1) - (\|\bm{i}'\|_1 -\|\bm{j}'\|_1 -k) (\cos2\theta)^{\|\bm{i}'\|_1 -\|\bm{j}'\|_1 -k} \right] \Tr \left[ O_{\bm{i}',0,\bm{i}}^{\bm{j}',0,\bm{j}} \rho \right] \nonumber \\
&+{} (1-\cos2\theta)(k+1-\|\bm{j}'\|_1) \sum_{\bm{i}'=\bm{j}'+\1_{k}}^{\2_{k}} \Tr \left[ O_{\bm{i}',2,\bm{i}}^{\bm{j}',0,\bm{j}} \rho \right] \nonumber \\
&+{} \cos2\theta \sum_{\bm{i}'=\bm{j}'+\1_{k}}^{\2_{k}} \left[ 1-(\cos2\theta)^{\|\bm{i}'\|_1-\|\bm{j}'\|_1-k} \right] \Tr \left[ O_{\bm{i}',2,\bm{i}}^{\bm{j}',0,\bm{j}} \rho \right] \nonumber \\
&-{} (1-\cos2\theta) \sum_{\bm{i}'=\bm{j}'+\1_{k}}^{\2_{k}} \left[ (k-\|\bm{j}'\|) -(\|\bm{i}'\|_1-\|\bm{j}'\|_1-k) (\cos2\theta)^{\|\bm{i}'\|_1-\|\bm{j}'\|_1-k} \right] \Tr \left[ O_{\bm{i}',2,\bm{i}}^{\bm{j}',0,\bm{j}} \rho \right] \Bigg| \nonumber \\
\leq{}& \left\| O_{\0_k,1,\bm{i}}^{\0_k,0,\bm{j}} \right\|_2 + g(k-\|\bm{j}'\|_1) \left\| O \right\|_2 + (1-\cos2\theta)(k+1-\|\bm{j}'\|_1) \left\| O_{\0_k,2,\bm{i}}^{\0_k,0,\bm{j}} \right\|_2 \nonumber \\
&+{} |\cos2\theta| \left[ (2-\cos2\theta)^{k-\|\bm{j}'\|_1}-1 \right] \left\| O_{\0_k,2,\bm{i}}^{\0_k,0,\bm{j}} \right\|_2 + (1-\cos2\theta) g(k-\|\bm{j}'\|_1) \left\| O \right\|_2 \label{gidqnn_lemma_grad_5_3} \\
\leq{}& \bigg[ (2-\cos2\theta) (k-\|\bm{j}'\|_1) \left[ (3-\cos2\theta)(2-\cos2\theta)^{k-\|\bm{j}'\|_1-1}-1 \right] \nonumber \\
&+{} (2-\cos2\theta)^{k-\|\bm{j}'\|_1} + (1-\cos2\theta)(k+1-\|\bm{j}'\|_1 ) \bigg] \left\| O \right\|_2 \label{gidqnn_lemma_grad_5_4} \\
\leq{}& g(k+1-\|\bm{j}'\|-j'_{k+1}) \left\| O \right\|_2. \label{gidqnn_lemma_grad_5_5}
\end{align}
Here, InEq.~(\ref{gidqnn_lemma_grad_5_3}) is obtained by using the $h'=k$ case of Eq.~(\ref{gidqnn_lemma_grad_4_5}). InEq.~(\ref{gidqnn_lemma_grad_5_4}) is obtained by using Eqs.~(\ref{gidqnn_lemma_rho_norm_o}) and (\ref{gidqnn_lemma_grad_g_x}). InEq.~(\ref{gidqnn_lemma_grad_5_5}) follows from Eq.~(\ref{gidqnn_lemma_grad_g_x}) and the condition $j'_{k+1}=0$. Since Eqs.~(\ref{gidqnn_lemma_grad_5_sp_2}) and (\ref{gidqnn_lemma_grad_5_5}) match the formulation of the $h'=k+1$ case of Eq.~(\ref{gidqnn_lemma_grad_4_5}), we have proved Eq.~(\ref{gidqnn_lemma_grad_4_5}) for general cases.

We proceed from Eq.~(\ref{gidqnn_lemma_grad_4_4}), which can be lower bounded by
\begin{align}
\geq{}& - h(h-1) \|O\|_2^2 \mathop{\E}_{\theta \sim \mathcal{N}(0,\gamma^2)}   \sum_{\substack{\bm{j} = \bm{0}_h \\ 2|(\|\bm{j}\|_1+1)}}^{\bm{1}_{h}} \left( 2\theta \right)^{\|\bm{j}\|_1+1} 2 (1+2\theta^2)^{2h-1-\|\bm{j}\|_1} \label{gidqnn_lemma_grad_6_1} \\
\geq{}& - 2h(h-1) \|O\|_2^2 \mathop{\E}_{\theta \sim \mathcal{N}(0,\gamma^2)} \sum_{t=1}^{\lfloor \frac{h+1}{2} \rfloor} \binom{h}{2t-1} (2\theta)^{2t}  \sum_{m=0}^{2h-2t} \binom{2h-2t}{m} (2\theta^2)^{m}
\label{gidqnn_lemma_grad_6_2} \\
={}& - 2h(h-1) \|O\|_2^2 \sum_{t=1}^{\lfloor \frac{h+1}{2} \rfloor} \binom{h}{2t-1}  \sum_{m=0}^{2h-2t} \binom{2h-2t}{m} 2^{2t+m} (2t+2m-1)!! \gamma^{2t+2m} \label{gidqnn_lemma_grad_6_3} \\
\geq{}& - \frac{59}{6} h^2(h-1) \gamma^2 \|O\|_2^2 \label{gidqnn_lemma_grad_6_4} .
\end{align}
Here, InEq.~(\ref{gidqnn_lemma_grad_6_1}) is obtained by using $1 \geq \cos2\theta \geq 1-2\theta^2$.
InEq.~(\ref{gidqnn_lemma_grad_6_2}) is obtained since the summation $\sum_{\bm{j} = \bm{0}_h}^{\bm{1}_h}$ contains $\binom{h}{2t-1}$ terms such that $\|\bm{j}\|_1=2t-1$, for all $t \in \{1,\cdots,\lfloor \frac{h+1}{2} \rfloor\}$. Eq.~(\ref{gidqnn_lemma_grad_6_3}) is derived by calculating expectation terms.
InEq.~(\ref{gidqnn_lemma_grad_6_4}) is obtained by bounding the summation terms, i.e.
\begin{align}
{}& \sum_{t=1}^{\lfloor \frac{h+1}{2} \rfloor} \binom{h}{2t-1}  \sum_{m=0}^{2h-2t} \binom{2h-2t}{m} 2^{2t+m} (2t+2m-1)!! \gamma^{2t+2m} 
 \nonumber \\
\leq{}& \sum_{t=1}^{\lfloor \frac{h+1}{2} \rfloor} \frac{h(h-1)^{2t-2}}{2^{t-1} (t-1)! (2t-1)!!}  \sum_{m=0}^{2h-2t} \frac{(2h-2t)^m}{m!} 2^{2t+m} \nonumber \\
& \qquad \cdot (2t-1)!! (2t+1)(2t+3)\cdots (2t+2m-1) \gamma^{2t+2m} \label{gidqnn_lemma_grad_7_1} \\
\leq{}& \sum_{t=1}^{\lfloor \frac{h+1}{2} \rfloor} {h(h-1)^{2t-2}} 2^{t+1} \gamma^{2t} \sum_{m=0}^{2h-2t} {(2h-2)^m} 2^{m}  (2h)^m \gamma^{2m} \label{gidqnn_lemma_grad_7_2} \\
\leq{}& 4h \gamma^2 \sum_{t=1}^{\lfloor \frac{h+1}{2} \rfloor} \left[ h(h-1)^2\gamma^2\right]^{t-1} \sum_{m=0}^{2h-2t} (2h^3 \gamma^2 )^m \label{gidqnn_lemma_grad_7_3} \\
\leq{}& 4h \gamma^2 \frac{16}{15} \times \frac{8}{7} \leq \frac{59}{12} h \gamma^2 . \label{gidqnn_lemma_grad_7_4}
\end{align}
Here, InEq~(\ref{gidqnn_lemma_grad_7_1}) follows from $t\geq 1$ and $(2t-1)!=2^t t! (2t-1)!!$. InEq.~(\ref{gidqnn_lemma_grad_7_2}) is obtained by using $t\geq 1$ and 
$$(2t+2k-1)(2t+2m-2k+1) \leq (m+2t)^2 \leq (2h)^2,\ \forall k \in \{1,\cdots,m\}.$$
InEq.~(\ref{gidqnn_lemma_grad_7_3}) is derived by using $8h(h-1) \leq 2h^3$ for the integer $h$. 
InEq.~(\ref{gidqnn_lemma_grad_7_4}) is obtained by calculating geometric sequences with the condition $\gamma^2 \leq \frac{1}{16h^3}$.

Finally, we prove Eq.~(\ref{gidqnn_lemma_grad_2_3}). The left side of Eq.~(\ref{gidqnn_lemma_grad_2_3}) could be lower bounded by
\begin{align}
={}& \mathop{\E}_{\theta \sim \mathcal{N}(0,\gamma^2)} \left( \sum_{\|\bm{j}'\|_1=1} \sum_{\bm{i}'=\bm{j}'+\bm{1}_h}^{\bm{2}_h}  \left( \cos 2\theta \right)^{\|\bm{i}'\|_1 -h-1} \Tr \left[ O_{\bm{i}'}^{\bm{j}'} \rho \right] \right)  \nonumber \\
{}&\qquad \quad \ \cdot \left( \sum_{\substack{\bm{j} = \bm{0}_h \\ \|\bm{j}\|_1 \geq 2, 2|(\|\bm{j}\|_1-1)}}^{\bm{1}_{h}} \sum_{\bm{i}=\bm{j}+\bm{1}_h}^{\bm{2}_h}  \|\bm{j}\|_1 \left( \cos 2\theta \right)^{\|\bm{i}\|_1 - \|\bm{j}\|_1-h+2} \left( \sin 2\theta \right)^{\|\bm{j}\|_1-1}  \Tr \left[ O_{\bm{i}}^{\bm{j}} \rho \right] \right) \label{gidqnn_lemma_grad_8_1} \\
\geq{}& - \mathop{\E}_{\theta \sim \mathcal{N}(0,\gamma^2)}  \sum_{\|\bm{j}'\|_1=1} (2-\cos2\theta)^{h-1} \|O\|_2   \sum_{\substack{\bm{j} = \bm{0}_h \\ \|\bm{j}\|_1 \geq 2, 2|(\|\bm{j}\|_1-1)}}^{\bm{1}_{h}}   \|\bm{j}\|_1 \left( \sin 2\theta \right)^{\|\bm{j}\|_1-1} \left( \cos 2\theta \right)^{2} \nonumber \\
{}&\qquad \left\{ \left| \sum_{\bm{i}=\bm{j}+\bm{1}_h}^{\bm{2}_h} \left[ 1- \left( \cos 2\theta \right)^{\|\bm{i}\|_1 - \|\bm{j}\|_1-h} \right] \Tr \left[ O_{\bm{i}}^{\bm{j}} \rho \right] \right| + \left| \sum_{\bm{i}=\bm{j}+\bm{1}_h}^{\bm{2}_h} \Tr \left[ O_{\bm{i}}^{\bm{j}} \rho \right] \right| \right\} \label{gidqnn_lemma_grad_8_2} \\
\geq{}& - \mathop{\E}_{\theta \sim \mathcal{N}(0,\gamma^2)}  h (2-\cos2\theta)^{h-1} \|O\|_2  \nonumber \\
&\qquad \cdot \sum_{\substack{\bm{j} = \bm{0}_h \\ \|\bm{j}\|_1 \geq 2, 2|(\|\bm{j}\|_1-1)}}^{\bm{1}_{h}}   \|\bm{j}\|_1 \left( \sin 2\theta \right)^{\|\bm{j}\|_1-1} \left( \cos 2\theta \right)^{2} \left( 2-\cos2\theta \right)^{h-\|\bm{j}\|_1}  \|O\|_2  \label{gidqnn_lemma_grad_8_3} \\
\geq{}& - \mathop{\E}_{\theta \sim \mathcal{N}(0,\gamma^2)}  h \|O\|_2^2   \sum_{\substack{\bm{j} = \bm{0}_h \\ \|\bm{j}\|_1 \geq 2, 2|(\|\bm{j}\|_1-1)}}^{\bm{1}_{h}}   \|\bm{j}\|_1 \left( 2\theta \right)^{\|\bm{j}\|_1-1} \left( 1+2\theta^2 \right)^{2h-1-\|\bm{j}\|_1} \label{gidqnn_lemma_grad_8_4} \\
\geq{}& - \mathop{\E}_{\theta \sim \mathcal{N}(0,\gamma^2)}  h \|O\|_2^2  \sum_{t=1}^{\lfloor \frac{h-1}{2} \rfloor} \binom{h}{2t+1} (2t+1) (2\theta)^{2t} \left( 1+2\theta^2 \right)^{2h-2-2t} \label{gidqnn_lemma_grad_8_5} .
\end{align}
Eq.~(\ref{gidqnn_lemma_grad_8_1}) is obtained by noticing that the expectation of $\sin^a 2\theta \cos^b 2\theta$ equals to zero, if $a$ is odd. InEq.~(\ref{gidqnn_lemma_grad_8_2}) follows from the derivation (\ref{gidqnn_lemma_grad_4_1}-\ref{gidqnn_lemma_grad_4_3}). InEq.~(\ref{gidqnn_lemma_grad_8_3}) is obtained by using the $h'=h$, $\|\bm{j}\|_1=1$ case of Eq.~(\ref{gidqnn_lemma_rho_4_1}). InEq.~(\ref{gidqnn_lemma_grad_8_4}) follows from $1\geq \cos2\theta \geq 1-2\theta^2$ and $(\sin2\theta)^2 \leq (2\theta)^2$.
InEq.~(\ref{gidqnn_lemma_grad_8_5}) is obtained since the summation $\sum_{\bm{j} = \bm{0}_h}^{\bm{1}_h}$ contains $\binom{h}{2t+1}$ terms such that $\|\bm{j}\|_1=2t+1$, for all $t \in \{1,\cdots,\lfloor \frac{h-1}{2} \rfloor\}$.
We further bound InEq.~(\ref{gidqnn_lemma_grad_8_5}) by
\begin{align}
={}& - \mathop{\E}_{\theta \sim \mathcal{N}(0,\gamma^2)}  h \|O\|_2^2  \sum_{t=1}^{\lfloor \frac{h-1}{2} \rfloor} \binom{h}{2t+1} (2t+1) (2\theta)^{2t} \sum_{m=0}^{2h-2-2t} \binom{2h-2-2t}{m} (2\theta^2)^{m}  \nonumber \\
\geq{}& - h \|O\|_2^2 \sum_{t=1}^{\lfloor \frac{h-1}{2} \rfloor} \binom{h}{2t+1} (2t+1) \sum_{m=0}^{2h-2-2t} \binom{2h-2-2t}{m}  2^{2t+m} (2t+2m-1)!! \gamma^{2t+2m} \label{gidqnn_lemma_grad_9_2} \\
\geq{}& - h \|O\|_2^2 \sum_{t=1}^{\lfloor \frac{h-1}{2} \rfloor} \frac{h(h-1)(h-2)(h-3)^{2t-2}}{(2t)!} 2^{2t} \gamma^{2t} \nonumber \\
& \qquad \cdot \sum_{m=0}^{2h-2-2t} (2h-2-2t)^{m}  2^{m} (2t+2m-1)!! \gamma^{2m} \label{gidqnn_lemma_grad_9_3} \\
={}& - h^2 (h-1)(h-2) \|O\|_2^2 \sum_{t=1}^{\lfloor \frac{h-1}{2} \rfloor} \frac{(h-3)^{2t-2}}{2^t t! (2t-1)!!} 2^{2t} \gamma^{2t} \nonumber \\
& \qquad \cdot \sum_{m=0}^{2h-2-2t} (2h-2-2t)^{m}  2^{m} (2t+2m-1)!! \gamma^{2m} \label{gidqnn_lemma_grad_9_4} \\
={}& - \frac{5}{2} h^2 (h-1)(h-2) \gamma^2 \|O\|_2^2 \label{gidqnn_lemma_grad_9_5}.
\end{align}
Here, InEq.~(\ref{gidqnn_lemma_grad_9_2}) is obtained by calculating expectation terms.
InEq.~(\ref{gidqnn_lemma_grad_9_3}) is derived by using $t\geq 1$. Eq.~(\ref{gidqnn_lemma_grad_9_4}) follows from $(2t)!=2^t t! (2t-1)!!$.
Eq.~(\ref{gidqnn_lemma_grad_9_5}) is obtained by bounding the summation terms, i.e.
\begin{align}
{}& \sum_{t=1}^{\lfloor \frac{h-1}{2} \rfloor} \frac{(h-3)^{2t-2}}{2^t t! (2t-1)!!} 2^{2t} \gamma^{2t} \sum_{m=0}^{2h-2-2t} (2h-2-2t)^{m}  2^{m} (2t+2m-1)!! \gamma^{2m} \nonumber \\
\leq{}& \sum_{t=1}^{\lfloor \frac{h-1}{2} \rfloor} \frac{(h-3)^{2t-2}}{2^{t-1}} 2^{t} \gamma^{2t} \sum_{m=0}^{2h-2-2t} (2h-4)^{m}  2^{m} (2h-2)^{m} \gamma^{2m} \label{gidqnn_lemma_grad_10_2} \\
\leq{}& 2\gamma^2 \sum_{t=1}^{\lfloor \frac{h-1}{2} \rfloor} \left[ (h-3)^{2} \gamma^2 \right]^{t-1} \sum_{m=0}^{2h-2-2t} \left( h^3 \gamma^2 \right)^{m} \label{gidqnn_lemma_grad_10_3} \\
\leq{}& 2\gamma^2 \left( \frac{16}{15} \right)^2 \leq \frac{5}{2} \gamma^2 . \label{gidqnn_lemma_grad_10_4}
\end{align}
Here, InEq.~(\ref{gidqnn_lemma_grad_10_2}) is obtained by using $t! \geq 2^{t-1}, \forall t\geq 1$ and 
$$(2t+2k-1)(2t+2m-2k+1) \leq (m+2t)^2 \leq (2h-2)^2,\ \forall k \in \{1,\cdots,m\}.$$
InEq.~(\ref{gidqnn_lemma_grad_10_3}) follows from $h^3 \geq 8(h-1)(h-2)$ for integer $h$.
InEq.~(\ref{gidqnn_lemma_grad_10_4}) is obtained by calculating geometric sequences with the condition $\gamma^2 \leq \frac{1}{16h^3}$.
Thus, we have proved Eq.~(\ref{gidqnn_lemma_grad_2_3}).

\end{proof}

\section{Proof of Theorem~4.1}
\label{gidqnn_app_bp_local}

\begin{proof}
Denote by $I_S:=\{m|i_m \neq 0, m \in [N]\}$ the set of qubits where the observable acts non-trivially. First, we notice that the norm of the whole gradient is lower bounded by that of particle derivatives summed over a part of parameters, i.e. 
\begin{align}
\mathop{\E}\limits_{\bm{\theta}} \|\nabla_{\bm{\theta}} f \|^2 {} & \geq 
\sum_{q=1}^{L} \sum_{n \in I_S}  \mathop{\E}\limits_{\bm{\theta}} \left(\frac{\partial f }{\partial \theta_{q,n}}\right)^2 . \label{tqnn_cost_gaussian_gradient_1}
\end{align}
Thus, we could obtain the formulation in the theorem 
if 
\begin{equation}\label{tqnn_cost_gaussian_gradient_eq}
\mathop{\E}\limits_{\bm{\theta}} \left( \frac{\partial f}{\partial \theta_{q,n} } \right)^2 \geq \frac{1}{S^{S+1} (L+2)^{S+1}} {\rm Tr} \left[ \sigma_{\bm{j}} \rho_{\rm in} \right]^2 ,
\end{equation}
holds for any $q \in \{1,\cdots,L\}$ and $n \in I_S$. 

Now we begin to prove Eq.~(\ref{tqnn_cost_gaussian_gradient_eq}).
Our main idea is to integrate the square of the partial derivative of $f$ with respect to $\bm{\theta}=(\bm{\theta}_1,\cdots,\bm{\theta}_{L+2})$ by using Lemma~\ref{gidqnn_lemma_rho_special} and Lemma~\ref{gidqnn_lemma_special_pm}.

We introduce several notations for convenience. Denote the variance $\gamma^2 = \frac{1}{4S(L+2)}$. Denote the $\ell$-th single qubit rotations and CZ layer as $R_{\ell}(\bm{\theta}_{\ell})$ and ${\rm CZ}_\ell$, respectively, where
\begin{align}
R_{\ell}(\bm{\theta}_{\ell}) &= e^{-i\theta_{\ell,1} G_{\ell,1}} \otimes e^{-i\theta_{\ell,2} G_{\ell,2}} \otimes \cdots \otimes e^{-i\theta_{\ell,N} G_{\ell,N}}, \label{tqnn_cost_1_1}
\end{align}
and $G_{\ell,j}$ is the Hamiltonian corresponding to the parameter $\theta_{\ell,j}$.
Denote by $\rho_{k}$ the state after the $k$-th layer, $\forall k \in \{0,1,\cdots,2L+2\}$,
\begin{equation}\label{tqnn_rho_k}
\rho_k := \left\{
\begin{aligned}
& \left( \prod_{i=\frac{k}{2}}^{1} {\rm CZ}_{i}  R_{i}(\bm{\theta}_{i}) \right) \rho_{\text{in}} \left( \prod_{i=1}^{\frac{k}{2}} R_{i}(\bm{\theta}_{i})^{\dag} {\rm CZ}_{i}^{\dag}  \right) &\ (k=2\ell \leq 2L), \\
& R_{\frac{k+1}{2}}(\bm{\theta}_{\frac{k+1}{2}}) \rho_{k-1} R_{\frac{k+1}{2}}(\bm{\theta}_{\frac{k+1}{2}})^{\dag} &\ (k=2\ell+1 \leq 2L+1), \\
& R_{L+2}(\bm{\theta}_{L+2}) \rho_{k-1} R_{L+2}(\bm{\theta}_{L+2})^{\dag} &\ (k=2L+2).
\end{aligned}
\right.
\end{equation}
Thus, $\rho_k$ is parameterized by $\{\bm{\theta}_1, \cdots, \bm{\theta}_p\}$, where $p=\ell$ if $k=2\ell\leq 2L$, $p=\ell+1$ if $k=2\ell+1 \leq 2L+1$, and $p=L+2$ if $k=2L+2$.

Next, rewrite the formulation of Eq.~(\ref{tqnn_cost_gaussian_gradient_eq}) in detail:
\begin{align}
&\ \quad \mathop{\E}\limits_{\bm{\theta}_1} \cdots \mathop{\E}\limits_{\bm{\theta}_{L+2}} \left( \frac{\partial }{\partial \theta_{q,n}} {\rm Tr} \left[ \sigma_{\bm{i}} V(\bm{\theta}) \rho_{\text{in}} V(\bm{\theta})^{\dag}\right] \right)^2 \nonumber \\
&= \mathop{\E}\limits_{\bm{\theta}_1} \cdots \mathop{\E}\limits_{\bm{\theta}_{L+2}} \left( \frac{\partial }{\partial \theta_{q,n}} {\rm Tr} \left[ \sigma_{\bm{i}} \rho_{2L+2} \right] \right)^2 \label{tqnn_cost_gaussian_gradient_1_2} \\
&\geq{} \left[ 4\gamma^2 (1-4\gamma^2) \right]^{S_1} \left( 1-4\gamma^2 \right)^{S_3} \mathop{\E}\limits_{\bm{\theta}_1} \cdots \mathop{\E}\limits_{\bm{\theta}_{L+1}} \left( \frac{\partial }{\partial \theta_{q,n}}  {\rm Tr} \left[ \sigma_{\bm{3|i;1}} \rho_{2L+1} \right] \right)^2 \label{tqnn_partial_derivative_1_3} \\
&\geq{} \left[ 4\gamma^2 (1-4\gamma^2) \right]^{S_1+S_2} \left( 1-4\gamma^2 \right)^{S_1 + 2S_3} \mathop{\E}\limits_{\bm{\theta}_1} \cdots \mathop{\E}\limits_{\bm{\theta}_{L}} \left( \frac{\partial }{\partial \theta_{q,n}}  {\rm Tr} \left[ \sigma_{\bm{3|i}} \rho_{2L} \right] \right)^2 \label{tqnn_partial_derivative_1_4} \\
&\geq{} \left[ 4\gamma^2 (1-4\gamma^2) \right]^{S} \left( 1-4\gamma^2 \right)^{S} \mathop{\E}\limits_{\bm{\theta}_1} \cdots \mathop{\E}\limits_{\bm{\theta}_{L}} \left( \frac{\partial }{\partial \theta_{q,n}} {\rm Tr} \left[ \sigma_{\bm{3|i}} \rho_{2L} \right] \right)^2 \label{tqnn_partial_derivative_1_5}, 
\end{align}
where $\bm{3|i}$ denotes the index by replacing non-zero elements of $\bm{i}=(i_1,\cdots,i_N)$ with $3$ and $\bm{3|i;1}$ denotes the index by replacing non-zero elements of $\bm{i}=(i_1,\cdots,i_N)$ with $3$ if the original value is $1$. 
We refer to $S_1$, $S_2$, and $S_3$ as the number of $1$, $2$, and $3$ in the index $\bm{i}$, respectively.
Eq.~(\ref{tqnn_cost_gaussian_gradient_1_2}) is obtained by using the notation $\rho_{2L+2}$ defined in (\ref{tqnn_rho_k}).
We obtain Eqs.~(\ref{tqnn_partial_derivative_1_3}) and (\ref{tqnn_partial_derivative_1_4}) by using Lemma~\ref{gidqnn_lemma_rho_special} for the $R_Y$ and $R_X$ gate case, respectively. InEq.~(\ref{tqnn_partial_derivative_1_5}) follows from $S=S_1+S_2+S_3$.

Then, we proceed from Eq.~(\ref{tqnn_partial_derivative_1_5}) and  take the expectation for parameters in $(\bm{\theta}_{L}, \cdots, \bm{\theta}_{q+1})$.
\begin{align}
\text{Eq.~(\ref{tqnn_partial_derivative_1_5})} &= \left[ 2\gamma (1-4\gamma^2) \right]^{2S} \mathop{\E}\limits_{\bm{\theta}_1} \cdots \mathop{\E}\limits_{\bm{\theta}_{L}} \left( \frac{\partial }{\partial \theta_{q,n}}  {\rm Tr} \left[ \sigma_{\bm{3|i}} {\rm CZ}_{L} R_{L} (\bm{\theta}_L) \rho_{2L-2} R_{L}(\bm{\theta}_L)^{\dag}  {\rm CZ}_{L}^{\dag} \right] \right)^2 \label{tqnn_partial_derivative_2_1} \\
&= \left[ 2\gamma (1-4\gamma^2) \right]^{2S} \mathop{\E}\limits_{\bm{\theta}_1} \cdots \mathop{\E}\limits_{\bm{\theta}_{L}} \left( \frac{\partial }{\partial \theta_{q,n}}  {\rm Tr} \left[ \sigma_{\bm{3|i}}  R_{L}(\bm{\theta}_L) \rho_{2L-2} R_{L}(\bm{\theta}_L)^{\dag} \right] \right)^2  \label{tqnn_partial_derivative_2_2} \\
&\geq \left[ 2\gamma (1-4\gamma^2) \right]^{2S} \left( 1-4\gamma^2 \right)^S \mathop{\E}\limits_{\bm{\theta}_1} \cdots \mathop{\E}\limits_{\bm{\theta}_{L-1}} \left( \frac{\partial }{\partial \theta_{q,n}}  {\rm Tr} \left[ \sigma_{\bm{3|i}} \rho_{2L-2} \right] \right)^2  \label{tqnn_partial_derivative_2_3} \\
&\geq \left[ 2\gamma (1-4\gamma^2) \right]^{2S} \left( 1-4\gamma^2 \right)^{(L-q)S}  \mathop{\E}\limits_{\bm{\theta}_1} \cdots \mathop{\E}\limits_{\bm{\theta}_{q}} \left( \frac{\partial }{\partial \theta_{q,n}}  {\rm Tr} \left[ \sigma_{\bm{3|i}} \rho_{2q} \right] \right)^2 \label{tqnn_partial_derivative_2_6}.
\end{align}
Eq.~(\ref{tqnn_partial_derivative_2_1}) follows from the definition of $\rho_{2L}$ (\ref{tqnn_rho_k}). 
Eq.~(\ref{tqnn_partial_derivative_2_2}) is obtained since
$$ {\rm CZ} (\sigma_j \otimes \sigma_k) {\rm CZ}^{\dag} = \sigma_j \otimes \sigma_k, \forall j,k \in \{0,3\}. $$
InEq.~(\ref{tqnn_partial_derivative_2_3}) is derived by using the Lemma~\ref{gidqnn_lemma_rho_special}.
We repeat the derivation in Eqs.~(\ref{tqnn_partial_derivative_2_1}-\ref{tqnn_partial_derivative_2_3}) inductively for parameters $(\bm{\theta}_{L}, \cdots, \bm{\theta}_{q+1})$, which yields InEq.~(\ref{tqnn_partial_derivative_2_6}).

Next, we consider the expectation with respect to $\bm{\theta}_q$. We have
\begin{align}
{}& \text{Eq.~(\ref{tqnn_partial_derivative_2_6})} = \left[ 2\gamma (1-4\gamma^2) \right]^{2S} \left( 1-4\gamma^2 \right)^{(L-q)S} \mathop{\E}\limits_{\bm{\theta}_1} \cdots \mathop{\E}\limits_{\bm{\theta}_{q}} \left( \frac{\partial }{\partial \theta_{q,n}} {\rm Tr} \left[ \sigma_{\bm{3|i}} \rho_{2q-1} \right] \right)^2 \nonumber \\
\geq{}& \left[ 2\gamma (1-4\gamma^2) \right]^{2S} \left( 1-4\gamma^2 \right)^{(L-q)S} \left( 1-4\gamma^2 \right)^{S-1} \left[ 4\gamma^2 (1-4\gamma^2) \right] 4 \mathop{\E}\limits_{\bm{\theta}_1} \cdots \mathop{\E}\limits_{\bm{\theta}_{q-1}} {\rm Tr} \big[ \sigma_{\bm{3|i}} \rho_{2q-2}  \big]^2  \label{tqnn_partial_derivative_3},
\end{align}
where expectations with respect to parameters $\{\theta_{q,j}\}_{j \in I_S, j \neq n}$ are calculated via Lemma~\ref{gidqnn_lemma_rho_special} and the expectation with respect to $\theta_{q,n}$ is calculated via Lemma~\ref{gidqnn_lemma_special_pm}.

Finally we proceed from Eq.~(\ref{tqnn_partial_derivative_3}) and take the expectation for parameters in $(\bm{\theta}_{q-1}, \cdots, \bm{\theta}_{1})$. We have
\begin{align}
\mathop{\E}\limits_{\bm{\theta}_1} \cdots \mathop{\E}\limits_{\bm{\theta}_{q-1}} {\rm Tr} \big[ \sigma_{\bm{3|i}} \rho_{2q-2}  \big]^2 &= \mathop{\E}\limits_{\bm{\theta}_1} \cdots \mathop{\E}\limits_{\bm{\theta}_{q-1}} {\rm Tr} \big[ \sigma_{\bm{3|i}} \CZ_{q-1} R_{q-1}(\bm{\theta}_{q-1}) \rho_{2q-4} R_{q-1}(\bm{\theta}_{q-1})^\dag \CZ_{q-1}^\dag \big]^2   \label{tqnn_partial_derivative_4_1} \\
&= \mathop{\E}\limits_{\bm{\theta}_1} \cdots \mathop{\E}\limits_{\bm{\theta}_{q-1}} {\rm Tr} \big[ \sigma_{\bm{3|i}} R_{q-1}(\bm{\theta}_{q-1}) \rho_{2q-4} R_{q-1}(\bm{\theta}_{q-1})^\dag \big]^2   \label{tqnn_partial_derivative_4_2} \\
&\geq  \left( 1-4\gamma^2 \right)^{S} \mathop{\E}\limits_{\bm{\theta}_1} \cdots \mathop{\E}\limits_{\bm{\theta}_{q-2}} {\rm Tr} \big[ \sigma_{\bm{3|i}} \rho_{2q-4} \big]^2   \label{tqnn_partial_derivative_4_3} \\
&\geq \left( 1-4\gamma^2 \right)^{(q-1)S} {\rm Tr} \big[ \sigma_{\bm{3|i}} \rho_{0} \big]^2 \label{tqnn_partial_derivative_4_4}.
\end{align}
Eq.~(\ref{tqnn_partial_derivative_4_1}) is derived by using the definition of $\rho_{2q-2}$. Eq.~(\ref{tqnn_partial_derivative_4_2}) is obtained since
$$ {\rm CZ} (\sigma_j \otimes \sigma_k) {\rm CZ}^{\dag} = \sigma_j \otimes \sigma_k, \forall j,k \in \{0,3\}. $$
InEq.~(\ref{tqnn_partial_derivative_4_3}) is derived by using Lemma~\ref{gidqnn_lemma_rho_special}.
We repeat the derivation in Eqs.~(\ref{tqnn_partial_derivative_4_1}-\ref{tqnn_partial_derivative_4_3}) inductively for parameters $(\bm{\theta}_{q-1}, \cdots, \bm{\theta}_{1})$, which yields InEq.~(\ref{tqnn_partial_derivative_4_4}).
Employing Eq.~(\ref{tqnn_partial_derivative_4_4}) to Eq.~(\ref{tqnn_partial_derivative_3}) yields 
\begin{align}
\text{Eq.~(\ref{tqnn_partial_derivative_2_6})} &\geq 4 \left( 4\gamma^2 \right)^{S+1} \left( 1-4\gamma^2 \right)^{S(L+2)} {\rm Tr} \big[ \sigma_{\bm{3|i}} \rho_{0}  \big]^2  \label{tqnn_partial_derivative_6_1} \\
&= 4 \left(\frac{1}{S(L+2)} \right)^{S+1} \left(1-\frac{1}{S(L+2)}\right)^{S(L+2)} \Tr \big[ \sigma_{\bm{3|i}} \rho_{0}  \big]^2  \label{tqnn_partial_derivative_6_2} \\
&\geq 4 \left( \frac{1}{S(L+2)} \right)^{S+1} \left( 1-\frac{1}{2} \right)^{2} {\rm Tr}\big[ \sigma_{\bm{3|i}} \rho_{0}\big]^2  \label{tqnn_partial_derivative_6_3} \\
&= \frac{1}{S^{S+1} (L+2)^{S+1}} {\rm Tr} \left[ \sigma_{\bm{3|i}} \rho_{0} \right]^2 .\label{tqnn_partial_derivative_6_4}
\end{align}
Eq.~(\ref{tqnn_partial_derivative_6_2}) is derived by using the condition $\gamma^2 = \frac{1}{4S(L+2)}$. 
Eq.~(\ref{tqnn_partial_derivative_6_3}) is obtained by noticing that function $g(x)=(1-\frac{1}{x})^x$ is monotonically increasing when $x \geq 2$.
Thus, we have proved Eq.~(\ref{tqnn_cost_gaussian_gradient_eq}).

\end{proof}

\section{Proof of Theorem~4.2}

\label{gidqnn_app_bp_global_related}

\begin{proof}

To begin with, we define several notations for convenience.
Denote by $\rho_j$ the state after the $j$-th parameterized operator, i.e. {
\begin{equation}\label{gidqnn_theorem_related_rho_j}
\rho_j (\theta_1,\cdots,\theta_j) = \left( \prod_{i=j}^{1} V_i(\theta_i) \right) \rho_{\rm in} \left( \prod_{i=1}^{j}  V_i(\theta_i)^{\dag} \right).
\end{equation}
Denote by $O_j$ the observable, i.e.
\begin{equation}\label{gidqnn_theorem_related_o_j}
O_j = V_j(0)^\dag \cdots V_L(0)^\dag O V_L(0) \cdots V_j(0),\ \forall j \in \{1, \cdots, L\}.
\end{equation} }

Now we begin to prove the Theorem. 
{First, we remark that $\forall j \in [L]$, the $a_j \neq 1$ case can be converted to the $a_j=1$ case by using the transformation
\begin{equation*}
\theta_j ' = \frac{\theta_j}{a_j},	
\end{equation*}
where the variance of the new and the old parameter satisfies
\begin{equation*}
{\rm Var} [\theta_j'] = \frac{1}{a_j^2} {\rm Var} [\theta_j].	
\end{equation*}
}

{In the following proof, we assume that $a_j=1$, $\forall j \in [L]$.}
By using the parameter-shift rule, $\frac{\partial f}{\partial \theta_{\ell}}$ could be written as the linear sum of $2h$ expectations on the observable $O$ with coefficients $\pm 1$. Then for the case $\ell \leq L-1$, we have
\begin{align}
&{} \mathop{\E}\limits_{\bm{\theta}} \left( \frac{\partial f}{\partial \theta_{\ell} } \right)^2 ={} \mathop{\E}\limits_{{\theta}_1} \cdots \mathop{\E}\limits_{{\theta}_L} \left( \frac{\partial }{\partial \theta_{\ell} } \Tr \left[ O V_L(\theta_L) \rho_{L-1} V_L(\theta_L)^\dag \right] \right)^2 \label{gidqnn_theorem_related_1_1} \\
\geq{}& \mathop{\E}\limits_{{\theta}_1} \cdots \mathop{\E}\limits_{{\theta}_{L-1}} \left( \frac{\partial }{\partial \theta_{\ell} } \Tr \left[ O V_L(0) \rho_{L-1} V_L(0)^\dag \right] \right)^2 - \left[ 12h_L(h_L-1)+4 \right] 4h_L^2 \gamma_L^2 \|O\|_2^2 \label{gidqnn_theorem_related_1_2} \\
={}& \mathop{\E}\limits_{{\theta}_1} \cdots \mathop{\E}\limits_{{\theta}_{L-1}} \left( \frac{\partial }{\partial \theta_{\ell} } \Tr \left[ O_L \rho_{L-1} \right] \right)^2 - \left[ 12h_L(h_L-1)+4 \right] 4h_L^2 \gamma_L^2 \|O\|_2^2 \label{gidqnn_theorem_related_1_3} \\
={}& \mathop{\E}\limits_{{\theta}_1} \cdots \mathop{\E}\limits_{{\theta}_{L-1}} \left( \frac{\partial f}{\partial \theta_{\ell} } (\theta_1,\cdots,\theta_{L-1}, 0) \right)^2 - \left[ 12h_L(h_L-1)+4 \right] 4h_L^2 \gamma_L^2 \|O\|_2^2, \label{gidqnn_theorem_related_1_4}
\end{align}
where Eq.~(\ref{gidqnn_theorem_related_1_1}) follows from the definition of $\rho_j$ in Eq.~(\ref{gidqnn_theorem_related_rho_j}). InEq.~(\ref{gidqnn_theorem_related_1_2}) is obtained by using Lemma~\ref{gidqnn_lemma_rho}, where $\|\bm{c}\|_1=2h$. Eq.~(\ref{gidqnn_theorem_related_1_3}) follows from the definition of $O_j$ in Eq.~(\ref{gidqnn_theorem_related_o_j}).
Eq.~(\ref{gidqnn_theorem_related_1_4}) follows from the formulation $f(\bm{\theta})=\Tr[O\rho(\bm{\theta})]$.
By proceeding the derivation (\ref{gidqnn_theorem_related_1_1}-\ref{gidqnn_theorem_related_1_4}) for $L-\ell$ times, we have
\begin{align}
&{} \mathop{\E}\limits_{\bm{\theta}} \left( \frac{\partial f}{\partial \theta_{\ell} } \right)^2 \geq{} \mathop{\E}\limits_{{\theta}_1} \cdots \mathop{\E}\limits_{{\theta}_{\ell}} \left( \frac{\partial f}{\partial \theta_{\ell} } (\theta_1,\cdots,\theta_\ell,0,\cdots,0) \right)^2 - \sum_{j=\ell+1}^{L} \left[ 12h_j(h_j-1)+4 \right] 4h_j^2  \gamma_j^2 \|O\|_2^2 \nonumber \\
={}& \mathop{\E}\limits_{{\theta}_1} \cdots \mathop{\E}\limits_{{\theta}_{\ell}} \left( \frac{\partial }{\partial \theta_{\ell} } \Tr \left[ O_{\ell+1} V_{\ell} (\theta_\ell) \rho_{\ell-1} V_{\ell} (\theta_\ell)^\dag \right] \right)^2 - \sum_{j=\ell+1}^{L} \left[ 12h_j(h_j-1)+4 \right] 4h_j^2 \gamma_j^2 \|O\|_2^2 \label{gidqnn_theorem_related_2_2} \\
\geq{}& \mathop{\E}\limits_{{\theta}_1} \cdots \mathop{\E}\limits_{{\theta}_{\ell-1}} (1-4\gamma_\ell^2) \left( \frac{\partial }{\partial \theta_{\ell} } \Tr \left[ O_{\ell+1} V_{\ell} (\theta_\ell) \rho_{\ell-1} V_{\ell} (\theta_\ell)^\dag \right] \right)^2 \bigg|_{\theta_\ell=0} -96 h_\ell^2 (h_\ell-1) \gamma_\ell^2 \|O\|_2^2  \nonumber \\
&{} -20h_\ell^2(h_\ell-1)(h_\ell-2) \gamma_\ell^2 \|O\|_2^2 - \sum_{j=\ell+1}^{L} \left[ 12h_j(h_j-1)+4 \right] 4h_j^2  \gamma_j^2 \|O\|_2^2 \label{gidqnn_theorem_related_2_3} \\
\geq{}& \mathop{\E}\limits_{{\theta}_1} \cdots \mathop{\E}\limits_{{\theta}_{\ell-1}}  \left( \frac{\partial f}{\partial \theta_{\ell} } (\theta_1,\cdots,\theta_{\ell-1}, 0,0,\cdots,0) \right)^2 - 4\gamma_\ell^2 (2h_\ell)^2 \|O\|_2^2 -96 h_\ell^2 (h_\ell-1) \gamma_\ell^2 \|O\|_2^2 \nonumber \\
&{} -20h_\ell^2(h_\ell-1)(h_\ell-2) \gamma_\ell^2 \|O\|_2^2 - \sum_{j=\ell+1}^{L} \left[ 12h_j(h_j-1)+4 \right] 4h_j^2  \gamma_j^2 \|O\|_2^2 \label{gidqnn_theorem_related_2_4},
\end{align}
where Eq.~(\ref{gidqnn_theorem_related_2_2}) follows from definitions $\rho_j$ (\ref{gidqnn_theorem_related_rho_j}) and $O_j$ (\ref{gidqnn_theorem_related_o_j}). InEq.~(\ref{gidqnn_theorem_related_2_3}) is derived by using Lemma~\ref{gidqnn_lemma_grad}. InEq.~(\ref{gidqnn_theorem_related_2_4}) follows from the parameter-shift rule.
We proceed from InEq.~(\ref{gidqnn_theorem_related_2_4}) by employing the derivation (\ref{gidqnn_theorem_related_1_1}-\ref{gidqnn_theorem_related_1_4}) for parameters $(\theta_{\ell-1},\cdots,\theta_1)$, which yields
\begin{align}
&{} \mathop{\E}\limits_{\bm{\theta}} \left( \frac{\partial f}{\partial \theta_{\ell} } \right)^2 \nonumber \\
\geq{}& \left( \frac{\partial f}{\partial \theta_{\ell} } \right)^2 \bigg|_{\bm{0}} - \sum_{j=1}^{\ell-1} 16 h_j^2 \left[ 3h_j(h_j-1)+1 \right] \gamma_j^2 \|O\|_2^2 - \sum_{j=\ell+1}^{L} 16 h_j^2 \left[ 3h_j(h_j-1)+1 \right] \gamma_j^2 \|O\|_2^2 \nonumber \\
&{} - 16 h_\ell^2 \gamma_\ell^2 \|O\|_2^2 - 96 h_\ell^2 (h_\ell-1) \gamma_\ell^2 \|O\|_2^2 -20h_\ell^2(h_\ell-1)(h_\ell-2) \gamma_\ell^2 \|O\|_2^2 \label{gidqnn_theorem_related_3_1} \\
\geq{}& \left( \frac{\partial f}{\partial \theta_{\ell} } \right)^2 \bigg|_{\bm{0}} - \sum_{j=1}^{L} 16 h_j^2 \left[ 3h_j(h_j-1)+1 \right] \gamma_j^2 \|O\|_2^2 \nonumber \\
\geq{}& (1-\epsilon) \left( \frac{\partial f}{\partial \theta_{\ell} } \right)^2 \bigg|_{\bm{0}}\label{gidqnn_theorem_related_3_3} .
\end{align}
InEq.~(\ref{gidqnn_theorem_related_3_1}) is obtained by using Lemma~\ref{gidqnn_lemma_rho}, where $\|\bm{c}\|_1=2h$. InEq.~(\ref{gidqnn_theorem_related_3_3}) follows from the condition $\gamma_j^2 \leq \frac{a_j^2 \epsilon}{16 h_j^2 (3h_j(h_j-1)+1) L \|O\|_2^2} \left.\left( \frac{\partial f}{\partial \theta_{\ell}} \right)^2 \right|_{\bm{\theta}=\bm{0}}$ and $a_j=1$, $\forall j \in [L]$.
Thus, we have proved the theorem.
\end{proof}

\end{document}